\newtheorem*{rep@theorem}{\rep@title}
\newcommand{\newreptheorem}[2]{
\newenvironment{rep#1}[1]{
 \def\rep@title{#2 \ref{##1}}
 \begin{rep@theorem}\itshape}
 {\end{rep@theorem}}}
\newcommand{\sort}{\mathsf{Sort}}
\newcommand{\HI}{\mathsf{HI}}
\newcommand{\lnote}[1]{\footnote{\color{blue}Li-Yang : {#1}}}
\newcommand{\jnote}[1]{\footnote{\color{blue}John : {#1}}}
\renewcommand{\lnote}[1]{}
\renewcommand{\jnote}[1]{}
\newcommand{\orange}[1]{{\color{orange} #1}}
\renewcommand{\orange}[1]{#1}
\newcommand{\red}[1]{{\color{red} #1}}
\renewcommand{\red}[1]{{#1}}
\newcommand{\pCmin}{\mathsf{C}^{\oplus}_{\min}}
\newcommand{\Cmin}{\mathsf{C}_{\min}}
\newcommand{\Cert}{\mathsf{C}}
\newcommand{\pCert}{\Cert^{\oplus}} 
\newcommand{\DT}{\mathsf{DT}}
\newcommand{\PDT}{\mathsf{DT^{\oplus}}}
\newcommand{\codim}{\mathrm{codim}}
\newcommand{\sparsity}{\mathsf{sparsity}}
\newcommand{\zero}{\mathbf{0}}
\begin{document}

\title{A composition theorem for parity kill number}
\author{Ryan O'Donnell \\ Carnegie Mellon University \and Xiaorui Sun \\ Columbia University \and Li-Yang Tan \\ Columbia University \and John Wright \\ Carnegie Mellon University \and Yu Zhao \\ Carnegie Mellon University}
\maketitle

\begin{abstract}
In this work, we study the parity complexity measures $\pCmin[f]$ and $\PDT[f]$.  $\pCmin[f]$ is the \emph{parity kill number} of $f$, the fewest number of parities on the input variables one has to fix in order to ``kill" $f$, i.e.\ to make it constant.  $\PDT[f]$ is the depth of the shortest \emph{parity decision tree} which computes $f$.  These complexity measures have in recent years become increasingly important in the fields of communication complexity~\cite{ZS09, MO09, ZS10, TWXZ13} and pseudorandomness~\cite{BK12, Sha11, CT13}.

Our main result is a composition theorem for $\pCmin$.  The $k$-th power of $f$, denoted $f^{\circ k}$, is the function which results from composing $f$ with itself $k$ times.  We prove that if $f$ is not a parity function, then
\begin{equation*}
\pCmin[f^{\circ k}] \geq \Omega(\Cmin[f]^{k}).
\end{equation*}
In other words, the parity kill number of $f$ is essentially supermultiplicative in the \emph{normal} kill number of $f$ (also known as the minimum certificate complexity).

As an application of our composition theorem, we show lower bounds on the parity complexity measures of $\sort^{\circ k}$ and $\HI^{\circ k}$.  Here $\sort$ is the sort function due to Ambainis~\cite{Amb06}, and $\HI$ is Kushilevitz's hemi-icosahedron function~\cite{NW95}.  In doing so, we disprove a conjecture of Montanaro and Osborne~\cite{MO09} which had applications to communication complexity and computational learning theory.  In addition, we give new lower bounds for conjectures of~\cite{MO09,ZS10} and \cite{TWXZ13}.
\end{abstract}
\thispagestyle{empty}

\newpage 
\setcounter{page}{1}
\section{Introduction}

Recent work on the Log-Rank Conjecture has shown the importance of two related Boolean function complexity measures: sparsity and parity decision tree (PDT) depth.  The sparsity of a Boolean function, denoted $\sparsity[\hatf]$, is the number of nonzero coefficients in its Fourier transform.  A parity decision tree is a decision tree in which the nodes are allowed to query arbitrary parities of the input variables.  The PDT depth of a Boolean function, denoted $\PDT[f]$, is the depth of the shortest PDT which computes $f$.  These two quantities were linked in the papers of~\cite{MO09} and~\cite{ZS10}, both of which posed the following question:
\begin{center}
\emph{Given a sparse Boolean function, must it have a short parity decision tree?}
\end{center}
As a lower bound, any PDT computing $f$ must have depth at least $\frac{1}{2}\log(\sparsity[\hatf])$, and~\cite{MO09, ZS10} conjectured that there exists a PDT which is only polynomially worse---depth $\log(\sparsity[\hatf])^k$ for some absolute constant $k$.  Settling this question in the affirmative would prove the Log-Rank Conjecture for an important class of functions known as XOR functions (introduced in~\cite{ZS09}).  Unfortunately, at present we are very far from deciding this question.  The best known upper-bound is $\PDT[f] \leq O\left(\sqrt{\sparsity[\hatf]}\cdot \log(\sparsity[\hatf])\right)$ by~\cite{TWXZ13} (see also~\cite{STV14, Lov13}), only a square root better than the trivial $\PDT[f] \leq \sparsity[\hatf]$ bound.

A quantity intimately related to $\PDT[f]$ is the \emph{parity kill number} of a Boolean function $f$, denoted $\pCmin[f]$ (for reasons we will soon explain).  This is the fewest number of parities on the input variables one has to fix in order to ``kill" $f$, i.e.\ to make it constant.  There are several equivalent ways to reformulate this definition.  Perhaps the most familiar is in terms of \emph{parity certificate complexity}, a generalization of the ``normal'' certificate complexity measure.  Given an input $x \in \mathbb{F}_2^n$, the certificate complexity of $f$ on $x$ is the minimum number of bits $x_i$ one has to read to be certain of the value of $f(x)$.  Formally,
\begin{equation*}
\Cert[f, x] := \min\{\codim(C): C \ni x,\text{ $C$ is a subcube on which $f$ is constant}\}.
\end{equation*}
We define the minimum certificate complexity of $f$ to be $\Cmin[f] := \min_x\{\Cert[f, x]\}$.  This is the minimum number of input bits one has to fix to force $f$ to be a constant.  The parity certificate complexity of $f$ on $x$ is defined analogously, as follows:
\begin{equation*}
\pCert[f, x] := \min\{\codim(H): H \ni x\text{, $H$ is an affine subspace on which $f$ is constant}\},
\end{equation*}
and therefore $\pCmin[f] = \min_x\{\pCert[f, x]\}$.  We note here that $\Cmin[f] \geq \pCmin[f]$ always.

Given a parity decision tree $T$ for $f$, the parities that $T$ reads on input $x \in \mathbb{F}_2^n$ form a parity certificate for $x$.  As a result, $\pCmin[f]$ lower-bounds the length of any root-to-leaf path in any parity decision tree for $f$.  In particular, $\PDT[f] \geq \pCmin[f]$.  Thus, to lower-bound $\PDT[f]$, it suffices to lower-bound $\pCmin[f]$.  Remarkably, the reverse is true as well: a recent result by Tsang et al.~\cite{TWXZ13} has shown that to \emph{upper}-bound $\PDT[f]$, it suffices to \emph{upper}-bound $\pCmin[f]$\footnote{A similar argument of translating a best-case bound into a worst-case bound was recently used by Lovett in~\cite{Lov13} to show a new upper-bound for the Log-Rank Conjecture.  He showed that any total Boolean function with rank $r$ has a communication protocol of complexity $O(\sqrt{r}\cdot \log(r))$}.  More formally, they showed:
\begin{theorem}\label{thm:hongkong}
Suppose that $\pCmin[f] \leq M[f]$ for all Boolean functions $f$, where $M[f]$ is some downward non-increasing complexity measure. Then $\PDT[f] \leq M[f] \cdot \log(\sparsity[\hatf])$ for all $f$.
\end{theorem}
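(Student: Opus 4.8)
The plan is to prove, by induction on $\sparsity[\hatf]$, the bound $\PDT[f]\le M[f]\cdot\lceil\log_2\sparsity[\hatf]\rceil$; this implies the stated inequality, and it is convenient to carry the ceiling explicitly so that the induction closes exactly. At the root of the parity decision tree I would query a minimum parity certificate of $f$: since $\pCmin[f]\le M[f]$, there is an affine subspace $\mathcal A$ of codimension $c:=\pCmin[f]\le M[f]$ on which $f$ is constant; writing $\mathcal A$ as the common zero set of affine forms $\langle\ell_1,x\rangle\oplus\beta_1,\dots,\langle\ell_c,x\rangle\oplus\beta_c$ with $\ell_1,\dots,\ell_c$ linearly independent, I query the $c$ parities $\langle\ell_1,x\rangle,\dots,\langle\ell_c,x\rangle$. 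This partitions $\mathbb F_2^n$ into the $2^c$ cosets of $\{x:\langle\ell_i,x\rangle=0\text{ for all }i\}$, one of which is $\mathcal A$, and on each coset I recursively hang a parity decision tree for the restriction of $f$ to that coset. Since $M$ does not increase under restriction to an affine subspace, the inductive hypothesis bounds the depth added below each branch by $M[f]\cdot\lceil\log_2(\text{sparsity of the restriction})\rceil$.

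The crux is a Fourier-support estimate: \emph{every one of these $2^c$ restrictions has Fourier sparsity at most $\lceil\sparsity[\hatf]/2\rceil$}. After an invertible affine change of variables — which only permutes and re-signs Fourier coefficients and hence preserves every sparsity in sight — I may assume the queried parities are the coordinates $x_1,\dots,x_c$ and $\mathcal A=\{x:x_1=\dots=x_c=0\}$. Writing $x=(z,y)\in\mathbb F_2^c\times\mathbb F_2^{n-c}$ and expanding in the $z$-coordinates, $f(z,y)=\sum_{S\subseteq[c]}\chi_S(z)\,f_S(y)$, so that $\sparsity[\hatf]=\sum_T p_T$ where $p_T:=\#\{S:\widehat{f_S}(T)\ne 0\}$, while the $z$-branch computes $y\mapsto\sum_S\chi_S(z)f_S(y)$, whose sparsity is $\#\{T:\sum_S\chi_S(z)\widehat{f_S}(T)\ne 0\}$. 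Constancy of $f$ on $\mathcal A$ forces $\sum_S\widehat{f_S}(T)=0$ for every $T\ne\emptyset$, which rules out $p_T=1$; hence in any branch the Fourier levels $T\ne\emptyset$ that survive all have $p_T\ge 2$, and there are at most $\frac{1}{2}\sum_T p_T=\frac{1}{2}\sparsity[\hatf]$ of those, so the restriction's sparsity is at most $\lceil\sparsity[\hatf]/2\rceil$ after accounting for the $\emptyset$-level and rounding.

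Assembling the pieces, the tree has depth at most $c$ (the root queries) plus $M[f]\cdot\lceil\log_2\lceil\sparsity[\hatf]/2\rceil\rceil=M[f]\bigl(\lceil\log_2\sparsity[\hatf]\rceil-1\bigr)$, for a total of at most $c+M[f]\bigl(\lceil\log_2\sparsity[\hatf]\rceil-1\bigr)\le M[f]\lceil\log_2\sparsity[\hatf]\rceil$ using $c\le M[f]$; the base case $\sparsity[\hatf]=1$, where $f$ is constant, is immediate. I expect the Fourier-support estimate to be the main obstacle: the subtlety is to obtain the factor-of-two saving uniformly over all $2^c$ branches — not merely over the one branch where $f$ is already constant — and to track the off-by-one carefully, which is the reason for carrying the ceiling (equivalently, proving the bound with $\log(2\sparsity[\hatf])$) throughout the induction.
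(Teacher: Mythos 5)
The paper itself does not prove Theorem~\ref{thm:hongkong}; it quotes it from~\cite{TWXZ13}. Your argument is essentially the standard proof of that result: build the tree greedily by querying the $c=\pCmin[f]\le M[f]$ parities of a minimum parity certificate, show that every one of the $2^c$ branches has Fourier sparsity at most (roughly) half of $\sparsity[\hatf]$, and recurse, so that each halving costs at most $M[f]$ queries. Your key folding computation is correct: after the affine change of variables (which indeed only permutes and re-signs Fourier coefficients), $\widehat{f_z}(T)=\sum_{S}\hatf(S\cup T)\chi_S(z)$, constancy on the branch $z=\zero$ forces $\sum_S\hatf(S\cup T)=0$ for every $T\ne\emptyset$, hence every surviving nonempty level has $p_T\ge 2$; this gives the $\lceil s/2\rceil$ bound uniformly over all $2^c$ branches, and the bookkeeping $c+M[f]\bigl(\lceil\log_2 s\rceil-1\bigr)\le M[f]\lceil\log_2 s\rceil$, together with $M[f_z]\le M[f]$ from downward non-increase, is right.

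One slip: the base case. $\sparsity[\hatf]=1$ does not mean $f$ is constant; it means $f=\pm\chi_\alpha$, and for a nonconstant parity the claimed bound reads $\PDT[f]\le 0$ while $\PDT[f]=1$. (This is partly an artifact of the statement as quoted here; the version in~\cite{TWXZ13} carries the appropriate slack.) The fix is routine: if a branch turns out to be $\pm\chi_\alpha$ with $\alpha\ne\zero$, spend one extra query there. Since a Boolean function that is not $\pm$ a single character has sparsity at least $4$, the parent satisfies $\lceil\log_2 s\rceil\ge 2$, and the extra query is absorbed via $c+1\le 2M[f]\le M[f]\lceil\log_2 s\rceil$ (using $M[f]\ge\pCmin[f]\ge 1$ for nonconstant $f$). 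With that patch your induction closes for every $f$ that is not itself a (negated) parity, which is all the stated inequality can promise anyway.
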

\noindent Here by downward non-increasing we mean that $M[f'] \leq M[f]$ whenever $f'$ can be derived from $f$ by fixing some parities on the input variables.  Theorem~\ref{thm:hongkong} implies that to prove the conjecture of~\cite{MO09, ZS10}, it suffices to show a bound of the form $\pCmin[f] \leq \log(\sparsity[\hatf])^k$, for some absolute constant $k$.  This motivates studying the properties of $\pCmin[f]$.

Another area in which parity kill number features prominently is pseudorandomness. A common scenario in this area deals with randomness extraction, in which one has access to a source that outputs mildly random bits, and the goal is to extract from these bits a set of truly random bits. A variety of tools have been developed to accomplish this goal in different settings, one of which is the \emph{affine disperser}.  An affine disperser  of dimension $d$ is simply a function $f:\F_2^n \rightarrow \F_2$ with $\pCmin[f] \geq n-d-1$.  Generally, one hopes to design dispersers with low dimension or, equivalently, a high parity kill number.  An affine disperser $f$ is ``pseudorandom'' in the sense that given inputs from a source which is supported on some large enough affine subspace $H$, $f$ will always be non-constant.  Affine dispersers have been constructed with sublinear dimension~\cite{BK12}, and the state of the art is a disperser with dimension $n^{o(1)}$~\cite{Sha11}.  The study of affine dispersers has gone hand-in-hand with studying the parity kill number of $\F_2$-polynomials; see~\cite{CT13} for an example.

\orange{Let $\DT[f]$ denote  the depth of the shortest decision tree computing $f$.}
As $\DT[f]$
%\lnote{I think this is the first occurrence of $\DT$, should we inline a quick definition here?}  
is such a simple and well-understood complexity measure, one might hope to carry over intuition, and, when possible, even results, about $\DT[f]$ to the case of $\PDT[f]$. In some cases, this hope has borne fruit: an example is the following theorem from~\cite{BTW13}, which until recently was only known to hold for decision trees.
\begin{theorem}
\label{btw} 
Let $f$ be a Boolean function.  Then $\sumi \hatf(i) \le O(\PDT[f]^{1/2})$.
\end{theorem}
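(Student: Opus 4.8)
The plan is to observe that $\sum_i \hatf(i) = \langle f, \ell\rangle$, where $\ell(x) = \sum_{i=1}^n x_i$ (equivalently $\ell = \sum_i \chi_{\{i\}}$, viewing $f : \mathbb{F}_2^n \to \{-1,1\}$), and then to bring in a parity decision tree for $f$ of minimum depth $d := \PDT[f]$. Running this tree on a uniformly random input and conditioning on the leaf $\lambda$ reached --- with $H_\lambda$ the associated affine subspace and $f \equiv b_\lambda \in \{-1,1\}$ on $H_\lambda$ --- we get $\langle f, \ell\rangle = \mathbb{E}_\lambda\!\left[b_\lambda\, T_\lambda\right]$, where $T_\lambda := \mathbb{E}_{x \in H_\lambda}[\ell(x)]$ and $\lambda$ is distributed as the leaf. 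Since $b_\lambda^2 = 1$, Cauchy--Schwarz reduces the whole theorem to proving $\mathbb{E}_\lambda[T_\lambda^2] \le O(d)$.

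For this I would use the Pythagorean identity for the orthogonal decomposition of $\ell$ into its projection onto the space of functions that are constant on each leaf-subspace $H_\lambda$ and the orthogonal remainder: this gives $n = \mathbb{E}_x[\ell(x)^2] = \mathbb{E}_\lambda[T_\lambda^2] + \mathbb{E}_\lambda\!\left[\mathrm{Var}(\ell \mid H_\lambda)\right]$, so it suffices to show the lower bound $\mathrm{Var}(\ell \mid H_\lambda) \ge n - O(d)$ for each leaf $\lambda$. Let $V_\lambda \subseteq \mathbb{F}_2^n$ be the span of the parities queried along the path to $\lambda$, so $\dim V_\lambda \le d$; call a coordinate $i$ \emph{resolved} if $e_i \in V_\lambda$, and call distinct unresolved coordinates $i, j$ \emph{coupled} if $e_i + e_j \in V_\lambda$. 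On $H_\lambda$ a resolved coordinate is a constant, two distinct unresolved coordinates are uncorrelated unless coupled, and coupled coordinates are $\pm$ copies of each other; grouping the unresolved coordinates into coupling classes, choosing a representative in each, and expanding the variance, one finds that $\mathrm{Var}(\ell \mid H_\lambda)$ equals the number of unresolved coordinates lying in singleton classes plus a nonnegative contribution from the larger classes. Hence $\mathrm{Var}(\ell \mid H_\lambda) \ge n - r_\lambda - c_\lambda$, where $r_\lambda$ is the number of resolved coordinates and $c_\lambda$ is the number of unresolved coordinates coupled to something.

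The remaining step --- and the one I expect to be the crux --- is the combinatorial bound $r_\lambda + \tfrac{1}{2} c_\lambda \le \dim V_\lambda \le d$. I would prove it by exhibiting that many linearly independent vectors inside $V_\lambda$: the unit vectors $e_i$ over the resolved coordinates $i$, together with, for each coupling class $B$ of size $k \ge 2$ and a fixed $i_0 \in B$, the $k-1$ vectors $e_i + e_{i_0}$ for $i \in B \setminus \{i_0\}$. All of these lie in $V_\lambda$ by definition; within each class they are independent by an elementary $\mathbb{F}_2$ computation; and the whole collection is independent because the vectors attached to distinct coupling classes (and to the resolved coordinates) have pairwise disjoint supports. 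Combining the three estimates gives $\mathbb{E}_\lambda[T_\lambda^2] \le 2d$ and therefore $\sum_i \hatf(i) \le \sqrt{2\,\PDT[f]}$. The essential difficulty, and the place where parity decision trees genuinely differ from ordinary ones, is exactly this coupling phenomenon: for an ordinary decision tree each query resolves one fresh coordinate, no two coordinates are ever coupled, $\mathrm{Var}(\ell \mid H_\lambda)$ equals $n$ minus the number of fixed coordinates on the nose, and the classical bound $\sum_i \hatf(i) \le \sqrt{\DT[f]}$ falls out immediately; a single $\mathbb{F}_2$-linear query, by contrast, can tie together many coordinates at once, and the content of the proof is that this still costs only a constant factor in the depth.
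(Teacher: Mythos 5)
Your proposal is correct: the reduction via Cauchy--Schwarz to bounding $\mathbb{E}_\lambda[T_\lambda^2]$, the fact that $\mathbb{E}_{x\in H_\lambda}[\chi_\alpha(x)]=0$ for $\alpha$ outside the span $V_\lambda$ of the queried parities (which gives the resolved/coupled trichotomy), the coupling-class variance bound $\mathrm{Var}(\ell\mid H_\lambda)\ge n-r_\lambda-c_\lambda$, and the dimension count $r_\lambda+\tfrac12 c_\lambda\le\dim V_\lambda\le d$ all check out, yielding $\sum_i \hat{f}(i)\le\sqrt{2\,\PDT[f]}$. Note, however, that this paper does not prove Theorem~\ref{btw} at all --- it imports the statement from~\cite{BTW13} --- so there is no in-paper proof to compare against; your argument is the natural generalization of the standard decision-tree proof of $\sum_i\hat{f}(i)\le O(\DT[f]^{1/2})$, with the coupling analysis supplying exactly the extra ingredient that parity queries require.
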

\noindent Another example is the OSSS inequality for decision trees~\cite{OSSS05}, which can also be shown to hold for parity decision trees by a straightforward adaptation of the proof of~\cite{JZ11}.  However, these few instances of similarity appear to be the deceptive minority rather than the majority.  On the whole, parity decision trees seem to have a much richer and more counterintuitive structure than normal decision trees, and many questions which are trivial for decision trees become interesting for parity decision trees.

\subsection{Boolean function powering}

One of the most basic operations one can perform on two Boolean functions $f:\mathbb{F}_2^n \rightarrow \mathbb{F}_2$ and $g:\mathbb{F}_2^m \rightarrow \mathbb{F}_2$, is to \emph{compose} them, producing the new function $f \circ g:\mathbb{F}_2^{m \cdot n}\rightarrow \mathbb{F}_2$.  On input $y = (y^{(1)}, \cdots, y^{(n)}) \in \left(\mathbb{F}_2^m\right)^n$,
\begin{equation*}
(f\circ g)(y) := f(g(y^{(1)}), \cdots, g(y^{(n)})).
\end{equation*}
Using this, we can construct the $k$-th \emph{power} $f^{\circ k}$ of a Boolean function recursively: $f^{\circ 1} := f$, and $f^{\circ k} := f \circ f^{\circ k-1}$.  Boolean function powering is a simple tool for generating families of Boolean functions, and it is especially useful in proving lower bounds.  It has found application in a variety of areas, from communication complexity~\cite{NW95} and Boolean function analysis~\cite{OT13} to computational learning theory~\cite{Tal13} and quantum query complexity~\cite{HLS07}.  For a comprehensive introduction to the subject of Boolean function composition and powering, see~\cite{Tal13}.

Decision tree depth is multiplicative with respect to Boolean function powering: $\DT[f^{\circ k}] = \DT[f]^k$. In addition, $\Cmin$ is supermultiplicative with respect to Boolean function powering: $\Cmin[f^{\circ k}] \geq \Cmin[f]^k$ (for simple proofs of these facts, see~\cite{Tal13}).  How might $\PDT$ and $\pCmin$ behave under powering?

Given an arbitrary Boolean function $f$, consider $f^{\circ 2} = f \circ f$.  Let us try to construct a small parity certificate for $(f \circ f)(y)$, i.e.\ a way to fix a small number of parities on the variables in $y$ to make $f \circ f$ constant.  To begin, consider a minimum (non-parity) certificate for $f(x_1, \ldots, x_n)$.  This certificate consists of a set of coordinates $\mathcal{J} \subseteq [n]$, where $|\mathcal{J}| = \Cmin[f]$, and for each $i \in \mathcal{J}$ a fixing $x_i = b_i$, for $b_i \in \mathbb{F}_2$.  The guarantee is that if each $x_i$ in $\mathcal{J}$ is set according to this certificate then $f$ is forced to be a constant.  Now we will write down a parity certificate for $f \circ f$ which, for each $i \in \mathcal{J}$, fixes $f(y^{(i)})$ to have value $b_i$.  The obvious way to do this is to separately write down the minimum parity certificate for $f(y^{(i)})$ which sets $f(y^{(i)}) = b_i$, for each $i \in \mathcal{J}$.  This gives a parity certificate for $f \circ f$ of size at least $\Cmin[f] \cdot \pCmin[f]$; we will call this the \emph{trivial certificate}. Note that if we used this process to construct a parity certificate for $f^{\circ k}$, it would have size at least $\Cmin[f]^{(k-1)} \cdot \pCmin[f]$.  In particular, the size of the trivial certificate is essentially supermultiplicative in $\Cmin[f]$.

The trivial certificate seems to only weakly use the power of parities.  
Potentially, significantly shorter certificates could exist which combine the parity certificates for the various $f(y^{(i)})$'s in clever ways.
Indeed, depending on the identity of $f$, it is sometimes possible to take small ``shortcuts'' when making the trivial certificate and save on a small number of parities.  However, these shortcuts yield parity certificates whose size is still essentially supermultiplicative in $\Cmin[f]$.  Thus, on the whole there isn't an obvious way to improve on the trivial certificate in any substantive way. It is tempting then to conjecture that $\pCmin$ is in fact supermultiplicative in $\Cmin$, and if this were true we could prove it by showing optimality of the trivial certificate.

Unfortunately, this intuition does not hold in general.  When $f$ is a parity function, $f^{\circ k}$ is also a parity function, for all $k$.  In this case, $\pCmin[f] = 1$ even though $\Cmin[f]^{(k-1)}\cdot\pCmin[f]$, the size of the trivial certificate, may be quite large.  Our main result is that if we rule out this one pathological case, then $\pCmin[f]$ is indeed supermultiplicative in $\Cmin[f]$:
\begin{theorem}\label{thm:mainresult}
Let $f:\F^n_2\to\F_2$ be a Boolean function which is not a parity.  Then
\begin{equation*}
\pCmin[f^{\circ k}] \geq \Omega(\Cmin[f]^{(k-1)}).
\end{equation*}
\end{theorem}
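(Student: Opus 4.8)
The plan is to induct on $k$, peeling off the outermost copy of $f$. Write $F := f^{\circ k} = f \circ g$ where $g := f^{\circ(k-1)}$ takes $n^{k-1}$ inputs, so $F$ reads its input as $y = (y^{(1)},\dots,y^{(n)})$ with each block $y^{(i)}$ carrying $n^{k-1}$ variables. By the inductive hypothesis we may assume $\pCmin[g] \geq \Omega(\Cmin[f]^{(k-2)})$, and we also record that $g$, like $f$, is not a parity: a nonzero second difference of $f$ can be transplanted into $f\circ f$ by freezing all but two blocks appropriately, so no power of a non-parity is affine. Granting these, it suffices to prove a one-step composition bound $\pCmin[f\circ g] \geq \Omega(\Cmin[f]\cdot \pCmin[g])$, since unrolling it gives $\pCmin[F] \geq \Omega(1)^{k-1}\,\Cmin[f]^{k-1}\,\pCmin[f] = \Omega(\Cmin[f]^{(k-1)})$, with the base case $k=1$ being just $\pCmin[f]\geq 1$.

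To prove the one-step bound, suppose $H\subseteq \F_2^{n^k}$ is an affine subspace on which $F$ is constant, and let $D := \codim(H)$; we must show $D \geq \Omega(\Cmin[f]\cdot\pCmin[g])$. First localize to blocks: each projection $\pi_i(H)\subseteq V_i$ is affine, of codimension $d_i$ say, and $\sum_i d_i \leq D$ since $H\subseteq\prod_i \pi_i(H)$. Call block $i$ \emph{heavy} if $d_i \geq \pCmin[g]$; there are at most $D/\pCmin[g]$ heavy blocks, so if $D < \Cmin[f]\cdot\pCmin[g]$ then fewer than $\Cmin[f]$ blocks are heavy. On every non-heavy block $g$ is non-constant on $\pi_i(H)$, hence takes both values there. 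If the non-heavy blocks behaved \emph{independently} inside $H$ — i.e.\ if $H$ contained the product of their projections, with the heavy blocks frozen at some consistent values — then the image of $H$ under $y\mapsto (g(y^{(1)}),\dots,g(y^{(n)}))$ would contain a subcube of $\F_2^n$ of codimension $<\Cmin[f]$, on which $f$ is non-constant by the very definition of $\Cmin[f]$; so $F$ would be non-constant on $H$, a contradiction.

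The heart of the argument — and the step I expect to be hardest — is showing that the non-heavy blocks \emph{cannot} be forced to behave dependently without spending a large amount of codimension. The danger is that cross-block parity constraints ``entangle'' several blocks, imposing a nontrivial linear relation among the values $g(y^{(i)})$ that could collapse $f$ to a constant even while each individual $g(y^{(i)})$ stays free. This is exactly where both hypotheses are needed: because $f$ is \emph{not a parity}, no small number of linear relations on its inputs makes it constant; and because $g = f^{\circ(k-1)}$ is a deep power it is ``linearly rigid'' — forcing any nontrivial linear relation among the values of two or more independent copies of $g$, while leaving those values unfixed, should cost codimension $\gtrsim \pCmin[g]$. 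Quantifying this rigidity is the crux. I would fold a rigidity statement into the induction hypothesis (``$f^{\circ k}$ is not a parity, has $\pCmin[f^{\circ k}]\gtrsim \Cmin[f]^{(k-1)}$, \emph{and} no affine subspace of codimension $\ll \Cmin[f]^{(k-1)}$ forces a nontrivial linear relation among independent copies of $f^{\circ k}$ without fixing one of their values''), propagate it through the composition step using the same block-localization bookkeeping as above (each heavy block and each entangling relation eats $\gtrsim \pCmin[g]$ units of codimension, so below the threshold there are fewer than $\Cmin[f]$ of them combined), and thereby conclude that when $D < \Omega(\Cmin[f]\cdot\pCmin[g])$ the non-heavy blocks are in fact independent. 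The subcube argument of the previous paragraph then closes the proof.
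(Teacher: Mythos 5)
There is a genuine gap, in two places. First, the step you yourself identify as the crux---that cross-block parity constraints cannot ``entangle'' the values $g(y^{(i)})$ without spending codimension at least about $\pCmin[g]$ per relation---is not proved; it is only a plan, and it is exactly where all the difficulty sits. The paper's mechanism is quite different and is carried out in full: it peels off the \emph{innermost} copies of $f$ one at a time (Proposition~\ref{prop:induction-step}), first putting $H$ into a normal form in which the constraints split into $x$-only, $y$-only, and coupled constraints (Lemma~\ref{lem:nice-basis}), and then running a case analysis showing that removing one inner copy either drops the codimension by $1$ or makes the new coordinate irrelevant; the coupled constraints $\calB_{x,y}$ are precisely your ``entanglement,'' handled in Cases 2 and 3. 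This yields the loss-free additive recursion $\pCmin[f^{\circ k}] \ge \pCmin[f^{\circ (k-1)}] + \Cmin[f]^{k-1}$ when $\pCmin[f]\ge 2$, and the hypothesis ``$f$ is not a parity'' is then exploited through a separate reduction via $f\circ f$ (Fact~\ref{fact:notaparity} and Lemma~\ref{lem:composefandg}). Nothing in your sketch substitutes for that analysis, and your proposed induction-strengthening (``linear rigidity'' of $f^{\circ k}$) is itself an unproven statement of comparable difficulty.

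Second, even granting the crux, your quantitative skeleton does not deliver the theorem. If the one-step bound only holds as $\pCmin[f\circ g] \ge c\,\Cmin[f]\cdot\pCmin[g]$ with a constant $c<1$, unrolling gives $(c\,\Cmin[f])^{k-1}\pCmin[f]$, which is \emph{not} $\Omega(\Cmin[f]^{k-1})$---take $c=\frac{1}{2}$ and $\Cmin[f]=2$ and there is no growth at all; the per-level loss compounds over $k-1$ levels, so ``$\Omega(1)^{k-1}$'' cannot be absorbed. And the loss-free version $\pCmin[f\circ g]\ge \Cmin[f]\cdot\pCmin[g]$ is false: for $f=g=\sort$ one has $\Cmin[\sort]\cdot\pCmin[\sort]=6$ while $\pCmin[\sort^{\circ 2}]=\frac{1}{2}(3^2+1)=5$, and more generally $\pCmin[\sort^{\circ k}]=\frac{1}{2}(3^k+1) < 2\cdot 3^{k-1}$ for all $k\ge 2$, so the intermediate claim $\pCmin[f^{\circ k}]\ge \Cmin[f]^{k-1}\pCmin[f]$ fails as well. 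This is exactly why the paper's per-step gain is additive with no constant lost, so that the gains sum to the geometric series $\frac{\Cmin[f]^k-\Cmin[f]}{\Cmin[f]-1}+\pCmin[f]$. Your block-projection bookkeeping ($\sum_i d_i \le \codim(H)$, hence fewer than $\Cmin[f]$ heavy blocks below the threshold) is fine as far as it goes, but to make the argument work you would have to restructure the induction so that each level contributes additively---which is, in essence, the paper's route.
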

\noindent Note that as $\Cmin[f] \geq \pCmin[f]$, this is a stronger statement than both $\pCmin[f^{\circ k}] = \Omega(\pCmin[f]^{(k-1)})$ and $\Cmin[f^{\circ k}] = \Omega(\Cmin[f]^{(k-1)})$.  In addition, because $\PDT[f] \geq \pCmin[f]$, this shows that $\PDT[f] \geq \Omega(\Cmin[f]^{(k-1)})$.   The example of the trivial certificate shows that we cannot improve the lower bound to $\Omega(\Cmin[f]^k)$. However, as is typically the case for Boolean function powering, all that is necessary for our applications is for the exponent to be $k - o(k)$.

Most of the work in proving Theorem~\ref{thm:mainresult} comes from the special case when $\pCmin[f] \ge 2$.  The general theorem then follows from a simple reduction to this case.  For this case, we prove the following theorem:
\begin{theorem}
\label{thm:Cmin-compose}
Let $f:\F^n_2\to\F_2$ be a Boolean function with $\pCmin[f] \ge 2$.  Then
\[ \pCmin[f^{\circ k}] \ge \frac{\Cmin[f]^k-\Cmin[f]}{\Cmin[f]-1}+ \pCmin[f]= \Omega(\Cmin[f]^{(k-1)})\]
\end{theorem}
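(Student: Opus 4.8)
The plan is to prove the bound by induction on $k$, reducing the inductive step to a composition statement for $\pCmin[f\circ g]$ with an arbitrary inner function $g$. Write $c:=\Cmin[f]$ and $p:=\pCmin[f]$; the target is $\pCmin[f^{\circ k}]\ge\frac{c^k-c}{c-1}+p=\big(c+c^2+\cdots+c^{k-1}\big)+p=:b_k$, and the base case $k=1$ is just the definition $\pCmin[f^{\circ 1}]=\pCmin[f]=p=b_1$. Since $b_k$ obeys the recurrence $b_k=c\,b_{k-1}-\big((c-1)(p-1)-1\big)$ with $(c-1)(p-1)-1\ge 0$ (as $c\ge p\ge 2$), it suffices to establish
\[\pCmin[f^{\circ k}]\;\ge\;c\cdot\pCmin[f^{\circ (k-1)}]\;-\;\big((c-1)(p-1)-1\big),\]
which I would derive from a composition lemma: for every $g$ with $\pCmin[g]\ge 2$ (true for $g=f^{\circ(k-1)}$ by the inductive hypothesis, since $b_{k-1}\ge p\ge 2$) one has $\pCmin[f\circ g]\ge\Cmin[f]\cdot\pCmin[g]-O_f(1)$. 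For the telescoping to land on the stated bound it is enough that this loss be at most $(c-1)(p-1)-1$, which matches the intuition that the ``trivial certificate'' of size $\Cmin[f]\cdot\pCmin[g]$ is optimal up to a bounded additive ``shortcut.''

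To prove the composition lemma, first I would take a minimum parity certificate $H\subseteq(\F_2^m)^n$ for $f\circ g$ — an affine subspace of codimension $D:=\pCmin[f\circ g]$ on which $f\circ g\equiv\beta$ — and for each block $i\in[n]$ let $L_i\subseteq\F_2^m$ be the image of $H$ under projection to block $i$, an affine subspace of codimension $d_i$. Because the product of all these projections embeds $H$ injectively into $L_1\times\cdots\times L_n$, we get $\sum_i d_i\le D$. Partition $[n]=U\sqcup V$ with $U:=\{i:g|_{L_i}\text{ constant}\}$. For $i\in U$ the subspace $L_i$ is itself a parity certificate for $g$, so $d_i\ge\pCmin[g]$ and hence $D\ge|U|\cdot\pCmin[g]$; if $V=\emptyset$ this already gives $D\ge n\cdot\pCmin[g]\ge\Cmin[f]\cdot\pCmin[g]$, which is more than enough.

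The real content is the case $V\ne\emptyset$. Let $\psi(y):=(g(y^{(1)}),\dots,g(y^{(n)}))$; the hypothesis says $f$ is constant on $\psi(H)$, and since $g\equiv b_i$ on $L_i$ for $i\in U$, the set $\psi(H)$ lies in the subcube $\{z:z_i=b_i\ \forall i\in U\}$. If the blocks of $H$ were independent (i.e.\ $H=\prod_i L_i$), then $\psi(H)$ would fill that entire subcube, forcing $f$ to be constant on a codimension-$|U|$ subcube, hence $|U|\ge\Cmin[f]$ and $D\ge\Cmin[f]\cdot\pCmin[g]$. In general $H$ is a subspace of $\prod_i L_i$ whose ``coupling codimension'' $e:=D-\sum_i d_i$ is $\ge 0$, and the idea is to charge $e$: whenever $|U|<\Cmin[f]$ I would show one can spend at most $\pCmin[g]$ more constraints to fix a parity certificate inside some block of $V$ — moving it into $U$ — and iterate, carefully tracking how restricting one block shrinks the projections $L_j$ of the others and how those savings are paid for out of $e$, until $|U|$ reaches $\Cmin[f]$. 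The crucial place where $f$ (hence $g=f^{\circ(k-1)}$) being \emph{not} a parity is used is exactly here: because $g$ is not affine, no small set of affine constraints can pin down a quantity like $g(y^{(i)})+g(y^{(j)})$, which is precisely the one way coupling could cheaply substitute for a whole block of certificate — the loophole that genuine parity functions exploit and that forces us to exclude them.

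The hard part will be making this charging argument tight enough that only an additive $O_f(1)$ is lost rather than a multiplicative constant; I expect this to require a secondary induction (on $|V|$, or on $D$) that handles partially constrained blocks and the interplay between block projections and coupling equations, and getting the lost constant down to at most $(c-1)(p-1)-1$ is what makes the final telescoping land precisely on $\frac{c^k-c}{c-1}+p$ rather than merely on $\Omega(\Cmin[f]^{k-1})$. Once the composition lemma is established, the theorem is immediate from the recursive inequality above together with $(\Cmin[f]-1)(\pCmin[f]-1)\ge 1$.
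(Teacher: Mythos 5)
There is a real gap here: everything rests on your composition lemma $\pCmin[f\circ g]\ge\Cmin[f]\cdot\pCmin[g]-\bigl((c-1)(p-1)-1\bigr)$ for arbitrary inner $g$ with $\pCmin[g]\ge 2$, and that lemma is never proved --- the step where the ``coupling codimension'' $e=D-\sum_i d_i$ is charged against blocks of $V$ is exactly where you write ``I would show'' and ``the hard part will be making this charging argument tight enough.'' That step is not a routine verification, and the heuristic you offer for it is unsound as stated: it is not true that non-affineness of $g$ prevents cheap pinning of cross-block quantities such as $g(y^{(i)})\oplus g(y^{(j)})$. For example, $g(y)=y_1\oplus(y_2\wedge y_3)$ is not a parity and has $\pCmin[g]=2$, yet $g(y)\oplus g(z)$ is constant on the codimension-$3$ subspace $\{y_2=0,\ z_2=0,\ y_1\oplus z_1=0\}$, beating $2\pCmin[g]$; more generally, whenever $g$ is affine but non-constant on a subspace of codimension $\pCmin[g]-1$, such savings exist. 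So shortcuts are genuinely available, and the entire content of your lemma is to bound their aggregate by an additive constant depending only on $f$ --- precisely the ``optimality of the trivial certificate'' that the introduction flags as the tempting but unrealized route. Note also that your intermediate recurrence $\pCmin[f^{\circ k}]\ge c\cdot\pCmin[f^{\circ(k-1)}]-((c-1)(p-1)-1)$ is not implied by the theorem itself (if $\pCmin[f^{\circ(k-1)}]$ exceeds the claimed bound, nothing forces that excess to multiply by $c$ at the next level), and the version for arbitrary inner $g$ is stronger still; you are therefore attempting a statement that is strictly harder than the theorem and is not known to be true.

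For contrast, the paper's proof composes in the opposite order: it writes $f^{\circ k}=f^{\circ(k-1)}\circ(f,\ldots,f)$ and peels off the \emph{inner} copies of $f$ one at a time (Proposition~\ref{prop:induction-step}), showing that each removal either decreases $\codim(H)$ by one or makes the new coordinate irrelevant without increasing it; counting relevant coordinates via Proposition~\ref{prop:cmin-junta} then gives the additive recurrence $\pCmin[f^{\circ k}]\ge\pCmin[f^{\circ(k-1)}]+\Cmin[f^{\circ(k-1)}]\ge\pCmin[f^{\circ(k-1)}]+\Cmin[f]^{k-1}$, which telescopes to the stated bound. All multiplicative growth comes from the easy supermultiplicativity of $\Cmin$ for the outer function; no multiplicative lower bound in $\pCmin$ of an inner function is ever needed, and the block-coupling difficulties you face are confronted only for a single inner copy at a time, where the normal form of Lemma~\ref{lem:nice-basis} and a case analysis (including the junta case via Lemma~\ref{lem:linear-restriction}) suffice. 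If you want to rescue your plan, the realistic fix is to reverse the orientation (outer $=f^{\circ(k-1)}$, inner $=f$) and prove that weaker per-copy statement; as written, the proposal's crucial lemma is missing and its supporting intuition fails.
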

\noindent As we will see, this theorem obtains quantitatively tight bounds for certain functions $f$.

While these two theorems give a lower bound on $\PDT[f^{\circ k}]$ via the inequality $\PDT[f^{\circ k}] \geq \pCmin[f^{\circ k}]$, sometimes we can get a better lower bound if we know some additional information about $f$. In this case, we use the following theorem:
\begin{theorem}\label{thm:pdt-compose}
Let $f:\F^n_2\to \F_2$ be a Boolean function satisfying $f({\bf 0}) = 0$.  If $f$ is not a parity function, then
\[ \pCert[f^{\circ k},{\bf 0}] \ge  \Omega(\Cert[f,{\bf 0}]^{(k-1)}). \] 
In particular, we note that the LHS of the inequality is a lower bound on $\DT^{\oplus}[f]$.
\end{theorem}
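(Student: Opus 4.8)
The plan is to prove the bound by induction on $k$, the inductive step being a single‑composition inequality of the shape
\[
\pCert[f\circ g,\zero]\ \gtrsim\ \Cert[f,\zero]\cdot\pCert[g,\zero]
\]
valid whenever $f$ is not a parity and $f(\zero)=g(\zero)=0$. Granting this, apply it with $g=f^{\circ(k-1)}$: since $f(\zero)=0$ we have $f^{\circ j}(\zero)=0$ for all $j$, and $f^{\circ(k-1)}$ is again not a parity (a routine fact: if $h$ is non‑constant with $h(u)=1$ for some $u$, then restricting $f\circ h$ to the subspace on which each input block ranges over $\{\zero,u\}$ turns $f\circ h$ into $f$ itself, so $f\circ h$ affine forces $f$ affine). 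Unwinding $k-1$ times gives $\pCert[f^{\circ k},\zero]\gtrsim\Cert[f,\zero]^{k-1}\cdot\pCert[f,\zero]\ge\Cert[f,\zero]^{k-1}$, where the last factor is $\ge 1$ because a non‑parity is non‑constant. The ``in particular'' clause is then immediate: any parity decision tree for $f^{\circ k}$ produces, on input $\zero$, a parity certificate for $\zero$, so $\PDT[f^{\circ k}]\ge\pCert[f^{\circ k},\zero]$.

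For the single‑composition inequality, write the input to $f\circ g$ in blocks $y=(y^{(1)},\dots,y^{(n)})$ and let $H\ni\zero$ be an affine subspace on which $f\circ g$ is constant, hence $\equiv 0$. Let $A_i:=\pi_i(H)$ be the image of $H$ under the projection onto the $i$‑th block; each $A_i$ is a linear subspace through $\zero$, and since $H\subseteq\prod_i A_i$ we have $\sum_i\codim(A_i)\le\codim(H)$. Call block $i$ \emph{dead} if $g$ is constant (necessarily $\equiv 0$) on $A_i$, and \emph{live} otherwise; let $P$ be the set of dead blocks. For each dead $i$, $A_i$ is an affine subspace through $\zero$ on which $g$ is constant, so $\codim(A_i)\ge\pCert[g,\zero]$ by the definition of parity certificate complexity; thus $\codim(H)\ge|P|\cdot\pCert[g,\zero]$. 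In particular we are already done whenever $|P|\ge\Cert[f,\zero]$, so the content is in the case $|P|<\Cert[f,\zero]$.

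So suppose $|P|<\Cert[f,\zero]$. The subcube $C:=\{z\in\F_2^n:z_i=0\text{ for }i\in P\}$ contains $\zero$ and has codimension $<\Cert[f,\zero]$, hence $f$ is non‑constant on $C$ and there is $z^\ast\in C$ with $f(z^\ast)=1$. Since $f\circ g\equiv 0$ on $H$, there is no $y\in H$ with $(g(y^{(i)}))_{i}=z^\ast$; yet for every dead $i$ one automatically has $g(y^{(i)})=0=z^\ast_i$, and for every live $i$ the value $z^\ast_i$ is individually attainable on $A_i$. Thus the failure to realize $z^\ast$ inside $H$ is caused entirely by the cross‑block (``linking'') linear constraints coupling the live blocks. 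The step I expect to be the main obstacle is to show that this joint unrealizability is ``expensive'' in codimension. The intended mechanism is to process the live coordinates one at a time, maintaining the affine slice of $H$ consistent with the partial assignment built so far, and to argue that either some partial assignment extends to a full realization of $z^\ast$ in $H$ (a contradiction), or at roughly $\Cert[f,\zero]-|P|$ of the live coordinates the move toward $z^\ast$ is blocked by a linking constraint — and each such blocked coordinate drags in a fresh chunk of codimension of size $\gtrsim\pCert[g,\zero]$ (the cost of that linking constraint together with re‑killing $g$ on the block it ties down). Adding the $|P|\cdot\pCert[g,\zero]$ from the dead blocks to this $\gtrsim(\Cert[f,\zero]-|P|)\cdot\pCert[g,\zero]$ contribution recovers $\codim(H)\gtrsim\Cert[f,\zero]\cdot\pCert[g,\zero]$. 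Making this accounting precise — pinning down exactly how a linking constraint among blocks interacts with the parity certificate complexity of $g$, and ensuring no double counting across steps — is where the real work lies, and the $\Omega(\cdot)$ in the statement is precisely the slack that this bookkeeping requires.
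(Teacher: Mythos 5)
There is a genuine gap, and it sits exactly where you flag it: the single-composition inequality $\pCert[f\circ g,\zero]\gtrsim \Cert[f,\zero]\cdot\pCert[g,\zero]$ is never proved. Your dead-block accounting ($\codim(H)\ge|P|\cdot\pCert[g,\zero]$ via the projections $A_i$) is fine, but the live-block case --- showing that the joint unrealizability of $z^\ast$ caused by cross-block linking parities costs $\gtrsim(\Cert[f,\zero]-|P|)\cdot\pCert[g,\zero]$ --- is the entire content of the theorem, and the sketched ``process the live coordinates one at a time'' mechanism is not just unfinished but aimed at a claim that is too strong. The paper's engine (Proposition~\ref{prop:induction-step}, proved via Lemmas~\ref{lem:nice-basis} and~\ref{lem:linear-restriction}) only extracts a codimension drop of \emph{one} per relevant inner copy, not $\pCert[g,\zero]$ per blocked coordinate, and this is not a weakness of its analysis: the $\sort$ computation in Section~\ref{sec:lowerbounds} gives $\pCmin[\sort^{\circ k}]=\frac{1}{2}(3^k+1)<\Cmin[\sort]^{k-1}\cdot\pCmin[\sort]=2\cdot 3^{k-1}$, so cross-block parities genuinely beat the trivial certificate and an exact multiplicative per-step bound of the kind you want is false in the closely analogous $\pCmin$ setting. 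A further sign that the missing step is where the real ideas live: your sketch never uses the hypothesis that $f$ is not a parity, yet the lemma is false without it (parities are precisely the pathological case), so whatever completes the live-block accounting must invoke non-parity in an essential way --- in the paper this appears as the $\pCmin\ge 2$ hypothesis on the inner function together with the $3$-junta restriction Lemma~\ref{lem:linear-restriction}, plus a separate bootstrap from ``$\pCmin\ge2$'' to ``not a parity.''

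There is also a structural problem with the iteration itself. If your per-step inequality only holds up to a multiplicative constant $c<1$ (which is what ``$\gtrsim$'' and ``the $\Omega(\cdot)$ is the slack the bookkeeping requires'' suggest), then unwinding it $k-1$ times yields $c^{\,k-1}\,\Cert[f,\zero]^{k-1}$, which for a fixed function $f$ is \emph{not} $\Omega(\Cert[f,\zero]^{k-1})$ --- the base of the exponent degrades. So you need the per-step loss to be at most additive, i.e.\ an essentially exact recurrence, which is precisely what the paper arranges differently: it peels one whole layer of inner copies at a time and proves the additive recurrence $\pCert[f^{\circ k},\zero]\ge\pCert[f^{\circ k-1},\zero]+\Cert[f,\zero]^{k-1}$ (using Proposition~\ref{prop:cmin-junta} to count relevant copies and Fact~\ref{fact:cmin-zero-supermultiplicative}), then telescopes the geometric sum (Theorem~\ref{thm:lb-on-PDT} via Remark~\ref{remark:lb-on-PDT}, noting that in all cases linear subspaces through $\zero$ map to linear subspaces through $\zero$). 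Your ``in particular'' clause ($\DT^{\oplus}$ lower bound) and the reduction of the theorem to a single-composition statement are fine; what is missing is the composition lemma itself, and in its stated multiplicative form it is unlikely to be provable.
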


\lnote{Do we want to also state here (maybe even as a formal Theorem?) that our proof can be tweaked to show the incomparable statement:
\[ \PDT[f^{\circ k}]  = \Omega(\Cert[f,{\bf 0}]^k)? \] } 
\jnote{I liked this idea, so I added it in.  Can someone check that I what I wrote for Theorem 5 is correct?} 
%\subsection{Applications}

%\section{Old introduction}
%
%\begin{theorem}
%Let $f:\F^n_2\to\F_2$ be a Boolean function with $\pCmin[f] \ge 2$.  Then
%\[ \pCmin[f^{\circ k}] \ge \frac{\Cmin[f]^k-\Cmin[f]}{\Cmin[f]-1}+ \pCmin[f]= \Omega(\Cmin[f]^k). \]
%\end{theorem}
%
%\draft{Stronger statement than both $\pCmin[f^{\circ k}] = \Omega(\pCmin[f]^k)$ and $\Cmin[f^{\circ k}] = \Omega(\Cmin[f]^k)$. }

\subsection{Applications}

For our main application of Theorem~\ref{thm:mainresult}, we disprove one conjecture in communication complexity and show lower bounds for two related conjectures.  Let us begin by stating the conjectures.  The first we introduced above: 

\begin{conjecture}[\cite{MO09, ZS10}]\label{conj:pdt}
For every Boolean function $f$, $\PDT[f] \leq O(\log(\sparsity[\hatf])^k)$, for some absolute constant $k$.
\end{conjecture}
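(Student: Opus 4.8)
The plan is to route the proof through Theorem~\ref{thm:hongkong}, which is designed exactly for this purpose: it suffices to produce a downward non-increasing complexity measure $M$ with $\pCmin[f] \le M[f] \le O\big((\log\sparsity[\hatf])^{c}\big)$ for every Boolean $f$, since Theorem~\ref{thm:hongkong} then gives $\PDT[f] \le M[f]\cdot\log\sparsity[\hatf] \le O\big((\log\sparsity[\hatf])^{c+1}\big)$, which is of the claimed form with $k = c+1$. Because Fourier sparsity cannot increase under restriction to an affine subspace, the measure $M[f] := C\,(\log\sparsity[\hatf])^{c}$ is automatically downward non-increasing, so the whole conjecture reduces to the ``best-case'' statement: \emph{every Boolean function of Fourier sparsity $s$ is constant on some affine subspace of codimension $\mathrm{polylog}(s)$.}

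To attack that statement I would look for additive structure in the spectral support $\mathrm{supp}(\hatf) = \{S : \hatf(S)\neq 0\}\subseteq\F_2^n$ that a single low-codimension affine restriction can destroy. The entry point is granularity: if $\sparsity[\hatf]=s$ then every nonzero Fourier coefficient is an integer multiple of roughly $2^{-\log s}$, so the spectrum lives on a coarse lattice; combined with Parseval this forces a heavy coefficient and, more usefully, rigidity on $\mathrm{supp}(\hatf)$. The hope is to combine a structure theorem for Boolean functions of bounded spectral norm (in the spirit of the Shpilka--Tal--Volk and related lines of work) with a quasipolynomial Freiman--Ruzsa-type covering of $\mathrm{supp}(\hatf)$ by few cosets of a subspace $V$ of small codimension; restricting the \emph{inputs} to the affine subspace dual to $V$ then identifies many of the characters in each coset, strictly decreasing $s$ at a cost of $\mathrm{polylog}(s)$ in codimension. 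Iterating this contraction $\mathrm{polylog}(s)$ times should drive the sparsity to $1$, i.e.\ make $f$ constant, yielding $\pCmin[f] \le \mathrm{polylog}(\sparsity[\hatf])$.

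The hard part is that the contraction step above is, up to the polynomial loss in the exponent, equivalent to the Log-Rank Conjecture for XOR functions: the communication matrix of $F(x,y)=f(x\oplus y)$ has rank $\sparsity[\hatf]$, and deterministic protocols for $F$ and parity decision trees for $f$ simulate each other up to constant factors, so Conjecture~\ref{conj:pdt} is not an isolated target but a known flagship open problem in disguise. Every current technique — spectral-norm structure theorems, the polynomial method, and indeed the composition-based lower bounds developed in this very paper — leaves a gap between the best known upper bound $\PDT[f]\le O\big(\sqrt{\sparsity[\hatf]}\cdot\log\sparsity[\hatf]\big)$ of~\cite{TWXZ13} and the conjectured $\mathrm{polylog}$, and improving even the exponent in that bound (say to $s^{1/2-\epsilon}$) is itself open. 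So while the reduction via Theorem~\ref{thm:hongkong} to a clean best-case statement about $\pCmin$ is the right first move, the genuinely new ingredient required — a strong structure theorem for Fourier-sparse Boolean functions — is exactly what is missing, which is why the statement remains a conjecture rather than a theorem in this paper.
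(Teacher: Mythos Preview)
Your analysis is accurate, but note that the paper does \emph{not} prove this statement: it is stated as Conjecture~\ref{conj:pdt}, an open problem, so there is no ``paper's own proof'' to compare against. You correctly identify the standard reduction via Theorem~\ref{thm:hongkong} to bounding $\pCmin[f]$ by $\mathrm{polylog}(\sparsity[\hatf])$, and you correctly observe that this is equivalent (up to polynomial overhead) to the Log-Rank Conjecture for XOR functions, which is exactly why the statement remains a conjecture. The paper's actual contribution goes in the opposite direction: it proves \emph{lower} bounds on the necessary exponent $k$ (specifically $k \ge \log_3 6$ for Conjecture~\ref{conj:pdt} and $k \ge \log_2 3$ for Conjecture~\ref{conj:hongkong}) via the composition theorems, and disproves the stronger Conjecture~\ref{conj:mo}. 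So your write-up is a fair summary of the state of the problem, but it is not a proof, and neither the paper nor anyone else currently has one.
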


The next conjecture was introduced in~\cite{MO09} as a possible means of proving Conjecture~\ref{conj:pdt}.  It states, roughly, that for any Boolean function $f$, there is always a parity one can query to ``collapse'' a large part of $f$'s Fourier transform onto itself.

\begin{conjecture}[Montanaro--Osborne]\label{conj:mo}
There exists universal constants $C>0, K \in [0,1]$ such that the following holds: for every Boolean function with $\sparsity[\hatf] \ge C$ there exists $\beta \in \F^n_2$ such that
\[ \big|\supp(\hatf) \cap (\supp(\hatf)+\beta)\big| \ge K \cdot \sparsity[\hatf], \]
where $\supp(\hatf) = \{\alpha : \hatf(\alpha) \neq 0\}$, and $\supp(\hatf)+\beta = \{ \alpha + \beta \colon \alpha \in \supp(\hatf)\}$.
\end{conjecture}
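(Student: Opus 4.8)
The last displayed statement is a \emph{conjecture} which this paper refutes, so the plan is to disprove Conjecture~\ref{conj:mo}. Step one is to see that its conclusion, iterated, is far too strong. Read with the (clearly intended) requirement $\beta\neq\zero$ --- without it the statement is vacuous --- the conjecture says that every not-too-sparse $f$ has a parity whose query folds a constant fraction of the Fourier spectrum onto itself. I would make this quantitative: if $|\supp(\hatf)\cap(\supp(\hatf)+\beta)| \ge K\sparsity[\hatf]$, then restricting $f$ to $\{x:\langle x,\beta\rangle = c\}$ identifies $\chi_{\alpha+\beta}$ with $(-1)^c\chi_\alpha$, so the restricted function has Fourier support of size at most $\sparsity[\hatf]-t$, where $t$ is the number of pairs $\{\alpha,\alpha+\beta\}$ contained in $\supp(\hatf)$; since $2t = |\supp(\hatf)\cap(\supp(\hatf)+\beta)| \ge K\sparsity[\hatf]$, the restricted sparsity is at most $(1-K/2)\sparsity[\hatf]$, and this holds for \emph{both} $c\in\{0,1\}$ because the sign $(-1)^c$ can annihilate a given pair's contribution for at most one $c$. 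Iterating --- apply the conjecture to each restricted function, and stop a branch once its sparsity falls below the constant $C$, where it is computed by a parity decision tree of depth $O_C(1)$ --- builds a parity decision tree for $f$ of depth $O_{C,K}(\log\sparsity[\hatf])$. So Conjecture~\ref{conj:mo} implies $\pCmin[f] \le \PDT[f] \le O_{C,K}(\log\sparsity[\hatf])$ for every Boolean function $f$.

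Step two is to exhibit a family violating such a bound, which Theorem~\ref{thm:mainresult} produces via powering. I would take $f=\sort$, Ambainis's four-bit function, which is not a parity and (a finite check) has $\deg(\sort) = 2$ and $\Cmin[\sort] = 3$. Then Theorem~\ref{thm:mainresult} gives $\pCmin[\sort^{\circ k}] \ge \Omega(\Cmin[\sort]^{k-1}) = \Omega(3^{k-1})$. For the sparsity, I would use the composition bound $\sparsity[\widehat{f\circ g}] \le \sparsity[\hatf]\cdot\sparsity[\hat g]^{\deg(f)}$ (immediate from expanding $\widehat{f\circ g}$ in the Fourier expansions of $\hatf$ and $\hat g$), so with $d=\deg(\sort)=2$ the recursion $s_k \le s_1\cdot s_{k-1}^{\,d}$ yields $\log\sparsity[\widehat{\sort^{\circ k}}] \le \tfrac{d^k-1}{d-1}\log\sparsity[\hat{\sort}] = O(2^k)$. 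Putting the two estimates together, $\pCmin[\sort^{\circ k}] = \Omega(3^{k-1})$ while $\log\sparsity[\widehat{\sort^{\circ k}}] = O(2^k)$, so for $k$ large this contradicts $\pCmin[f] \le O_{C,K}(\log\sparsity[\hatf])$ --- for \emph{any} choice of $C,K$. Hence no such constants exist and Conjecture~\ref{conj:mo} is false.

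The real engine is Theorem~\ref{thm:mainresult}, which I am assuming; given it, the only genuinely load-bearing point is that the sparsity upper bound for $\sort^{\circ k}$ (growing like $2^k$) strictly beats the parity-kill-number lower bound (growing like $3^k$) --- which works precisely because $\Cmin[\sort] = 3$ is \emph{strictly} larger than $\deg(\sort) = 2$. That is exactly why $\sort$ is the right base function: were $\Cmin[\sort]$ equal to $\deg(\sort)$, the two exponential estimates would share a base and the argument would collapse. The remaining ingredients are routine and I would just check them in passing: the per-query sparsity drop in both branches, that a function of sparsity at most $C$ has a parity decision tree of depth $O_C(1)$, and the solution of the sparsity recursion. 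Finally, running the same scheme with Kushilevitz's hemi-icosahedron function $\HI$ in place of (or alongside) $\sort$ yields the analogous new lower bounds for the related conjectures of~\cite{MO09,ZS10} and~\cite{TWXZ13}.
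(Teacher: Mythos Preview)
Your proposal is correct and follows essentially the same route as the paper: show that Conjecture~\ref{conj:mo} implies $\PDT[f]=O(\log\sparsity[\hatf])$, then violate this with $\sort^{\circ k}$ (and $\HI^{\circ k}$) via Theorem~\ref{thm:mainresult}, using $\Cmin[\sort]=3>2=\deg(\sort)$ so that the $\Omega(3^k)$ parity-kill lower bound outpaces the $O(2^k)$ bound on $\log\sparsity$. The only cosmetic differences are that the paper computes $\sparsity[\widehat{\sort^{\circ k}}]=4^{2^k-1}$ exactly from the explicit Fourier expansion rather than via your general composition inequality, and states the ``Conjecture~\ref{conj:mo} $\Rightarrow$ $\PDT[f]=O(\log\sparsity[\hatf])$'' implication without the detailed per-query sparsity-drop argument you supply.
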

If this conjecture were true, then one could construct a good parity decision tree for $f$ by always querying the parity associated with the $\beta$ guaranteed by the conjecture.  After $\log(\sparsity[\hatf])$ queries, the restricted function would have constant sparsity.  As a result, this conjecture is strong enough to imply Conjecture~\ref{conj:pdt} with $k = 1$, i.e. $\PDT[f] \leq O(\log(\sparsity[\hatf]))$. \orange{We remark that Conjecture \ref{conj:pdt} with $k=1$ also has implications outside of communication complexity: together with the inequality of Theorem \ref{btw} and the Fourier-analytic learning algorithm of \cite{OS07}, they imply an efficient algorithm for learning $\poly(n)$-sparse monotone functions from uniform random examples. This would represent a significant advance on a major open problem in learning theory, that of efficiently learning $\poly(n)$-term monotone DNF formulas.} 

The final conjecture upper bounds $\Cmin[f]$ in terms of $\lVert \hatf \rVert_1 := \sum_{\alpha} |\hatf(\alpha)|$ (this is Conjecture 27 in~\cite{TWXZ13}):
\begin{conjecture}[\cite{TWXZ13}]\label{conj:hongkong}
For every Boolean function $f$, $\pCmin[f] \leq O(\log(\lVert\hatf\rVert_1)^k)$, for some absolute constant $k$.
\end{conjecture}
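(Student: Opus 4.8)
As Conjecture~\ref{conj:hongkong} is an open upper bound, we sketch the route we would take and where we expect it to stall. The goal is to produce, for an arbitrary non-constant $f$, an affine subspace of codimension $\mathrm{poly}(\log\lVert\hatf\rVert_1)$ on which $f$ is constant. Two observations fix the context. First, since $\lVert\hatf\rVert_1 \le \sqrt{\sparsity[\hatf]}$ and $f \mapsto \lVert\hatf\rVert_1$ is non-increasing under parity restrictions, a proof would imply Conjecture~\ref{conj:pdt} (via Theorem~\ref{thm:hongkong}) and hence the Log-Rank Conjecture for XOR functions, so an easy argument is not to be expected. Second, Theorem~\ref{thm:mainresult} shows that $\pCmin$ can be driven polynomially high by powering, so no argument that merely reads a certificate off a shallow decomposition of $\supp(\hatf)$ can work; the certificate must exploit the linear structure globally, and indeed our applications give new lower bounds on the exponent $k$ that Conjecture~\ref{conj:hongkong} would require.

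The natural attack adapts the Montanaro--Osborne strategy but tracks the $\ell_1$ mass $\lVert\hatf\rVert_1$ rather than the support size $\sparsity[\hatf]$, which is the ``monotone'' quantity here. Concretely, the plan is to prove an \emph{$\ell_1$-collapse lemma}: for every $f$ that is not affine there is a $\beta \in \mathbb{F}_2^n$ such that, writing $f_\beta$ for the restriction of $f$ to a coset of $\{x : \langle \beta,x\rangle = 0\}$, one has $\lVert\widehat{f_\beta}\rVert_1 \le (1-\delta)\lVert\hatf\rVert_1$ for an absolute constant $\delta > 0$. Iterating this drives $\lVert\hatf\rVert_1$ below $2$ in $O_\delta(\log\lVert\hatf\rVert_1)$ steps, and any $f$ with $\lVert\hatf\rVert_1 < 2$ is forced to be affine --- the unique Fourier coefficient of magnitude exceeding $1/2$ must be the whole function --- hence has a codimension-at-most-$1$ parity certificate, finishing the bound with $k = 1$. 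To locate such a $\beta$ we would examine the multiset $\{|\hatf(\alpha)| : \alpha \in \supp(\hatf)\}$ and look for one of: a single heavy coefficient; a pair $\alpha, \alpha + \beta \in \supp(\hatf)$ carrying a constant fraction of $\lVert\hatf\rVert_1$, which we fold by fixing $\langle \beta,\cdot\rangle$; or an affine substructure in $\supp(\hatf)$ concentrating the mass. The aim is a clean Fourier-weight counting argument that always exhibits one of these.

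The main obstacle --- and, we believe, the reason the conjecture is still open --- is that the strong, constant-rate collapse lemma appears to be false. The powering examples behind Theorem~\ref{thm:mainresult}, in particular $\sort^{\circ k}$ and $\HI^{\circ k}$, are exactly the functions where one expects every individual parity restriction to shrink $\lVert\hatf\rVert_1$ by only a $1 - o(1)$ factor while $\pCmin$ remains polynomially large in $\log\lVert\hatf\rVert_1$, and they are precisely the instances from which we extract our lower bounds against Conjecture~\ref{conj:hongkong}. A working proof must therefore either let the per-step gain decay with the current value of $\lVert\hatf\rVert_1$ while still bounding the total step count by a polylogarithm --- this seems to demand a structural grip on how $\ell_1$ mass redistributes under restriction that is not currently available --- or abandon the greedy one-parity-at-a-time scheme and instead select an entire codimension-$\mathrm{poly}(\log\lVert\hatf\rVert_1)$ subspace at once, e.g.\ via a probabilistic argument showing that a random subspace of that codimension kills $f$ with positive probability. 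Making either route succeed is the crux.
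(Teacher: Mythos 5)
There is nothing to compare your proposal against: the statement labelled Conjecture~\ref{conj:hongkong} is exactly that---a conjecture, quoted from~\cite{TWXZ13}---and the paper contains no proof of it (it remains open). Your writeup correctly treats it as open and only sketches an attack, so there is no false claim of a proof. Two remarks on the content of your sketch. First, your ``constant-rate $\ell_1$-collapse lemma'' does not merely ``appear to be false''---it is definitively refuted by the paper itself: a per-step shrinkage of $\lVert\hatf\rVert_1$ by a fixed factor $(1-\delta)$, iterated until $\lVert\hatf\rVert_1 < 2$ (at which point, as you correctly argue, $f$ is $\pm$ a character), would give $\pCmin[f] \leq O(\log\lVert\hatf\rVert_1)$, i.e.\ exponent $k=1$; but Theorem~\ref{thm:Cmin-compose} applied to $\sort^{\circ k}$ yields $\pCmin[\sort^{\circ k}] \geq \frac{1}{2}(3^k+1)$ while $\log\lVert\hat{\sort^{\circ k}}\rVert_1 = \Theta(2^k)$, so Corollary~\ref{cor:sortfunction} already forces any valid exponent to satisfy $k \geq \log_2 3$. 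This is the same mechanism by which the paper disproves the support-size analogue of your collapse lemma, namely the Montanaro--Osborne Conjecture~\ref{conj:mo}, so the greedy one-parity-at-a-time scheme with constant gain is dead in both the $\sparsity$ and the $\ell_1$ metrics. Second, your framing of the consequences is accurate: the paper notes that a bound $\pCmin[f] \leq O(\log(\lVert\hatf\rVert_1)^k)$ combined with Theorem~\ref{thm:hongkong} gives Conjecture~\ref{conj:pdt} with exponent $k+1$ (the measure $O(\log(\lVert\hatf\rVert_1)^k)$ is downward non-increasing since restricting to an affine subspace cannot increase the spectral norm), and your inequality $\lVert\hatf\rVert_1 \leq \sqrt{\sparsity[\hatf]}$ is a correct (Cauchy--Schwarz) sharpening of the $\lVert\hatf\rVert_1 \leq \sparsity[\hatf]$ bound the paper quotes. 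So the proposal is a reasonable survey of the obstacles, but of course it neither proves nor was expected to prove the conjecture.
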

Combined with Theorem~\ref{thm:hongkong}, this implies Conjecture~\ref{conj:pdt} with exponent $(k+1)$:
\begin{equation*}
\PDT[f]
\leq O(\log(\lVert\hatf\rVert_1)^k \cdot \log(\sparsity[\hatf]))
\leq O(\log(\sparsity[\hatf])^{k+1}),
\end{equation*}
where we have used here the inequality $\lVert\hatf\rVert_1 \leq \sparsity[\hatf]$.  The authors of~\cite{TWXZ13} point out that they don't know of a counterexample to Conjecture~\ref{conj:hongkong} even in the case of $k = 1$ (which was true also for Conjecture~\ref{conj:pdt}).

To prove lower bounds for these conjectures, we consider a pair of functions and the function families generated by powering them.
The first of these functions is the $\sort$ function.  This function was introduced by Ambainis in~\cite{Amb06}, in which the family of functions $\sort^{\circ k}$ was used to provide a separation between polynomial degree and quantum query complexity (see also \cite{LLS06, HLS07}).  Applying Theorem~\ref{thm:mainresult} to $\sort^{\circ k}$ yields the following corollary:
\begin{corollary}
\label{cor:sortfunction}
For infinitely many $n$,
there exists a Boolean function $f:\F^n_2\to\F_2$ satisfying
\[  \pCmin[f] = \Omega((\log(\sparsity[\hat{f}]))^{\log_2 3}) = \Omega(\log(\lVert \hatf \rVert_1)^{\log_2 3}). \]
\end{corollary}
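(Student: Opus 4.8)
The plan is to instantiate Theorem~\ref{thm:mainresult} with the base function $f = \sort$, take $n = 4^k$ (so the ``infinitely many $n$'' are the powers of $4$), and track how $\pCmin$, $\sparsity$, and $\lVert\,\cdot\,\rVert_1$ grow under powering. First I would recall the relevant facts about Ambainis's $4$-bit function from~\cite{Amb06}: $\sort$ depends on all four of its inputs and has degree $2$, hence is not a parity; it satisfies $\Cmin[\sort] = 3$; and $\deg(\sort) = 2$. It is exactly the gap between these last two numbers that produces the exponent $\log_2 3$. Since $\sort$ is not a parity, Theorem~\ref{thm:mainresult} applies to $\sort^{\circ k} \colon \F_2^{4^k} \to \F_2$ and gives
\[
\pCmin[\sort^{\circ k}] \;=\; \Omega\big(\Cmin[\sort]^{\,k-1}\big) \;=\; \Omega(3^{k-1}) \;=\; \Omega(3^k).
\]

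Next I would upper bound the sparsity of $\sort^{\circ k}$. Substituting the multilinear (Fourier) expansion of $h$ into each variable of that of $g$ and collecting monomials shows $\sparsity[\widehat{g \circ h}] \le \sum_{S \,:\, \hat g(S) \neq 0} \sparsity[\hat h]^{|S|} \le \sparsity[\hat g]\cdot\sparsity[\hat h]^{\deg(g)}$. Writing $s_k := \sparsity[\widehat{\sort^{\circ k}}]$ and applying this to $\sort^{\circ k} = \sort \circ \sort^{\circ(k-1)}$ with $\deg(\sort) = 2$, I get the recursion $s_k \le s_1 \cdot s_{k-1}^2$, where $s_1 = \sparsity[\widehat{\sort}]$ is an absolute constant; unrolling yields $\log s_k \le (2^k - 1)\log s_1 = O(2^k)$. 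Therefore $\big(\log\sparsity[\widehat{\sort^{\circ k}}]\big)^{\log_2 3} = O\big(2^{k\log_2 3}\big) = O(3^k)$, which matches the lower bound on $\pCmin[\sort^{\circ k}]$ up to a constant factor; this is the first claimed bound. For the second, I would use the trivial inequality $\lVert\hatf\rVert_1 \le \sparsity[\hatf]$ (every Fourier coefficient of a $\pm 1$-valued function has magnitude at most $1$), so $\log\lVert\widehat{\sort^{\circ k}}\rVert_1 \le \log\sparsity[\widehat{\sort^{\circ k}}]$ and $\pCmin[\sort^{\circ k}] = \Omega\big((\log\lVert\widehat{\sort^{\circ k}}\rVert_1)^{\log_2 3}\big)$ follows a fortiori.

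The only part requiring genuine care is the input data about $\sort$ --- in particular that $\deg(\sort) = 2$ while $\Cmin[\sort] = 3$ --- but these are precisely the properties behind Ambainis's polynomial-degree-versus-query separation, so I would cite~\cite{Amb06} rather than reprove them. A small thing to be deliberate about is the direction of the sparsity estimate: we want an \emph{upper} bound on $\sparsity$ (to keep $\log\sparsity$ small relative to $\pCmin$), so the crude monomial count, which ignores all cancellation, is exactly the right tool, and no lower bound on $\sparsity$ or on $\lVert\hatf\rVert_1$ is needed.
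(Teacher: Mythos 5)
Your proposal is correct and follows essentially the same route as the paper: apply Theorem~\ref{thm:mainresult} to $\sort^{\circ k}$ using $\Cmin[\sort]=3$, and show $\log\sparsity[\widehat{\sort^{\circ k}}]=O(2^k)$ so that the exponent $\log_2 3$ emerges. The only (harmless) differences are that the paper gets the sparsity bound from the exact recurrence $\sparsity[\widehat{\sort^{\circ k}}]=4\cdot\sparsity[\widehat{\sort^{\circ(k-1)}}]^2$ via the explicit Fourier expansion, and handles the spectral-norm statement through the identity $\lVert\widehat{\sort^{\circ k}}\rVert_1=\sqrt{\sparsity[\widehat{\sort^{\circ k}}]}$, whereas you use the cruder but sufficient bounds $\sparsity[\widehat{g\circ h}]\le\sparsity[\hat g]\cdot\sparsity[\hat h]^{\deg g}$ and $\lVert\hat f\rVert_1\le\sparsity[\hat f]$.
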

\noindent This example shows that a lower bound of $k \geq \log_2 3 \approx 1.58$ is necessary for Conjecture~\ref{conj:hongkong}.  In fact, by using Theorem~\ref{thm:Cmin-compose}, we can exactly calculate both $\pCmin[\sort^{\circ k}]$ and $\PDT[\sort^{\circ k}]$ (see Section~\ref{sec:lowerbounds} for full details).

The second function we consider is Kushilevitz's hemi-icosahedron function $\HI$.  The family of functions $\HI^{\circ k}$ has provided the best known lower bounds for a variety of problems (e.g.\  \cite{NW95, HKP11}).  Applying Theorem~\ref{thm:pdt-compose} to $\HI^{\circ k}$ yields:
\begin{corollary}
\label{cor:HIfunction}
For infinitely many $n$,
there exists a Boolean function $f:\F^n_2\to\F_2$ satisfying
\[  \DT^\oplus[f] = \Omega((\log(\sparsity[\hat{f}]))^{\log_3 6}). \]
\end{corollary}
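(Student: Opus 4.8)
The plan is to instantiate Theorem~\ref{thm:pdt-compose} with $f = \HI^{\circ k}$ and let $k\to\infty$. First I would record the facts about Kushilevitz's hemi-icosahedron function $\HI\colon\F_2^6\to\F_2$ that the argument uses: $\deg(\HI)=3$; $\sparsity[\widehat{\HI}] = O(1)$ (indeed at most $\sum_{i=0}^{3}\binom{6}{i}=42$, since $\HI$ has degree $3$ on $6$ variables); and $\HI$ has an input at which its certificate complexity equals $6$ (for instance a fully sensitive input, which $\HI$ possesses; note $\Cert[\HI,z]\le\DT[\HI]=6$ for every $z$, so this is the largest possible value). Negating input variables of $\HI$, and the output if necessary, affects none of these quantities — it only permutes inputs and flips signs of Fourier coefficients, so $\deg(\HI)$, the support of $\widehat{\HI}$, and the multiset of certificate complexities are all unchanged — so I may assume the distinguished input is ${\bf 0}$ and that $\HI({\bf 0})=0$; that is, $\HI({\bf 0})=0$ and $\Cert[\HI,{\bf 0}]=6$. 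Let $\HI$ denote this relabeled function from now on. Then $\HI^{\circ k}$ is a Boolean function on $n=6^k$ variables, and $\HI$ is not a parity since $\deg(\HI)=3>1$.

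The lower bound is then immediate from Theorem~\ref{thm:pdt-compose} applied to $\HI$ (recalling that $\pCert[\cdot,{\bf 0}]$ lower-bounds $\DT^\oplus$ of the composed function):
\[ \DT^\oplus[\HI^{\circ k}] \;\ge\; \pCert[\HI^{\circ k},{\bf 0}] \;\ge\; \Omega\!\left(\Cert[\HI,{\bf 0}]^{k-1}\right) \;=\; \Omega(6^{k-1}) \;=\; \Omega(6^k). \]
It remains to bound $\sparsity[\widehat{\HI^{\circ k}}]$ from above. For this I would use the composition inequality $\sparsity[\widehat{f\circ g}]\le\sparsity[\widehat f]\cdot\sparsity[\widehat g]^{\deg f}$, which follows by writing the Fourier expansion of $f$ and substituting that of $g$ into each input coordinate: a monomial of $f$ of degree $t$ becomes at most $\sparsity[\widehat g]^{t}$ characters on disjoint blocks of variables, so the number of characters appearing is at most $\sum_{T\,:\,\widehat f(T)\neq 0}\sparsity[\widehat g]^{|T|}\le\sparsity[\widehat f]\cdot\sparsity[\widehat g]^{\deg f}$. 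Applying this with $f=\HI$ and $g=\HI^{\circ(k-1)}$, and writing $\sigma_k := \sparsity[\widehat{\HI^{\circ k}}]$ and $c := \sparsity[\widehat{\HI}] = O(1)$, I get $\sigma_1 = c$ and $\sigma_k\le c\,\sigma_{k-1}^{3}$, which unwinds to $\sigma_k\le c^{(3^k-1)/2}$ and hence $\log\sparsity[\widehat{\HI^{\circ k}}] = O(3^k)$.

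Raising this to the power $\log_3 6$ and using $(3^k)^{\log_3 6}=6^k$ gives $\big(\log\sparsity[\widehat{\HI^{\circ k}}]\big)^{\log_3 6} = O(6^k)$, so combining with the lower bound, $\DT^\oplus[\HI^{\circ k}] = \Omega(6^k) = \Omega\big((\log\sparsity[\widehat{\HI^{\circ k}}])^{\log_3 6}\big)$. This is the claim for $n=6^k$; letting $k$ range over the positive integers yields it for infinitely many $n$. Given Theorem~\ref{thm:pdt-compose}, the argument has no hard conceptual step, but two points deserve care and are where I expect the real work to lie. The first is the sparsity estimate: the crude bound $\sparsity\le\sum_{i\le\deg}\binom{n}{i}$ applied to $\HI^{\circ k}$ (degree $3^k$ on $6^k$ variables) only gives $\log\sparsity = O(k\cdot 3^k)$, and the spurious factor of $k$ would push the exponent strictly below $\log_3 6$; it is the composition inequality that removes it. The second is the appeal to $\Cert[\HI,{\bf 0}]=6$: it is this value together with $\deg(\HI)=3$ that calibrates the exponent to exactly $\log_3 6$, so one must pin down the relevant complexity parameters of the hemi-icosahedron function precisely — it has a fully sensitive input, so its certificate complexity there is squeezed between its block sensitivity and its decision-tree depth, both of which equal $6$.
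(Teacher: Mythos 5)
Your proposal is correct and takes essentially the paper's route: apply Theorem~\ref{thm:pdt-compose} to $\HI$, using $\Cert[\HI,{\bf 0}]=6$ to get $\DT^\oplus[\HI^{\circ k}]\ge\pCert[\HI^{\circ k},{\bf 0}]=\Omega(6^{k})$, and pair this with $\log\sparsity[\hat{\HI^{\circ k}}]=O(3^k)$, which follows because $\deg(\HI)=3$. Two remarks on where you deviate. First, your relabeling step is harmless but unnecessary: the paper's $\HI$ already satisfies $\HI({\bf 0})=0$ while $\HI(x)=1$ on all weight-one inputs, so ${\bf 0}$ itself is the fully sensitive point and $\Cert[\HI,{\bf 0}]=6$ directly (this is also the fact you implicitly rely on when asserting a fully sensitive input exists). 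Second, the sparsity estimate: the paper proves $\sparsity[\hat{\HI^{\circ k}}]\le 4^{3^k}$ by noting that any Boolean function of $\pm1$-degree $d$ has $2^{-d}$-granular Fourier coefficients, so Parseval forces $\sparsity\le 4^{d}$; you instead derive and iterate the composition inequality $\sparsity[\widehat{f\circ g}]\le\sparsity[\hat f]\cdot\sparsity[\hat g]^{\deg f}$, getting $\sigma_k\le c^{(3^k-1)/2}$. Both arguments are valid and both correctly avoid the lossy $\sum_{i\le\deg}\binom{n}{i}$ bound, whose extra factor of $k$ you rightly flag as fatal to the exponent $\log_3 6$; the granularity route is a one-line appeal to a standard fact, while your substitution count is self-contained and slightly more elementary, so the difference is only in this sub-lemma, not in the overall structure of the proof.
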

\noindent This example shows that a lower bound of $k \geq \log_3 6 \approx 1.63$ is necessary for Conjecture~\ref{conj:pdt}.
In addition, both Corollaries~\ref{cor:sortfunction} and~\ref{cor:HIfunction} provide examples of functions for which $\PDT[f] = \omega(\log(\sparsity[\hatf]))$, disproving Conjecture~\ref{conj:mo}.

For full details of these functions and the lower bounds, see Section~\ref{sec:lowerbounds}.  Independent of this work, Noga Ron-Zewi, Amir Shpilka, and Ben Lee Volk have also proven Corollary~\ref{cor:HIfunction} using a family of functions related to $\HI^{\circ k}$~\cite{RSV13}.
%In fact, they are able to prove something stronger, which is that their family of XOR functions has a communication complexity which is lower bounded by $\Omega((\log(\sparsity[\hat{f}]))^{\log_3 6})$.
With their kind permission, we have reproduced their argument in Appendix~\ref{app:benleesproof}.

\subsection{Organization}

Section~\ref{sec:prelims} contains definitions and notations.  The most technical part of the paper is Section~\ref{sec:hardproof}, which contains the proof of Theorem~\ref{thm:Cmin-compose}.  Section~\ref{sec:easycorollaries} contains some consequences of Theorem~\ref{thm:Cmin-compose}, most importantly Theorems~\ref{thm:mainresult} and~\ref{thm:pdt-compose}. In Section~\ref{sec:lowerbounds}, we lower bound the parity complexity measures of $\sort^{\circ k}$ and $\HI^{\circ k}$, proving Corollaries~\ref{cor:sortfunction} and~\ref{cor:HIfunction}.  The alternate proof of Corollary~\ref{cor:HIfunction} by Ron-Zewi, Shpilka, and Volk can be found in Appendix~\ref{app:benleesproof}. 

%
%
%\subsection{The log-rank conjecture}
%
%
%\begin{conjecture}[Montanaro--Osborne]
%There exists universal constants $C>0, K \in [0,1]$ such that the following holds.  For every Boolean function with $\sparsity[\hatf] \ge C$ there exists $\beta \in \F^n_2$ such that
%\[ \big|\supp(\hatf) \cup (\supp(\hatf)+\beta)\big| \ge K \cdot \sparsity[\hatf], \]
%where $\supp(\hatf)+\beta = \{ \alpha + \beta \colon \alpha \in \supp(\hatf)\}$.
%\end{conjecture}
%
%\subsection{PAC learning monotone functions}
%
%\begin{theorem}[O'Donnell--Servedio]
%There is a learning algorithm $\calA$ with the following performance guarantee.  For every constant $\eps > 0$ and every monotone function $f:\F^n_2\to\F_2$, given access to labelled examples $(\bx,f(\bx))$ where $\bx$ is drawn uniformly at random from $\F^n_2$, the algorithm $\calA$ runs in time $\exp(\DT[f])$ and with high probability outputs a hypothesis $h:\F^n_2\to\F_2$ satisfying $\Pr[f(\bx) \ne h(\bx)] \le \eps$.
%\end{theorem}
%
%\begin{theorem}[Blais-Tan-Wan]
%Let $f$ be a Boolean function.  Then $\sumi \hatf(i) \le O(\DT^\oplus[f]^{1/2})$.
%\end{theorem}
%
%

\section{Preliminaries}\label{sec:prelims}

\subsection{Fourier analysis over the Boolean hypercube}

We will be concerned with the Fourier representation of Boolean functions and its relevant complexity measures.  In this context it will be convenient to view the output of $f$ as real numbers $-1,1\in\R$ instead of elements of $\F_2$, where we associate $0\in \F_2$ with $1\in \R$, and $1\in \F_2$ with $-1\in \R$.  Throughout this paper we will often switch freely between the two representations. \medskip

Every function $f:\F^n_2\to\R$ has a unique representation as a multilinear polynomial
\[ f(x) = \sum_{\alpha\in \F^n_2} \hatf(\alpha) \chi_\alpha(x) \quad \text{where $\chi_\alpha(x) = (-1)^{\la x,\alpha\ra}$},  \]
known as the Fourier transform of $f$.  The numbers $\hatf(\alpha)$ are the Fourier coefficients of $f$, and we refer to the $2^n$ functions $\chi_\alpha:\F^n_2\to\bits$ as the Fourier characters.  We write
$\supp(\hat{f}) = \{ \alpha \in\F^n_2 \colon \hatf(\alpha) \neq 0  \} $
to denote the support of the Fourier spectrum of $f$. 
\orange{The \emph{Fourier sparsity} of $f$, which we denote as $\sparsity[\hat{f}]$, is the cardinality of its Fourier spectrum $\supp(\hatf)$.   
\medskip

The \emph{spectral $1$-norm of $f$} is defined to be
\[ \lVert f \rVert_1 := \sum_{\alpha\in\F^n_2} |\hat{f}(\alpha)|. \]
For Boolean functions, we have  $\sparsity[\hat{f}] \geq \lVert f \rVert_1$.
%By Jensen's inequality every function $f:\F^n_2\to\R$ satisfies $\sparsity[\hatf] \ge \hat{\lVert}f \hat{\rVert}_p \ge \hat{\lVert}f \hat{\rVert}_q$ for all $1\le p \le q$.\jnote{We only need the 1-norm and the sparsity for this paper, right?  Why do we have general $p$-norms and Jensen's inequality?}
}
%\begin{fact}
%The Fourier dimension of a Boolean function is equivalent to the smallest $d$ such that $f$ can be expressed as a function of $d$ characters $\chi_{S_1},\ldots,\chi_{S_d}$.
%\end{fact}

%For convenience, we use $w(x)$ to denote the Hamming weight of vector $x \in \F_2^n$.
%Let $x, y \in \F_2^n$, we also  define $x \cdot y \in \Z$ as
%\[x\cdot y= \sum_{1 \leq i \leq n} x_i \cdot y_i\]
%where the addition and multiplication operations are over $\Z$.\jnote{Should this be ``operations are over $\Z_2$''?}

\subsection{Parity complexity measures}
In this section, we define some relevant complexity measures.  We begin with parity decision tree complexity.
\begin{definition}[Parity decision trees]
A parity decision tree (PDT) is a binary tree where each internal node is labelled by a subset $\alpha \sse [n]$, and each leaf is labelled by a bit $b\in \F_2$.  A PDT computes a Boolean function $f:\F^n_2\to\F_2$ the natural way: on input $x\in \F^n_2$, it computes $\la x,\alpha\ra$ where $\alpha$ is the subset at the root.  If $\la x,\alpha \ra = 1$ the right subtree is recursively evaluated, and if $\la x,\alpha\ra = 0$ the left subtree is recursively evaluated.  When a leaf is reached the corresponding bit $b\in \F_2$ is the output of the function.
\end{definition}

\begin{definition}[Parity decision tree complexity]
Let $f:\F^n_2\to\F_2$ be a Boolean function.  The \emph{parity decision tree complexity of $f$}, denoted $\DT^\oplus[f]$, is the depth of the shallowest parity decision tree computing $f$.
\end{definition}

\begin{definition}[Certificate complexity]
Let $f:\F^n_2\to\F_2$ be a Boolean function.  For every $x\in\F_2^n$, the \emph{certificate complexity} and \emph{parity certificate complexity of $f$ at $x$} are defined to be
\begin{eqnarray*}
\Cert[f,x] &:=& \min\{ \codim(C) \colon \text{$C\ni x$, where $C$ is a subcube on which $f$ is constant} \}\\
 \pCert[f,x] &:=& \min\{ \codim(H)\colon \text{$H\ni x$, where $H$ an affine subspace within which $f$ is constant}\}.
 \end{eqnarray*}
The \emph{certificate complexity} and \emph{parity certificate complexity} of $f$ are
\[ \Cert[f] := \max\{ \Cert[f,x] \colon x \in \F^n_2\} \quad \text{and} \quad
\Cert^\oplus[f] := \max\{ \pCert[f,x] \colon x \in \F^n_2\}
\]
The \emph{minimum certificate complexity} and \emph{minimum parity certificate complexity} of $f$ are
\[ \Cmin[f] := \min\{ \Cert[f,x] \colon x \in \F^n_2\} \quad \text{and} \quad
\pCmin[f] := \min\{ \pCert[f,x] \colon x \in \F^n_2\}
\]
\end{definition}
The complexity measures are related as follows:
\begin{fact}
The parity complexity measures satisfy $\pCmin[f] \le \mathsf{C}^\oplus[f] \le \DT^\oplus[f]$ for every Boolean function $f$.
\end{fact}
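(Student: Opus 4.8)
The plan is to prove the two inequalities separately, each directly from the definitions. For the first, $\pCmin[f] \le \Cert^\oplus[f]$, observe that both sides are extracted from the same nonempty family of numbers $\{\pCert[f,x] : x \in \F^n_2\}$ — the left side as its minimum and the right side as its maximum — so the inequality is immediate. (One should first note that each $\pCert[f,x]$ is a well-defined finite quantity: fixing all $n$ coordinates to agree with $x$ gives an affine subspace, namely $\{x\}$, of codimension $n$ on which $f$ is trivially constant, so $\pCert[f,x] \le n$.)

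For the second inequality, $\Cert^\oplus[f] \le \DT^\oplus[f]$, I would fix a parity decision tree $T$ computing $f$ of minimum depth $d = \DT^\oplus[f]$, fix an arbitrary input $x \in \F^n_2$, and produce an affine subspace $H \ni x$ of codimension at most $d$ on which $f$ is constant; since this bounds $\pCert[f,x]$ by $d$ for every $x$, taking the maximum over $x$ yields the claim. Concretely, let the root-to-leaf path traversed by $x$ query the parities indexed by $\alpha_1,\dots,\alpha_t \in \F^n_2$, with $t \le d$, obtaining answers $\langle x, \alpha_j\rangle = c_j$, and set $H := \{\, y \in \F^n_2 : \langle y,\alpha_j\rangle = c_j \text{ for all } j \in [t]\,\}$. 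Then $H$ is an affine subspace containing $x$ of codimension $\dim\mathrm{span}\{\alpha_1,\dots,\alpha_t\} \le t \le d$, and every $y \in H$ agrees with $x$ on all queried parities, hence traverses the identical path in $T$, reaches the same leaf, and so satisfies $f(y) = f(x)$. Thus $f$ is constant on $H$, as desired.

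There is no real obstacle here; the only mildly delicate points are bookkeeping. First, the codimension of $H$ may be strictly smaller than the path length $t$ when some $\alpha_j$ is a linear combination of earlier queries (so that node's branch was forced), but this only strengthens the bound. Second, one should make explicit that the path a given input follows in a PDT is determined node-by-node purely by the values of the queried parities, so any two inputs agreeing on all of $\alpha_1,\dots,\alpha_t$ follow that path in lockstep — this is exactly the definition of how a PDT evaluates, so nothing beyond it is needed.
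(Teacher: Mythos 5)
Your proof is correct, and it follows exactly the argument the paper has in mind: the first inequality is just ``minimum $\le$ maximum'' of the quantities $\pCert[f,x]$, and the second is the standard observation (sketched in the paper's introduction) that the parities queried along the root-to-leaf path followed by $x$ in an optimal parity decision tree define an affine subspace containing $x$ on which $f$ is constant, of codimension at most the depth. The paper states this Fact without proof, and your write-up, including the remark that linear dependence among queried parities only helps, fills in precisely the intended details.
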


\begin{fact}
\label{fact:cmin-supermultiplicative}
For every Boolean function $f$ and integer $k\ge 1$, we have $\Cmin[f^{\circ k}] \ge \Cmin[f]^k$.
\end{fact}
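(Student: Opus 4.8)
The plan is to induct on $k$, peeling off one layer of the composition at a time. The base case $k=1$ is immediate, since $\Cmin[f^{\circ 1}]=\Cmin[f]=\Cmin[f]^1$. For the inductive step write $f^{\circ k}=f\circ f^{\circ(k-1)}$, so that a point of its domain is a tuple $y=(y^{(1)},\ldots,y^{(n)})$ with $y^{(i)}\in\F_2^{n^{k-1}}$, and $f^{\circ k}(y)=f\bigl(f^{\circ(k-1)}(y^{(1)}),\ldots,f^{\circ(k-1)}(y^{(n)})\bigr)$.

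Let $C$ be an arbitrary subcube on which $f^{\circ k}$ is constant, with value $c$. Since the coordinate set splits into $n$ blocks of $n^{k-1}$ coordinates each, $C$ factors as a product $C=C_1\times\cdots\times C_n$ of subcubes, one per block, and $\codim(C)=\sum_{i=1}^n\codim(C_i)$. Call block $i$ \emph{dead} if $f^{\circ(k-1)}$ is constant on $C_i$, say with value $b_i\in\F_2$, and \emph{live} otherwise; let $S\subseteq[n]$ be the set of dead blocks.

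The key claim is that fixing the $S$-indexed input coordinates of $f$ to the values $(b_i)_{i\in S}$ already forces $f$ to be constant. To see this, note that on each live block $i$ we may pick points of $C_i$ witnessing either value of $f^{\circ(k-1)}$, and on each dead block any point of $C_i$ realizes $b_i$; concatenating such choices over all blocks shows that every $x\in\F_2^n$ with $x_i=b_i$ for all $i\in S$ arises as $\bigl(f^{\circ(k-1)}(y^{(1)}),\ldots,f^{\circ(k-1)}(y^{(n)})\bigr)$ for some $y\in C$, whence $f(x)=f^{\circ k}(y)=c$. So $f$ becomes constant after fixing $|S|$ of its coordinates, giving $|S|\ge\Cmin[f]$. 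Finally, each dead block $C_i$ is a subcube on which $f^{\circ(k-1)}$ is constant, so by the inductive hypothesis $\codim(C_i)\ge\Cmin[f^{\circ(k-1)}]\ge\Cmin[f]^{k-1}$; summing over $i\in S$ yields $\codim(C)\ge|S|\cdot\Cmin[f]^{k-1}\ge\Cmin[f]^k$. As $C$ was arbitrary, $\Cmin[f^{\circ k}]\ge\Cmin[f]^k$.

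I do not expect a genuine obstacle here; the only step requiring care is the realizability argument in the third paragraph (that the dead-block assignment $(b_i)_{i\in S}$, together with arbitrary values on the live blocks, is actually attained inside $C$), which is precisely where the product structure of subcubes is used and which also disposes of the degenerate cases (e.g.\ $f$ constant, or $S=[n]$) automatically.
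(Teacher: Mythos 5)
Your proof is correct. The paper itself does not prove this fact---it only cites~\cite{Tal13} for ``simple proofs of these facts''---and your argument is the standard one: factor the subcube blockwise, observe that the dead-block values form a certificate for $f$ (via the realizability of arbitrary values on live blocks), and apply the inductive hypothesis to each dead block; all steps, including the degenerate cases, check out.
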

\orange{
\begin{fact}
\label{fact:cmin-zero-supermultiplicative}
For every Boolean function $f$ and integer $k\ge 1$, we have $\Cert[f^{\circ k}, {\bf 0}] \ge \Cert[f,{\bf 0}]^k$.
\end{fact}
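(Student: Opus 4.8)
The plan is to reduce the statement to a single composition step and then iterate, exactly as in the proof of Fact~\ref{fact:cmin-supermultiplicative} in~\cite{Tal13}, the one new wrinkle being that the distinguished input $\zero$ must be tracked through every level of the recursion. The core claim I would establish is: for any Boolean functions $F\colon\F_2^n\to\F_2$ and $G\colon\F_2^m\to\F_2$ with $G(\zero)=0$,
\[ \Cert[F\circ G,\zero]\ \geq\ \Cert[F,\zero]\cdot\Cert[G,\zero]. \]
Granting this, the Fact follows by induction on $k$: writing $f^{\circ k}=f\circ f^{\circ(k-1)}$ and using that $f^{\circ(k-1)}(\zero)=0$ (in the regime of interest the hypothesis $f(\zero)=0$ of Theorem~\ref{thm:pdt-compose} propagates up the tower, since $f^{\circ j}(\zero)=f(f^{\circ(j-1)}(\zero),\dots,f^{\circ(j-1)}(\zero))=f(\zero)=0$), the core claim gives $\Cert[f^{\circ k},\zero]\geq\Cert[f,\zero]\cdot\Cert[f^{\circ(k-1)},\zero]\geq\Cert[f,\zero]^{k}$.

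To prove the core claim, I would begin with a minimum certificate for $F\circ G$ at $\zero$: a subcube $C$ of $\F_2^{mn}$ containing $\zero$ on which $F\circ G$ is constant, with $\codim(C)=\Cert[F\circ G,\zero]$. Split the $mn$ coordinates into the $n$ blocks of length $m$; since a subcube is a product over its coordinates, $C=C_1\times\cdots\times C_n$, where $C_i\ni\zero$ is the subcube induced on block $i$, and $\codim(C)=\sum_{i=1}^{n}\codim(C_i)$. Call block $i$ \emph{settled} if $G$ is constant on $C_i$ and \emph{live} otherwise, and let $L\subseteq[n]$ be the set of live blocks. For a settled block, $G$ is constant on a subcube through $\zero$, hence identically equal to $G(\zero)=0$ there, so $\codim(C_i)\geq\Cert[G,\zero]$. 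For the outer function, observe that as $y$ ranges over $C$ the vector $(G(y^{(1)}),\dots,G(y^{(n)}))$ ranges over precisely the subcube $D\subseteq\F_2^{n}$ that pins coordinate $i$ to $0$ for every settled $i$ and leaves the live coordinates free --- here we use that the blocks occupy disjoint coordinates and that a non-constant Boolean function on a subcube attains both values. Since $F\circ G$ is constant on $C$, $F$ is constant on $D$; and $\zero\in D$, so $\codim(D)=n-|L|\geq\Cert[F,\zero]$. Combining, $\Cert[F\circ G,\zero]=\sum_{i}\codim(C_i)\geq\sum_{i\notin L}\codim(C_i)\geq(n-|L|)\cdot\Cert[G,\zero]\geq\Cert[F,\zero]\cdot\Cert[G,\zero]$.

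The step I expect to require the most care is the assertion that $F$ is constant on the subcube $D$ --- equivalently, that one can drive the live blocks to an arbitrary $0/1$ pattern \emph{simultaneously and independently} while keeping every settled coordinate pinned at $0$. Disjointness of the blocks is exactly what makes this legitimate, and $G(\zero)=0$ is what forces the pinned value to be $0$, so that $\zero\in D$ and the codimension bound we extract is $\codim(D)\geq\Cert[F,\zero]$ (with base point $\zero$) rather than $\Cert[F,(G(\zero),\dots,G(\zero))]$, which would not telescope. Everything else is routine: the codimension of $C$ is additive over blocks, discarding the live blocks only weakens the inequality, and each settled block is worth at least $\Cert[G,\zero]$.
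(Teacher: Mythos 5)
Your proof is correct, and it is the standard block-decomposition argument that the paper leaves implicit (the Fact is stated without proof, with the analogous statement for $\Cmin$ deferred to~\cite{Tal13}): split a minimum subcube certificate for the composition at $\zero$ into blocks, charge every block on which the inner function is forced constant at least $\Cert[f,\zero]$, and note that the settled coordinates form a subcube certificate for the outer function at $\zero$, so there are at least $\Cert[f,\zero]$ settled blocks. One point worth making explicit: your argument uses $f(\zero)=0$ (so the settled inner values are pinned to $0$ and the outer base point is again $\zero$), a hypothesis absent from the Fact as stated. This is not a defect of your proof --- some such assumption is genuinely needed: for $f=\mathsf{NOR}_n$ one has $\Cert[f,\zero]=n$ but $\Cert[f\circ f,\zero]\le n$, since fixing a single block to $\zero$ already certifies the output --- and the paper only invokes the Fact in Remark~\ref{remark:lb-on-PDT} and Theorem~\ref{thm:lb-on-PDT}, where $f(\zero)=0$ is assumed, so you prove exactly the statement the paper needs.
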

}

Let $\calB = \{\alpha_1,\ldots,\alpha_d\} \sse \F^n_2$ be a linearly independent set of vectors, and $\sigma : \calB \to\F_2$.
We write $A[\calB,\sigma]$ to denote the affine subspace  $$A[\calB,\sigma] := \{ x\in \F_2^n \colon \la x, \alpha_i \ra = \sigma(\alpha_i) \text{ for all $1\le i \le d$}\} $$ of co-dimension $d$.   Note that $A[\calB,\sigma]$ is a linear subspace if $\sigma$ is the constant $0$ function.

%For convenience, sometimes we use $x + V$ to denote  an affine subspace $H$  of co-dimension $d$, where $x \in \F_2^n$ is a vector in $H$, and $V$ is a linear subspace over $\F_2^n$ of co-dimension $d$. We also use $\text{span}(\calB)$ to denote the linear span of $\calB = \{\alpha_1,\ldots,\alpha_d\} \sse \F^n_2$.
\medskip

We say that coordinate $i\in [n]$ is \emph{relevant} in an affine subspace $H$ if there is an $x\in \F^n_2$ such that $x \in H$ but $x + \be_i \notin H$, and if not we say that $i$ is \emph{irrelevant}.

\begin{proposition}
\label{prop:cmin-junta}
Let $f:\F^n_2\to\F_2$ be a Boolean function and $H\sse \F^n_2$ be an affine subspace on which $f$ is constant.  Then $\Cmin[f]$ is at most the number of relevant coordinates in $H$.
\end{proposition}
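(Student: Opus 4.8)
The plan is to exhibit directly a point $x^\star$ and a subcube $C \ni x^\star$ of codimension at most the number of relevant coordinates, on which $f$ is constant. Write $H = A[\calB,\sigma]$ for a linearly independent $\calB = \{\alpha_1,\dots,\alpha_d\}$ and $\sigma:\calB\to\F_2$, and let $V = \{v\in\F_2^n : \la v,\alpha_j\ra = 0 \text{ for all } j\}$ be the direction space, so that $H = x+V$ for any $x\in H$. The first step is to pin down the relevant coordinates: a coordinate $i$ is irrelevant exactly when $x+\be_i\in H$ for every $x\in H$, which (since $H$ is a coset of $V$) happens exactly when $\be_i\in V$, i.e.\ exactly when $(\alpha_j)_i = 0$ for every $j$. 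Hence the set $R$ of relevant coordinates of $H$ is precisely $\bigcup_{j} \supp(\alpha_j)$, and I set $r := |R|$, the number of relevant coordinates.

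Next I would fix exactly those coordinates. Since $H$ is nonempty, pick any $x^\star\in H$, and let $C$ be the subcube of all $y\in\F_2^n$ that agree with $x^\star$ on every coordinate in $R$; by construction $x^\star\in C$ and $\codim(C) = |R| = r$. The key claim is $C\sse H$: for any $y\in C$ and any $j$, since $\supp(\alpha_j)\sse R$ and $y$ agrees with $x^\star$ throughout $R$, we get $\la y,\alpha_j\ra = \sum_{i\in\supp(\alpha_j)} y_i = \sum_{i\in\supp(\alpha_j)} x^\star_i = \la x^\star,\alpha_j\ra = \sigma(\alpha_j)$, so $y$ satisfies all the defining constraints of $H$.

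Finally, since $f$ is constant on $H$ and $C\sse H$, the function $f$ is constant on the subcube $C$; as $x^\star\in C$ and $\codim(C) = r$, this gives $\Cert[f,x^\star]\le r$, and therefore $\Cmin[f] = \min_{x} \Cert[f,x]\le r$, as desired.

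There is no substantial obstacle here; the one point requiring care is the characterization of the relevant coordinates as $\bigcup_j\supp(\alpha_j)$, since this is exactly what guarantees that the codimension-$r$ subcube obtained by fixing the relevant coordinates (and only those) lies inside $H$ — fixing fewer coordinates, or coordinates lying outside the supports of the constraint vectors, could let a point escape $H$ and thus escape the region where $f$ is known to be constant.
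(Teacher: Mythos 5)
Your proof is correct and follows essentially the same route as the paper: fix the relevant coordinates of $H$ to agree with a point $x^\star\in H$, observe the resulting subcube lies inside $H$, and conclude $\Cmin[f]\le \Cert[f,x^\star]\le r$. The only cosmetic difference is that you justify the containment $C\sse H$ by characterizing the relevant coordinates as $\bigcup_j\supp(\alpha_j)$ and checking the constraints directly, whereas the paper notes that any $y\in C$ differs from $x^\star$ only in irrelevant coordinates and hence stays in $H$; both verifications are sound.
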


\begin{proof}
Without loss of generality, suppose coordinates $i\in [k]$ are relevant in $H$ and the others are irrelevant.  Fix an arbitrary $x\in H$ and consider
\[ C = \{ y \in \F^n_2 \colon y_i = x_i \text{ for all $i\in [k]$}\}, \]
Note that $C\sse H$, since any $y\in C$ differs from $x$ only on the irrelevant coordinates of $H$.  Therefore $C$ is a subcube of co-dimension $k$ on which $f$ is constant, and so $\Cmin[f] \le \Cert[f,x] \le k$.
\end{proof}

\section{Supermultiplicativity of parity certificate complexity}\label{sec:hardproof}

In this section, we prove Theorem~\ref{thm:Cmin-compose}.

\begin{reptheorem}{thm:Cmin-compose}
Let $f:\F^n_2\to\F_2$ be a Boolean function with $\pCmin[f] \ge 2$.  Then
\[ \pCmin[f^{\circ k}] \ge \frac{\Cmin[f]^k-\Cmin[f]}{\Cmin[f]-1}+ \pCmin[f]= \Omega(\Cmin[f]^k). \]
\end{reptheorem}

Our proof uses the following strategy: given an affine subspace $H$ on which $f^{\circ k}$ is constant, we generate an affine subspace $H^*$ on which $f^{\circ (k-1)}$ is constant.  We do this by removing each $f$ on the ``outer layer'' of $f^{\circ k}$ one-by-one.  Our key step is in showing that every time we remove an $f$ on the outer layer, if that $f$ was relevant to $H$, then removing it reduces the codimension of $H$ by at least one.  This step we formalize as Proposition~\ref{prop:induction-step} below.

\begin{proposition}
\label{prop:induction-step}
Let $f^*:\F^n_2 \times \F_2\to\F_2$ and $g: \F^k_2\to\F_2$ be Boolean functions where $\pCmin[g]\ge 2$.  Define $f: \F^n_2\times \F^k_2 \to\F_2$ to be:
\[ f(x,y) = f^*(x,g(y)). \]
For any affine subspace $H\sse \F^n_2\times \F^k_2$ on which $f$ is constant, there exists an affine subspace $H^*\sse \F^n_2\times \F_2$ on which $f^*$ is constant such either:
\begin{enumerate}
\itemsep -.5pt
\item $\codim(H^*) \le \codim(H) -1$, or\label{item:DELcase}
\item the $(n+1)$-st coordinate is irrelevant in $H^*$ and $\codim(H^*) \le \codim(H)$. 
\end{enumerate}
\orange{Furthermore, among the first $n$ $x$-coordinates, any coordinate that was irrelevant in $H$ remains irrelevant in $H^*$ as well.}
\end{proposition}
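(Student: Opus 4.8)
The plan is to analyze the affine subspace $H \subseteq \F_2^n \times \F_2^k$ according to how it constrains the $y$-block of coordinates, and to distinguish two regimes based on whether those constraints force $g(y)$ to be constant on $H$. Write a generic constraint defining $H$ as $\langle (x,y), (\alpha,\beta)\rangle = c$ with $\alpha \in \F_2^n$, $\beta \in \F_2^k$; the constraints with $\beta = \zero$ are ``pure $x$-constraints'' and the rest genuinely involve $y$. Let $d = \codim(H)$ and let $r$ be the dimension of the span of the $\beta$-parts of all the constraints, so $H$ projects onto an affine subspace $B \subseteq \F_2^k$ of codimension exactly $r$ (the $x$-coordinates are free to absorb any single linear functional of $y$, so the projection to the $y$-block is onto $B$).

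\textbf{Case 1: $g$ is not constant on $B$.} Since $\pCmin[g] \ge 2$ and $B$ is an affine subspace of codimension $r$ on which $g$ is non-constant, fixing $g$ to a constant on $B$ requires codimension $\ge 2$, i.e.\ $\codim(B) = r$ is not enough; more to the point, $g$ takes both values on $B$. Pick $y^{(0)}, y^{(1)} \in B$ with $g(y^{(b)}) = b$. Now define $H^* \subseteq \F_2^n \times \F_2$ by substituting: a point $(x, z)$ lies in $H^*$ iff there is some $y \in B$ with $g(y) = z$ and $(x,y)$ satisfying all the constraints of $H$. I need to check $H^*$ is affine and bound its codimension. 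The cleaner route: take the constraints of $H$, and among the $y$-involving ones use row reduction over the $\beta$-parts to get $r$ of them in a ``staircase'' form, each introducing a new $y$-pivot variable; the remaining $d - r$ constraints are pure $x$-constraints (after using the pivots to eliminate $y$-dependence, which only modifies the $x$- and constant-parts). These $d-r$ pure-$x$ constraints cut out an affine subspace of $\F_2^n$; adjoining no constraint on the new variable $z$ gives $H^* $ of codimension $d - r \le d - 0$. But I want codimension $\le d - 1$ when case~\ref{item:DELcase} is to hold — so I should only land in case~\ref{item:DELcase} here if $r \ge 1$, and when $r = 0$ (no $y$-constraints at all) I fall into case~\ref{item:nocase}: the $z$-coordinate is irrelevant because $f^*$ restricted to $H^*$ never sees a constrained $g$-value... wait, that is not right either. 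Let me reconsider: if $r=0$ then $g$ is unconstrained, $g(y)$ ranges over all of $\F_2$, and since $f = f^*(x, g(y))$ is constant on $H$, $f^*(x,\cdot)$ must be constant for every $x$ in the $x$-projection of $H$; so setting $H^* = \{(x,z) : x \in \pi_x(H), z \in \F_2\}$ works, it has codimension $= \codim(\pi_x(H)) \le d$, $f^*$ is constant on it, and the $(n+1)$-st coordinate is manifestly irrelevant — that is exactly case~\ref{item:nocase}. If $r \ge 1$, the staircase elimination produces $H^*$ of codimension $d - r \le d - 1$, case~\ref{item:DELcase}.

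\textbf{Case 2: $g$ is constant on $B$, say $g \equiv b$ on $B$.} Then on $H$, $f = f^*(x, b)$, so $f^*(\cdot, b)$ is constant on $\pi_x(H)$. Set $H^* = \pi_x(H) \times \{b\}$; it has codimension $\codim(\pi_x(H)) + 1$. Here I need $\codim(\pi_x(H)) + 1 \le d$, i.e.\ $\codim(\pi_x(H)) \le d - 1$; equivalently the $x$-projection must lose at least one dimension of codimension relative to $H$. This holds precisely because $g$ being forced constant on $B$ with $\pCmin[g] \ge 2$ means at least two of the defining constraints of $H$ ``genuinely involved $y$'' (one is not enough to pin down $g$), and after projecting to $x$ we lose at least... hmm, this is the delicate counting point. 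Actually projecting $H$ to the $x$-block: $\codim(\pi_x H) = d - (\text{number of independent constraints expressible as $y$-only after elimination})$, and by $\pCmin[g] \ge 2$ we need $\ge 2$ such, giving $\codim(\pi_x H) \le d - 2$, hence $\codim(H^*) \le d - 1$, comfortably in case~\ref{item:DELcase}. I expect \textbf{this counting step — showing that forcing $g$ constant costs at least two $y$-constraints, and translating that into a codimension drop on the $x$-projection — is the main obstacle}, since one has to be careful that a single constraint mixing $x$ and $y$ does not secretly count as constraining $g$.

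\textbf{The ``furthermore'' clause.} In every case $H^*$ is obtained from $H$ by a combination of: projecting out the $y$-block, performing Gaussian elimination among the constraints (which replaces constraints by $\F_2$-linear combinations of themselves), possibly appending the single constraint $z = b$ (which does not involve the $x$-coordinates at all), and possibly dropping constraints. If coordinate $i \le n$ was irrelevant in $H$, then $H$ is a union of lines in direction $\be_i$, equivalently no defining constraint of $H$ has a nonzero $i$-th entry in its $\alpha$-part (after taking a minimal defining system — and irrelevance is exactly the statement that $\be_i$ lies in the linear part of $H$, so $\langle \be_i, \alpha \rangle = 0$ for all constraints). Linear combinations of constraints preserve this, projecting to $x$ preserves it, and the appended constraint $z = b$ has zero $i$-th entry trivially; dropping constraints only helps. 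Hence $\be_i$ still lies in the linear part of $H^*$, so $i$ remains irrelevant in $H^*$. I would phrase this as: maintain throughout the construction the invariant that the linear part of the current subspace contains $\be_i$ for every $i$ that was irrelevant in $H$, and observe each elementary operation preserves the invariant.
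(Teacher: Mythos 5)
Your case split---whether $g$ is constant on the $y$-projection $\pi_y(H)$---is not the right dichotomy, and your Case 1 construction is incorrect. The trouble is mixed constraints $\la x,\alpha\ra \oplus \la y,\beta\ra = c$ with both parts nonzero: under such constraints $g$ can be non-constant on $\pi_y(H)$ and yet constant on each fiber $\{y : (x,y)\in H\}$, with the constant depending on $x$. Then ``dropping the $y$-involving constraints and leaving $z$ free,'' as you do, produces $\pi_x(H)\times\F_2$, on which $f^*$ need not be constant (and indeed you never verify constancy in Case 1 when $r\ge 1$). Concretely, take $g=\mathsf{MAJ}(y_1,y_2,y_3)$ (so $\pCmin[g]=2$), $f^*(x,z)=z\oplus \mathsf{MAJ}(x_1,x_2,x_3)$, and $H=\{(x,y): x_i= y_i \text{ for } i=1,2,3\}$ of codimension $3$. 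Then $f\equiv 0$ on $H$ and $g$ is non-constant on $\pi_y(H)=\F^k_2$, so you are in your Case 1 with no pure-$x$ constraints surviving; your $H^*$ is all of $\F^n_2\times\F_2$, but $f^*$ is not constant there (nor on any set of the form $S\times\F_2$). A valid $H^*$ here must tie $z$ to a parity of the $x$-variables, e.g.\ $H^*=\{(x,z): x_1\oplus x_2=1,\ z=x_3\}$, which has codimension $2=\codim(H)-1$.

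This is exactly the content your proposal is missing and which constitutes the bulk of the paper's proof. After normalizing the constraints via Lemma~\ref{lem:nice-basis} into pure-$x$, pure-$y$, and mixed ones of the form $x_i\oplus y_i=\sigma(\be_i,\be_i)$, the hard regime is when mixed constraints are present: one either finds a subcube fixing the pinned $y$-coordinates on which $g$ stays non-constant (yielding the second conclusion, with the $(n+1)$-st coordinate irrelevant), or $g$ is a junta on those coordinates with at least $3$ relevant variables (this is where $\pCmin[g]\ge 2$ is really used), in which case Lemma~\ref{lem:linear-restriction} gives a codimension-$1$ affine restriction on which $g$ is an affine function of $y_1,y_2,y_3$, and $H^*$ is built by setting $z$ equal to the corresponding affine function of $x_1,x_2,x_3$, saving one constraint. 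Nothing in your construction ever sets $z$ to a nontrivial parity of $x$-coordinates, so it cannot handle this case. Your Case 2 is essentially salvageable (note that $\codim(\pi_y(H))$ equals $\codim(H)$ minus the dimension of the span of the $\alpha$-parts, not of the $\beta$-parts, but the bound $\codim(\pi_x(H))\le\codim(H)-2$ does follow from $\pCmin[g]\ge 2$), and your argument for the ``furthermore'' clause is fine; the gap is Case 1, which is where the proposition's difficulty lives.
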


\begin{proof}[Proof of Theorem \ref{thm:Cmin-compose} assuming Proposition \ref{prop:induction-step}]
Let $k\ge 2$ and consider $f^{\circ k} = f^{\circ k-1}(f,\ldots,f)$.  Let $H\sse \F^{n^k}_2$ be an affine subspace of minimum co-dimension on which $f^{\circ k}$ is constant, and so $\codim(H) = \pCmin[f^{\circ k}]$.  Applying Proposition \ref{prop:induction-step} to each of the $n^{k-1}$ base functions $f$ that $f^{\circ k-1}$ is composed with, we get an affine subspace $H^*\sse \F^{n^{k-1}}_2$ on which $f^{\circ k-1}$ is constant.  Note that the first condition of Proposition \ref{prop:induction-step} must hold at least $\Cmin[f^{\circ k-1}]$ times in this process of deriving $H^*$ from $H$, since there are at least $\Cmin[f^{\circ k-1}]$ relevant variables in $H^*$ by Proposition \ref{prop:cmin-junta}.  Therefore
\begin{eqnarray*}
\pCmin[f^{\circ k-1}] &\le& \codim(H^*) \\
&\le&  \codim(H)-\Cmin[f^{\circ k-1}] \\
&\le& \pCmin[f^{\circ k}] -\Cmin[f]^{k-1},
\end{eqnarray*}
where we have used the supermultiplicativity of $\Cmin$ (Fact \ref{fact:cmin-supermultiplicative}) for the final inequality.  Solving this recurrence completes the proof.
\end{proof}

\subsection{Proof of Proposition \ref{prop:induction-step}}

We begin with a pair of technical lemmas.  

\begin{lemma}
\label{lem:linear-restriction}
Let $g : \F^3_2 \to\F_2$.  There exists an affine subspace $H\sse \F^k_2$ of codimension at most one such that $g(x) = a_0\oplus a_1 x_1 \oplus a_2 x_1 \oplus a_3 x_3$ for all $x\in H$, where $a_0,a_1,a_2,a_3 \in \F_2$.
\end{lemma}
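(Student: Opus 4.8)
The plan is to recast Lemma~\ref{lem:linear-restriction} (read with the evident typos corrected: the ambient space is $\F_2^3$ and the target polynomial is $a_0\oplus a_1x_1\oplus a_2x_2\oplus a_3x_3$) as a statement about a single linear functional, at which point it reduces to a dimension count. The key observation is that it suffices to exhibit a \emph{non-constant} affine function $\phi\colon\F_2^3\to\F_2$ with $\bigoplus_{x\in\F_2^3}g(x)\,\phi(x)=0$: for such a $\phi$ the set $H:=\phi^{-1}(1)$ is an affine subspace of codimension exactly $1$, and the displayed condition says precisely that $\bigoplus_{x\in H}g(x)=0$. I would then invoke the standard ``parallelogram identity'': parametrising the two-dimensional affine subspace $H$ by free coordinates $t_1,t_2$ and expanding $g|_H=e_0\oplus e_1t_1\oplus e_2t_2\oplus e_{12}t_1t_2$ in algebraic normal form, one computes $\bigoplus_{x\in H}g(x)=e_{12}$; hence $e_{12}=0$, so $g|_H$ has degree at most $1$ and extends to a global affine function $a_0\oplus a_1x_1\oplus a_2x_2\oplus a_3x_3$ agreeing with $g$ on $H$.

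To produce such a $\phi$, I would consider the map $T$ sending an affine function $\phi\colon\F_2^3\to\F_2$ to $T(\phi):=\bigoplus_{x\in\F_2^3}g(x)\,\phi(x)\in\F_2$. Since $\phi\mapsto g\phi$ is additive in $\phi$, $T$ is an $\F_2$-linear functional on the $4$-dimensional space of affine functions $\F_2^3\to\F_2$ (spanned by $1,x_1,x_2,x_3$). Therefore $\ker T$ has dimension at least $3$, i.e.\ at least $8$ elements, whereas the constant functions $\phi\equiv 0$ and $\phi\equiv 1$ comprise only a $2$-element subspace; hence $\ker T$ contains a non-constant affine $\phi$, which is exactly what we need. (If $g$ is already affine one may instead simply take $H=\F_2^3$ of codimension $0$; the argument above also covers this case automatically.)

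There is essentially no real obstacle here, since the claim is a finite statement; the only ``work'' is choosing the right reformulation so that it becomes transparent. For completeness I would also mention a fully elementary alternative that avoids any coding-theoretic framing: case on the degree of the ANF of $g$. If $\deg g\le 1$, take $H=\F_2^3$. If $\deg g=3$, then $\bigoplus_x g(x)$ equals the coefficient of $x_1x_2x_3$, which is $1$, so $\bigoplus_{x_2,x_3}g(0,x_2,x_3)$ and $\bigoplus_{x_2,x_3}g(1,x_2,x_3)$ differ; picking $b$ with $\bigoplus_{x_2,x_3}g(b,x_2,x_3)=0$ makes the restriction $g|_{x_1=b}$ a bivariate function with vanishing top ANF coefficient, hence affine. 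Finally, if $\deg g=2$, its nonzero homogeneous quadratic part is, up to permuting coordinates, one of $x_1x_2$, $x_1x_2\oplus x_1x_3=x_1(x_2\oplus x_3)$, or $x_1x_2\oplus x_1x_3\oplus x_2x_3$; restricting to $x_1=0$, to $x_2\oplus x_3=0$, or to $x_1\oplus x_2=0$ respectively reduces this part to $0$, $0$, or $x_1$, in each case leaving an affine function. Either route gives the lemma.
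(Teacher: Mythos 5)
Your proof is correct, but it is genuinely different from the paper's. The paper argues by contradiction-flavored classification: if no codimension-one subcube restriction of $g$ were affine, every such two-variable restriction would have $\F_2$-degree $2$ (i.e.\ be an AND/OR up to negations), which forces $g$ to be isomorphic to $\mathsf{MAJ}$ or $\mathsf{NAE}$, and both of these admit depth-$2$ parity decision trees, hence an affine restriction on a codimension-one affine subspace. Your first route replaces this classification step with a dimension count: the map $\phi \mapsto \bigoplus_x g(x)\phi(x)$ is a linear functional on the $4$-dimensional space of affine functions on $\F_2^3$, so its kernel (of size at least $8$) contains a non-constant $\phi$, and the resulting identity $\bigoplus_{x \in \phi^{-1}(1)} g(x) = 0$ is exactly the vanishing of the top ANF coefficient of $g$ restricted to the $2$-dimensional affine subspace $\phi^{-1}(1)$, which is therefore affine in two of the coordinates $x_1,x_2,x_3$ and hence of the required form. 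Your second route (case analysis on the degree of the ANF of $g$, killing the quadratic part by restricting to $x_i = b$, $x_j \oplus x_k = 0$, or $x_i \oplus x_j = 0$ as appropriate) is likewise complete. What your argument buys is rigor and self-containment: the paper's ``it follows that $f$ must be isomorphic to $\mathsf{MAJ}$ or $\mathsf{NAE}$'' is an unproved classification claim (and its preceding sentence is stated loosely, ignoring negations of inputs/outputs), whereas your counting argument needs no enumeration of cases at all; the paper's approach, in exchange, is shorter and makes the two extremal examples explicit. One caveat: your dimension-count route is special to $n=3$ inputs (in higher dimension the vanishing sum only kills the top ANF coefficient, not all nonlinear terms), but that is all the lemma requires.
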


\begin{proof}
Since the only arity-two Boolean functions with $\F_2$-degree two are ${\sf AND}$ (two-bit conjunction) and ${\sf OR}_2$ (two-bit conjunction), we may assume that the restriction of $f$ to any subcube of co-dimension one yields either ${\sf AND}_2$ or ${\sf OR}_2$.  It follows that $f$ must be isomorphic to either
\begin{eqnarray*}
{\sf MAJ}(x_1,x_2,x_3) &=& \text{1 iff at least two input bits are 1} \\
{\sf NAE}(x_1,x_2,x_3) &=& \text{1 iff $x_1 \ne x_2$ or $x_2 \ne x_3$},
\end{eqnarray*}
both of which satisfy the lemma since they are computed by parity decision trees of depth $2$.
 \end{proof}

\begin{lemma}
\label{lem:nice-basis}
Let $H$ be an affine subspace of $\F^n_2 \times \F^k_2$.  There exists an invertible linear transformation $L = L_\ell \otimes L_r$ on $\F^n_2\times F^k_2$,  $\calB^* \sse \F^n_2\times \F^k_2$, and $\sigma^*:\calB^*\to \F_2$ such that
$A[\calB^*,\sigma] = \{ Lx \colon x \in H\}$, and $\calB^*$ can be partitioned into $\calB^* = \calB^*_x \sqcup \calB^*_y\sqcup \calB^*_{x,y}$, where
\begin{itemize}
\itemsep -.5pt
\item $\calB^*_{x,y} = \{ (\be_i,\be_i) \colon 1\le i \le t \} $
\item $\calB^*_x = \{(\be_j,\zero)  \colon t+1\le j \le t'\} $
\item $\calB^*_y = \{(\zero,\be_k) \colon t+1 \le k \le t''\},$
\end{itemize}
and $t + (t'-t) + (t''-t) = \codim(H)$.
\end{lemma}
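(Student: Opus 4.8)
The plan is to pass to the dual picture. Write $H = c_0 + V$ with $V\sse \F_2^n\times\F_2^k$ a linear subspace, and set $W := V^{\perp}$, so that $\dim W = \codim(H)$ and $H = \{z : \la z,w\ra = \la c_0,w\ra \text{ for all } w\in W\}$; thus any basis $\calB$ of $W$, together with the assignment $w\mapsto \la c_0,w\ra$, exhibits $H$ in the form $A[\calB,\sigma]$. A block-diagonal change of coordinates $L = L_\ell\oplus L_r$ on the point space (acting as $L_\ell$ on the first $n$ coordinates and $L_r$ on the last $k$) sends $H$ to an affine subspace whose dual space is $(L_\ell^{-1}\oplus L_r^{-1})^{\mathsf T}W$. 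Since $L_\ell,L_r$ range over all invertible maps, it therefore suffices to produce invertible $M_\ell$ on $\F_2^n$ and $M_r$ on $\F_2^k$ for which $(M_\ell\oplus M_r)W$ has a basis of the shape demanded by the lemma; one then reads off $L_\ell,L_r$ (as inverse-transposes of $M_\ell,M_r$), $\calB^*$ (the image basis), and $\sigma^*$ (via $\sigma^*(\beta) := \la Lc_0,\beta\ra$, which is well defined since $\calB^*$ annihilates the direction space $LV$).

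First I would decompose $W$ according to how each of its vectors is split between the $x$-block and the $y$-block. Let $W_x := \{(a,\zero)\in W\}$ and $W_y := \{(\zero,b)\in W\}$; these meet only in $\zero$, so choose a complement $U$ of $W_x\oplus W_y$ inside $W$, giving $W = W_x\oplus W_y\oplus U$. Let $\rho_x,\rho_y$ be the projections onto the $x$- and $y$-coordinates. The two facts I need are: (a) $\rho_x$ is injective on $U$, since any $u\in U$ with $\rho_x(u)=\zero$ lies in $W_y$, hence in $U\cap W_y=\{\zero\}$ (symmetrically for $\rho_y$); and (b) if $u_1,\dots,u_t$ is a basis of $U$ and $(p_1,\zero),\dots,(p_r,\zero)$ a basis of $W_x$, then $\rho_x(u_1),\dots,\rho_x(u_t),p_1,\dots,p_r$ are linearly independent in $\F_2^n$ --- a dependence among them would produce an element of $W$ with vanishing $x$-part, i.e.\ an element of $W_y$, written as a sum of a $U$-component and a $W_x$-component, which directness of the decomposition forces to be $\zero$, collapsing the dependence to the trivial one. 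Symmetrically, with $(\zero,q_1),\dots,(\zero,q_s)$ a basis of $W_y$, the vectors $\rho_y(u_1),\dots,\rho_y(u_t),q_1,\dots,q_s$ are linearly independent in $\F_2^k$.

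Given (a)--(b), I would choose $M_\ell$ to be any invertible map on $\F_2^n$ with $M_\ell(\rho_x(u_i)) = \be_i$ for $1\le i\le t$ and $M_\ell(p_j) = \be_{t+j}$ for $1\le j\le r$ (possible: the source vectors are independent by (b), the targets are independent, so extend both to bases), and symmetrically $M_r(\rho_y(u_i)) = \be_i$, $M_r(q_j) = \be_{t+j}$. Then $M_\ell\oplus M_r$ carries the basis $\{u_i\}_{i\le t}\cup\{(p_j,\zero)\}_{j\le r}\cup\{(\zero,q_j)\}_{j\le s}$ of $W$ exactly onto $\{(\be_i,\be_i):1\le i\le t\}\cup\{(\be_j,\zero):t<j\le t+r\}\cup\{(\zero,\be_k):t<k\le t+s\}$. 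Setting $t' := t+r$ and $t'' := t+s$, this is precisely the partition $\calB^* = \calB^*_{x,y}\sqcup\calB^*_x\sqcup\calB^*_y$ claimed, with $|\calB^*| = t+r+s = t + (t'-t) + (t''-t) = \codim(H)$.

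I do not expect a genuine obstacle here: the lemma is in essence a simultaneous normal form for the two coordinate blocks of a subspace of $\F_2^n\times\F_2^k$, and its only real content is the independence claim (b), which is a short argument from the directness of $W = W_x\oplus W_y\oplus U$. The one point to handle carefully is the bookkeeping of transposes and inverses when moving between the point space and its dual (equivalently, between "changing coordinates on $x,y$" and "changing the constraint vectors"); running the whole argument on the dual side $W$ keeps this under control.
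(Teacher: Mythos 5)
Your argument is correct, and it reaches the same normal form as the paper by a somewhat different route. The paper starts from an arbitrary presentation $H = A[\calB,\sigma]$ and massages the given constraint set: if the multiset of nonzero left parts $\calB_\ell$ (or right parts $\calB_r$) has a linear dependence, it replaces one constraint by the corresponding sum, which has zero left part, and argues this replacement process terminates with $\calB_\ell,\calB_r$ both linearly independent; only then does it define $L_\ell, L_r$ sending these to standard basis vectors. You instead work intrinsically with the dual space $W = V^{\perp}$ of the direction space, decompose it as $W = W_x \oplus W_y \oplus U$ (pure-$x$ constraints, pure-$y$ constraints, a complement of mixed ones), and prove the needed independence statement (your claim (b), that the $x$-projections of a basis of $U$ together with a basis of $W_x$ are independent, and symmetrically for $y$) directly from the directness of the decomposition. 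This replaces the paper's iterative elimination-and-termination argument with a one-shot structural one, which is arguably cleaner; it also makes explicit the inverse-transpose bookkeeping between acting on points and acting on constraint vectors, a point the paper elides (harmlessly, since block-diagonal invertible maps are closed under inverse-transpose). The paper's version, on the other hand, stays entirely at the level of the given certificate $\calB$ and needs no mention of $V^{\perp}$ or duality. Both yield the same conclusion, including the count $t + (t'-t) + (t''-t) = \codim(H)$, since in your construction the adapted basis of $W$ has exactly $\dim W = \codim(H)$ elements.
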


\begin{proof}
Let $H= A[\calB,\sigma]$, where $\calB = \{ (\alpha_1,\beta_1),\ldots,(\alpha_d,\beta_d)\} \sse \F^n_2\times \F^k_2$.  First, we claim that we may assume without loss of generality that the \red{multisets} of vectors
\begin{eqnarray*}
\calB_\ell &=& \{ \alpha \in \F^n_2 - \{\zero\} \colon (\alpha,\beta) \in \calB \text{ for some $\beta\in \F^k_2$}\} \\
\calB_r &=& \{ \beta \in \F^k_2 - \{\zero\} \colon (\alpha,\beta) \in \calB \text{ for some $\alpha\in \F^n_2$}\}
\end{eqnarray*}
are each linearly independent.  Indeed, suppose there exists $\alpha_{i_1},\ldots,\alpha_{i_k} \in \calB_\ell$ such that $\alpha_{i_1} + \ldots + \alpha_{i_k} = \zero$ (an identical argument applies for $\calB_r$).   Since $\calB$ is linearly independent, there must exist some $j\in [k]$ such that $\beta_{i_j} \ne \zero$.  We note that $H$ remains the same if we replace $(\alpha_{i_j},\beta_{i_j})$ with \red{$(\zero,\beta_{i_1}+\ldots + \beta_{i_k})$, and if we set $\sigma^*(\zero, \beta_{i_1} + \ldots + \beta_{i_k}) = \sigma(\alpha_{i_1},\beta_{i_1}) + \ldots + \sigma(\alpha_{i_k},\beta_{i_k})$.  In addition, $\beta_{i_1} + \ldots + \beta_{i_k}$ can be written as a linear combination of the other elements in $\calB_r$ if and only if $\beta_{i_j}$ can.  Therefore, the number of elements in $\calB_\ell \cup \calB_r$ that can be written as a linear combination of the others  decreases by one}. Performing this replacement iteratively, the process must eventually terminate with $\calB_\ell$ and $\calB_r$ both being linearly independent.  \medskip

When $\calB_\ell$ and $\calB_r$ are linearly independent, it is straightforward to define invertible linear transformations $L_\ell$ on $\F^n_2$ mapping $\calB_\ell$ to $\{ \be_1,\ldots,\be_{|\calB_\ell|}\}$ and $L_r$ on $\F^k_2$ mapping $\calB_r$ to $\{\be_1,\ldots,\be_{|\calB_r|}\}$ accordingly, so that the invertible linear transformation $L$ on $\F^n_2\times \F^k_2$ given by by $L(x,y) = (L_\ell x, L_r y)$ maps $\calB$ into $\calB^*$ satisfying the conditions of the lemma.
\end{proof}

Now we prove Proposition~\ref{prop:induction-step}.

\begin{proof}[Proof of Proposition \ref{prop:induction-step}]
Let the input variables of $f^*:\F^n_2 \times \F_2\to\F_2$ be $x_1,\ldots,x_n\in \F^n_2$ and $z \in \F_2$, and the input variables of $g: \F^k_2\to\F_2$ be $y_1,\ldots,y_k\in\F^k_2$.
By Lemma \ref{lem:nice-basis}, we may assume that $H = A[\calB,\sigma]$ where $\calB = \calB_x \sqcup \calB_y\sqcup \calB_{x,y}$ and
\begin{itemize}
\itemsep -.5pt
\item $\calB_{x,y} = \{ (\be_i,\be_i) \colon 1\le i \le t \} $
\item $\calB_x = \{(\be_j,\zero)  \colon t+1\le j \le t'\} $
\item $\calB_y = \{(\zero,\be_k) \colon t+1 \le k \le t''\},$
\end{itemize}
and $t + (t'-t) + (t''-t) = \codim(H)$.  Let
\begin{eqnarray*}
 C_x &=& \{ x \in \F^n_2 \colon x_j = \sigma(\be_j,\zero) \text{ for all $t+1\le j\le t'$}\} \\
 C_y &=& \{ y \in \F^k_2 \colon y_k = \sigma(\zero,\be_k) \text{ for all $t+1\le k \le t''$} \} \end{eqnarray*}
be subcubes of $\F^n_2$ and $\F^k_2$ of co-dimension $|\calB_x|$ and $|\calB_y|$ respectively.  Note that $H$ comprises exactly the pairs $(x,y) \in C_x \times C_y$ satisfying $x_i \oplus y_i = \sigma(\be_i,\be_i)$ for all $1\le i \le t$.

\subsubsection{Case 1: $|\calB_y | \ge 1$ and $|\calB_{x,y}| = 0$.}

First suppose there exists $b\in \F_2$ such that $g(y) = b$ for all $y\in C_y$; by our assumption on $g$ we have $|\calB_y| \ge \pCmin[g] \ge 2$.  We claim that $f^*$ is constant on $$H^* = \{ (x,z)\colon x \in C_x \text{ and } z = b\}$$ of co-dimension $|\calB_x| + 1= (|\calB|-|\calB_y|) +1 \le |\calB| - 1.$
Indeed, suppose there exists $(x,b), (x',b) \in H^*$ such that $f^*(x,b) \neq f^*(x',b)$.  Then for any $y\in C_y$ we have $(x,y),(x',y) \in H$ and $f(x,y)\ne f(x',y)$.\medskip

   On the other hand, suppose $g$ is not constant on $C_y$.  In this case we claim that $f^*$ is constant on $H^* = \{(x,z)\colon x\in C_x\}$ of co-dimension $|\calB_x| = |\calB|-|\calB_y| \le |\calB|-1$.  Again, suppose there exists $(x,z),(x',z')\in H^*$ such that $f^*(x,z) \neq f^*(x',z')$.   Selecting $y,y'\in C_y$ such that $g(y)=z$ and $g(y') = z'$, we get $(x,y),(x',y') \in H$ such that $f(x,y) \neq f(x,y')$.

\subsubsection{Case 2: $|\calB_y| \ge 1$ and $|\calB_{x,y}| \ge 1$.}

%Let $\calB_{x,y} = \{(X_1,Y_1),\ldots,(X_t,Y_t)\}$ where $t \ge 1$ and $X_i,Y_i \neq \mathbf{0}$ for all $i \in [t]$.  For notational clarity, we write $\la x,X_i \ra$ and $\la y,Y_i\ra$ as shorthand for $\la (x,\zero),(X_i,\zero)\ra$ and $\la (\zero,y),(\zero,Y_i)\ra$ respectively.\medskip

We define subcubes $C'_x \sse C_x$ and $C'_y \sse C_y$:
 \begin{eqnarray*}
 C'_x &=& \{ x \in C_x \colon x_i = 0 \text{ for all $1\le i \le t-1$}\} \\
 C'_y &=& \{ y \in C_y \colon y_i = \sigma(\be_i,\be_j) \text{ for all $1\le i \le t-1$} \}.
 \end{eqnarray*}
Note that $C'_x$ has co-dimension $|\calB_x| + |\calB_{x,y}|-1 \le |\calB|-2$.  Furthermore, to show that a pair $(x,y)\in C'_x\times C'_y$ falls in $H$ it suffices to ensure $x_t \oplus y_t = \sigma(\be_t,\be_t)$.  We consider two possibilities: (i) there exists $a_0, a_t\in \F_2$ such that $g(y) = a_0 \oplus a_t y_t$ for all $y\in C'_y$, and otherwise (ii) there exists $b\in \F_2$ such that $g$ is non-constant on $C'_y \cap \{y\in\F^k_2\colon y_t = b\}$. \medskip

(i)  We claim that $f^*$ is constant on
\[ H^* = \{ (x,z) \colon x \in C'_x  \text{ and } z = a_0 \oplus a_t (x_t \oplus \sigma(\be_t,\be_t))  \}. \]
of co-dimension $(|\calB_x| + |\calB_{x,y}|-1)+1 \le |\calB| -1$. Indeed, suppose $f(x,z) \neq f(x',z')$ for some $(x,z),(x',z') \in H^*$.  Selecting $y,
y'\in C'_y$ such that $y_t= (x_t \oplus \sigma(\be_t,\be_t))\oplus a_0$ and $y'_t = (x'_t\oplus \sigma(\be_t,\be_t))\oplus a_0$, we get $(x,y),(x',y') \in H$ such that $f(x,y) \ne f(x',y')$.\medskip

(ii) In this case we claim that $f^*$ is constant on
\[ H^* = \{ (x,z) \colon x\in C'_x \text{ and } x_t = \sigma(\be_t,\be_t)\oplus b \}.  \]
Suppose $f(x,z)\ne f(x',z')$ for some $(x,z), (x',z') \in H^*$.   Selecting $y,y' \in C'_y \cap \{ y\in\F^k_2 \colon y_t = b\}$ satisfying $g(y) = z$ and $g(y') = z'$,  we get $(x,y),(x',y') \in H$ such that $f(x,y) \ne f(x',y')$.

\subsubsection{Case 3: $|\calB_y | = 0$ and $|\calB_{x,y}|\ge 1$.}

First suppose there exists $b_1,\ldots,b_t \in \F_2$ such that $g$ is non-constant on the subcube
$C'_y = \{ y \in \F^k_2 \colon y_i = b_i \text{ for all $1\le i \le t$}\}$.
In this case we claim that $f^*$ is constant on
\[ H^* = \{ (x,z) \colon x \in C_x  \text{ and } x_i = \sigma(\be_i,\be_i)  \oplus b_i \text{ for all $1\le i\le t$}\}. \]
Indeed, suppose there exists $(x,z),(x',z') \in H^*$ such that $f^*(x,z) \ne f^*(x',z')$.  Select $y,y'\in C'_y$ satisfying $g(y) = z$ and $g(y') =z'$, we get $(x,y),(x',y')\in H$ such that $f(x,y)\ne f(x',y')$. Note that although $\codim(H^*)$ may be as large as $|\calB|$, we have that $H^*$ is a subcube in $\F^n_2\times \F_2$ where the $(n+1)$-st coordinate is irrelevant, satisfying the second condition of the theorem statement. \medskip

Finally, if no such subcube $C'_y$ exists then $g$ is a junta over its first $t$ coordinates.  It is straightforward to verify that $t\ge 3$, since every $2$-junta has $\pCmin$ at most $1$.  Consider the sub-function $g':\F^3_2\to\F_2$ where $g'(y_1,y_2,y_3) := g(y_1,y_2,y_3,0,\ldots,0)$.   Applying Lemma \ref{lem:linear-restriction} to $g'$, we get that there exists $\alpha \in \F^3_2\times \zero^{k-3}$ and $a_0,a_1,a_2,a_3,b\in \F_2$ such that
\begin{equation}
 g'(y) = a_0 \oplus a_1y_1 \oplus a_2y_2 \oplus a_3 y_3 \text{ for all $y$ satisfying $\la y,\alpha\ra = b$}.\label{eq:linear-restriction}
 \end{equation}
Exactly two elements of $\{\be_1,\be_2,\be_3\}$ form a linearly independent set with $\alpha$.  We suppose without loss of generality that they are $\be_1$ and $\be_2$, and so $\be_3 =  \alpha+  c_1\, \be_1 + c_2\, \be_2 $ for some $c_1,c_2, \in \F_2$. \medskip

We claim that $f^*$ is constant on the affine subspace $H^*$ comprising $(x,z)\in\F^n_2\times \F_2$ satisfying all of the following conditions:
\begin{enumerate}
\itemsep -.5pt
\item[I.] $x\in C_x$.
\item[II.] $x_ i = \sigma(\be_i,\be_i)$ for all $4 \le i \le t$.
\item[III.] $x_3 = \sigma(\be_3,\be_3) \oplus  b \oplus c_1 (x_1 \oplus \sigma(\be_1,\be_1)) \oplus c_2 (x_2\oplus \sigma(\be_2,\be_2))$.
\item[IV.] $z = a_0 \oplus a_1 (x_1 \oplus \sigma(\be_1,\be_1)) \oplus a_2 (x_2 \oplus \sigma(\be_2,\be_2)) \oplus a_3 (x_3 \oplus \sigma(\be_3,\be_3))$.
\end{enumerate}
Note that $H^*$ has co-dimension $|\calB_x| + (t-3)+1 + 1 = |\calB| - 1$.  Once again, suppose $f^*(x,z) \ne f^*(x',z')$ where $(x,z),(x',z')\in H^*$.  Selecting $y\in \F^3_2 \times \zero^{k-3}$ satisfying
\begin{equation}
 y_1 = x_1\oplus \sigma(\be_1,\be_1), \quad y_2 = x_2 \oplus \sigma(\be_2,\be_2), \quad \la y, \alpha \ra = b,\label{eq:conditions-on-y}
 \end{equation}
 and likewise $y'$ for $x'$,  we claim that $(x,y),(x',y')\in H$ and $f(x,y)\ne f(x',y')$.  \medskip

 We show that $(x,y) \in H$ by checking that $x_i \oplus y_i = \sigma(\be_i,\be_i)$ for all $1\le i \le t$; the argument for $(x',y')$ is identical.  Since $y_i = 0$ for all $i \ge 4$, condition (II) of $H^*$ ensures that $x_i \oplus y_i = \sigma(\be_i,\be_i)$ for these $i$'s.  The conditions (\ref{eq:conditions-on-y}) on $y_1$ and $y_2$ above ensure that $x_i \oplus y_i = \sigma(\be_i,\be_2)$ for $i\in \{ 1,2\}$.  For $i=3$, we use the fact that
 \begin{eqnarray*} y_3 &=& \la y,\be_3 \ra \\
 &=& b \oplus c_1 y_1 \oplus c_2 y_2 \\
 &=& b \oplus c_1 (x_1 \oplus \sigma(\be_1,\be_1)) \oplus c_2 (x_2\oplus \sigma(\be_2,\be_2)),
 \end{eqnarray*}
and see that condition (III) on $H^*$ in fact ensures $x_3 \oplus y_3 = \sigma(\be_3,\be_3)$. \medskip

 To complete the proof it remains to argue that $g(y) = z$; again an identical argument establishes $g(y') = z'$.  This follows by combining (\ref{eq:linear-restriction}) and (\ref{eq:conditions-on-y}) with condition (IV) on $H^*$:
 \begin{eqnarray*}
 g(y) \ = \ g'(y) &=& a_0 \oplus a_1y_1 \oplus a_2y_2 \oplus a_3 y_3 \\
 &=& a_0 \oplus a_1 (x_1 \oplus \sigma(\be_1,\be_1)) \oplus a_2 (x_2 \oplus \sigma(\be_2,\be_2)) \oplus a_3 (x_3 \oplus \sigma(\be_3,\be_3)) \\ &=& z.
 \end{eqnarray*}
 Here the second equality is by (\ref{eq:linear-restriction}), the third by (\ref{eq:conditions-on-y}), and the final by condition (IV) on $H^*$.
 \end{proof}

\orange{ 
 \begin{remark}
 \label{remark:lb-on-PDT} 
It can be checked that in all cases, if $H$ is a linear subspace on which $f$ is constantly $0$, then $H^*$ is linear subspace on which $f^*$ is constantly $0$ as well.  Therefore, a straightforward modification of the Proof of Theorem \ref{thm:Cmin-compose} using Proposition \ref{prop:induction-step} (and Fact \ref{fact:cmin-zero-supermultiplicative}) yields the following incomparable statement: 
\begin{theorem}
\label{thm:lb-on-PDT} 
Let $f:\F^n_2\to \F_2$ be a Boolean function satisfying $f({\bf 0}) = 0$ and $\pCmin[f] \ge 2$. Then 
\[ \pCert[f^{\circ k},{\bf 0}] \ge \frac{\Cert[f,{\bf 0}]^k - \Cert[f,{\bf 0}]}{\Cert[f,{\bf 0}] - 1} + \pCert[f,{\bf 0}] = \Omega(\Cert[f,{\bf 0}]^k). \] 
In particular, we note that the LHS of the inequality is a lower bound on $\DT^{\oplus}[f]$.
\end{theorem}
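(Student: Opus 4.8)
The plan is to transcribe the proof of Theorem~\ref{thm:Cmin-compose} almost verbatim, with two bookkeeping changes: everything is anchored at $\zero$, and throughout we work with \emph{linear} subspaces on which the relevant function is constantly $0$ in place of arbitrary affine subspaces on which it is merely constant. The starting observation is that $f(\zero)=0$ forces $f^{\circ k}(\zero)=0$, and that any affine subspace attaining the minimum in $\pCert[f^{\circ k},\zero]$ contains $\zero$ and is therefore a \emph{linear} subspace on which $f^{\circ k}\equiv 0$. So we may take $H\ni\zero$ linear with $f^{\circ k}|_H\equiv 0$ and $\codim(H)=\pCert[f^{\circ k},\zero]$, and peel off, one at a time, the $n^{k-1}$ copies of $f$ in the decomposition $f^{\circ k}=f^{\circ(k-1)}\circ f$, exactly as in the proof of Theorem~\ref{thm:Cmin-compose}.

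Three ingredients make the transcription work. (i)~The strengthened form of Proposition~\ref{prop:induction-step} recorded in Remark~\ref{remark:lb-on-PDT}: when $H$ is linear with $f|_H\equiv 0$, the subspace $H^*$ it produces is again linear with $f^*|_{H^*}\equiv 0$. This is verified by reinspecting the three cases of the proof of Proposition~\ref{prop:induction-step}: taking $\sigma\equiv 0$ turns the subcubes $C_x,C_y,C'_x,C'_y$ into linear subspaces through $\zero$; every constant value read off $g$ on such a subspace equals $g(\zero)=f(\zero)=0$ (here we use that the peeled copy $g$ is a copy of $f$), which is exactly what kills the affine shift in the definition of $H^*$; and in the junta subcase of Case~3 one applies a strengthened Lemma~\ref{lem:linear-restriction}, namely that $\mathsf{MAJ}$ and $\mathsf{NAE}$ restrict to \emph{homogeneous} $\F_2$-linear forms on a codimension-one \emph{linear} subspace (e.g.\ the diagonal $\{y_1=y_2\}$), so that conditions~(III) and~(IV) defining $H^*$ are homogeneous. (ii)~The anchored version of Proposition~\ref{prop:cmin-junta}: if $H\ni\zero$ is an affine subspace on which $f$ is constant, then $\Cert[f,\zero]$ is at most the number of relevant coordinates of $H$ --- its proof goes through verbatim taking $x=\zero\in H$, since the subcube $C$ built there then contains $\zero$. (iii)~Fact~\ref{fact:cmin-zero-supermultiplicative}, $\Cert[f^{\circ(k-1)},\zero]\ge\Cert[f,\zero]^{k-1}$.

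With these the recurrence is identical to that of Theorem~\ref{thm:Cmin-compose}. Peeling off the $n^{k-1}$ copies of $f$ yields a linear $H^*\ni\zero$ with $f^{\circ(k-1)}|_{H^*}\equiv 0$; a coordinate made irrelevant by the second alternative of Proposition~\ref{prop:induction-step} stays irrelevant by its ``furthermore'' clause, so the number of first-alternative (codimension-dropping) steps is at least the number of relevant coordinates of $H^*$, which by the anchored Proposition~\ref{prop:cmin-junta} is at least $\Cert[f^{\circ(k-1)},\zero]\ge\Cert[f,\zero]^{k-1}$. Since each first-alternative step drops the codimension by at least one,
\begin{equation*}
\pCert[f^{\circ(k-1)},\zero]\le\codim(H^*)\le\codim(H)-\Cert[f,\zero]^{k-1}=\pCert[f^{\circ k},\zero]-\Cert[f,\zero]^{k-1}.
\end{equation*}
Unrolling from $k$ down to the base case $\pCert[f^{\circ 1},\zero]=\pCert[f,\zero]$ gives $\pCert[f^{\circ k},\zero]\ge\pCert[f,\zero]+\sum_{j=1}^{k-1}\Cert[f,\zero]^{j}$, which is the closed form in the statement; it is $\Omega(\Cert[f,\zero]^k)$ because $\Cert[f,\zero]\ge\pCmin[f]\ge 2$. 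Finally, $\pCert[f^{\circ k},\zero]\le\Cert^\oplus[f^{\circ k}]\le\DT^\oplus[f^{\circ k}]$ yields the ``in particular''.

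The only non-mechanical step --- and hence the main obstacle --- is ingredient (i): confirming that linearity and constant-$0$-ness propagate through all three cases of Proposition~\ref{prop:induction-step}, the most delicate point being the homogeneous-linear-restriction strengthening of Lemma~\ref{lem:linear-restriction} for $\mathsf{MAJ}$ and $\mathsf{NAE}$ in the junta subcase of Case~3.
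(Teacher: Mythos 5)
Your overall architecture is exactly the paper's: the paper proves this statement by the observation in Remark~\ref{remark:lb-on-PDT} (re-run the proof of Theorem~\ref{thm:Cmin-compose} with linear subspaces on which the function is constantly $0$, using Fact~\ref{fact:cmin-zero-supermultiplicative}), and your recurrence, the anchored version of Proposition~\ref{prop:cmin-junta}, and the final unrolling are all correct and identical in spirit. The problem is your ingredient~(i), which you rightly call the only non-mechanical step: the justification you give for it does not actually cover the hard subcases. Your argument ``every constant value read off $g$ equals $g(\zero)=0$, which kills the affine shift in the definition of $H^*$'' handles Case~1, Case~2(i), and the constant $a_0$ in the junta subcase of Case~3, but in Case~2(ii) and in the first subcase of Case~3 the shift is \emph{not} a value of $g$: it is the label $b$ (resp.\ $b_1,\ldots,b_t$) of a slice on which $g$ is non-constant. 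Nothing forces this label to be zero. For instance, take $g=\mathsf{MAJ}(y_1,y_2,y_3)$, which satisfies $g(\zero)=0$ and $\pCmin[g]=2$: restricted to $\{y_3=0\}$ it is $y_1\wedge y_2$, so it is constant on the zero slice $\{y_1=0\}$ and non-constant only on $\{y_1=1\}$, forcing $b=1$; the subspace $H^*=\{(x,z)\colon x\in C'_x,\ x_t=1\}$ built in the proof of Proposition~\ref{prop:induction-step} is then an affine subspace not through $\zero$, and your induction anchored at $\zero$ breaks. To repair this one must construct a \emph{different} $H^*$ in these subcases (e.g.\ in Case~2(ii) one can take $\{(x,z)\colon x\in C'_x,\ z=x_t\}$, certified via the constant zero slice when $x_t=0$ and via the non-constant slice when $x_t=1$; in Case~3 one needs a homogeneous linear form in $x_1,\ldots,x_t$ compatible with all the slice-constants of $g$). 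None of this appears in your write-up, and it is precisely the content of the phrase ``it can be checked'' in the paper.

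A second, related gap is your strengthened Lemma~\ref{lem:linear-restriction}. The lemma in the paper only produces an \emph{affine} codimension-one subspace; the reduction to $\mathsf{MAJ}$/$\mathsf{NAE}$ is up to isomorphisms that can translate the diagonal off the origin (e.g.\ $\mathsf{MAJ}(y_1\oplus 1,y_2,y_3)$ is linear only on the affine hyperplane $\{y_1\oplus 1=y_2\}$), and the literal statement you need --- every relevant $g':\F_2^3\to\F_2$ is affine on some codimension-$\le 1$ \emph{linear} subspace --- is false for general three-bit functions: $\mathsf{OR}_3$ vanishes at $\zero$ yet is affine on no linear hyperplane. So in the junta subcase of Case~3 you must either prove the linear version for the restricted class of $g'$ that can actually arise (restrictions of juntas with $\pCmin\ge 2$), or again modify the construction of $H^*$; citing $\mathsf{MAJ}$ and $\mathsf{NAE}$ and the diagonal does not settle it. In short: right approach and right bookkeeping, but the core verification that linearity through $\zero$ survives every case of Proposition~\ref{prop:induction-step} is asserted rather than proved, and as stated it would fail on concrete instances.
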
 
\end{remark}
}

\section{Some consequences of Theorem~\ref{thm:Cmin-compose}}\label{sec:easycorollaries}

Implicit in our proof of Theorem~\ref{thm:Cmin-compose} is the following statement:
\begin{lemma}\label{lem:composefandg}
Let $f:\F_2^n \rightarrow \F_2$ and $g:\F_2^m\rightarrow \F_2$ be Boolean functions.  If $\Cmin[g] \geq 2$, then
\begin{equation*}
\pCmin[f \circ g] \geq \pCmin[f] + \Cmin[f] \geq \Cmin[f].
\end{equation*}
\end{lemma}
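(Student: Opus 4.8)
The plan is to specialize the peeling argument behind Theorem~\ref{thm:Cmin-compose} to a single composition layer, while allowing the outer and inner functions to differ. View $f\circ g:\F_2^{nm}\to\F_2$ on input $(y^{(1)},\dots,y^{(n)})\in(\F_2^m)^n$ as $f(g(y^{(1)}),\dots,g(y^{(n)}))$, so that $f\circ g$ is the outer function $f$ fed by $n$ disjoint copies of $g$. Let $H\subseteq\F_2^{nm}$ be an affine subspace of minimum codimension on which $f\circ g$ is constant, so $\codim(H)=\pCmin[f\circ g]$.

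The core of the argument is to apply Proposition~\ref{prop:induction-step} once for each of the $n$ copies of $g$, peeling them off one at a time. When it is the $i$-th copy's turn, the function currently in hand has the form $f^*(x,g(y^{(i)}))$, where $x$ collects all the variables other than those of the $i$-th copy of $g$ --- namely the original variables of the copies not yet peeled, together with the ``slot'' variables introduced by the earlier applications. Since Proposition~\ref{prop:induction-step} applies to each such copy by the hypothesis on $g$, it replaces $g(y^{(i)})$ by a fresh slot variable and hands back an affine subspace on which the resulting function is still constant. After all $n$ applications we are left with an affine subspace $H^*\subseteq\F_2^n$ on which $f$ itself is constant.

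It remains to do the bookkeeping, exactly as in the proof of Theorem~\ref{thm:Cmin-compose}. Each of the $n$ applications of Proposition~\ref{prop:induction-step} satisfies either its first alternative, where the codimension drops by at least one, or its second alternative, where the codimension is unchanged but the slot variable just introduced is irrelevant; and by the ``Furthermore'' clause, a coordinate that is irrelevant at one stage stays irrelevant forever after. Hence every coordinate that is relevant in the final $H^*$ was introduced by an application of the first type, so the number of applications of the first type is at least the number of relevant coordinates of $H^*$, which by Proposition~\ref{prop:cmin-junta} is at least $\Cmin[f]$. Therefore $\codim(H^*)\le\codim(H)-\Cmin[f]=\pCmin[f\circ g]-\Cmin[f]$. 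On the other hand $\pCmin[f]\le\codim(H^*)$ since $f$ is constant on $H^*$, and combining the two bounds gives $\pCmin[f\circ g]\ge\pCmin[f]+\Cmin[f]$; the remaining inequality $\pCmin[f]+\Cmin[f]\ge\Cmin[f]$ is trivial.

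I do not anticipate any genuine obstacle: Proposition~\ref{prop:induction-step} does all the real work. The one point that must be handled carefully is the bookkeeping in the previous paragraph --- specifically, that a ``second alternative'' application permanently destroys the relevance of its new slot variable, so that the count of coordinates that survive as relevant in $H^*$ is a legitimate lower bound on the number of codimension-decreasing steps. This is precisely the step already present in the proof of Theorem~\ref{thm:Cmin-compose}, which is why the lemma is ``implicit'' there.
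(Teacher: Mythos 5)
Your argument is exactly the one the paper has in mind: the lemma is described as implicit in the proof of Theorem~\ref{thm:Cmin-compose}, and your single-layer peeling via Proposition~\ref{prop:induction-step}, together with the bookkeeping through the ``Furthermore'' clause and Proposition~\ref{prop:cmin-junta}, is that same argument with the outer and inner functions decoupled. The accounting is right: a second-alternative application makes its slot coordinate irrelevant and irrelevance persists thereafter, so the number of first-alternative applications is at least the number of relevant coordinates of $H^*$, hence at least $\Cmin[f]$, giving $\codim(H^*)\le\codim(H)-\Cmin[f]$ and then $\pCmin[f\circ g]\ge\pCmin[f]+\Cmin[f]$.

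One step does need attention, though it traces back to the statement rather than to your reasoning: Proposition~\ref{prop:induction-step} requires $\pCmin[g]\ge 2$, and the stated hypothesis $\Cmin[g]\ge 2$ does not imply this (a two-bit parity has $\Cmin=2$ but $\pCmin=1$), so the claim that the proposition ``applies to each such copy by the hypothesis on $g$'' is not justified as written. Indeed the lemma is false with the hypothesis taken literally: for $g(y_1,y_2)=y_1\oplus y_2$ and $f$ an $n$-bit parity, $f\circ g$ is itself a parity, so $\pCmin[f\circ g]=1<\pCmin[f]+\Cmin[f]$. The intended hypothesis---consistent with how the lemma is invoked in Corollary~\ref{cor:k-2}, where what is verified is $\pCmin[f\circ f]\ge 2$---is $\pCmin[g]\ge 2$, and under that hypothesis your proof is correct and coincides with the paper's.
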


Let us now derive some consequences of this statement.  First, we have the following corollary, which is almost as strong as Theorem~\ref{thm:mainresult}:
\begin{corollary}\label{cor:k-2}
Let $f:\F^n_2\to\F_2$ be a Boolean function which is not a parity.  Then
\begin{equation*}
\pCmin[f^{\circ k}] \geq \Omega(\Cmin[f]^{(k-2)}).
\end{equation*}
\end{corollary}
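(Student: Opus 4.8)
The plan is to deduce Corollary~\ref{cor:k-2} directly from Lemma~\ref{lem:composefandg}, using the supermultiplicativity of $\Cmin$ (Fact~\ref{fact:cmin-supermultiplicative}) to handle the inner composition. The key observation is that $f^{\circ k} = f \circ f^{\circ(k-1)} = f \circ \bigl(f \circ f^{\circ(k-2)}\bigr)$, so if we set $g := f^{\circ(k-1)}$, then $f^{\circ k} = f \circ g$, and we would like to invoke Lemma~\ref{lem:composefandg} with this $f$ and $g$.

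First I would dispose of the hypothesis $\Cmin[g] \geq 2$ needed to apply Lemma~\ref{lem:composefandg}. Since $f$ is not a parity function, $f$ depends on at least two of its input coordinates in a nonlinear way; more to the point, any function that is not a parity must have $\Cmin[f] \geq 1$, and one checks easily that a non-parity function cannot have $\Cmin = 1$ unless it is a dictator or negated dictator — but those are parities — so in fact $\Cmin[f] \geq 2$. (Alternatively: if $\Cmin[f] \le 1$ then fixing a single coordinate makes $f$ constant for some setting, and a short case analysis shows $f$ must then be a literal or a constant, hence a parity, contradicting the hypothesis.) Then by Fact~\ref{fact:cmin-supermultiplicative}, $\Cmin[f^{\circ(k-1)}] \geq \Cmin[f]^{k-1} \geq 2$ for $k \geq 2$, so $g = f^{\circ(k-1)}$ satisfies the hypothesis of Lemma~\ref{lem:composefandg}. (The cases $k = 1, 2$ are trivial since the right-hand side is $O(1)$.)

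Next I would apply Lemma~\ref{lem:composefandg} to the pair $(f, g)$ with $g = f^{\circ(k-1)}$, which gives
\[
\pCmin[f^{\circ k}] = \pCmin[f \circ f^{\circ(k-1)}] \geq \pCmin[f] + \Cmin[f] \geq \Cmin[f].
\]
This is too weak on its own — it only gives a constant lower bound. The fix is to instead peel off two outer layers: write $f^{\circ k} = f \circ \bigl(f \circ f^{\circ(k-2)}\bigr)$. Now I want to lower bound $\pCmin$ of this. Actually the cleaner route is to apply Lemma~\ref{lem:composefandg} once with the ``outer'' part being $f^{\circ(k-1)}$ and the ``inner'' part being $f$: that is, set $F := f^{\circ(k-1)}$ and $G := f$. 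Then $f^{\circ k} = F \circ G$... but wait, $f^{\circ k} = f^{\circ(k-1)} \circ f$ is not literally the recursion as defined ($f^{\circ k} := f \circ f^{\circ(k-1)}$); however, $f^{\circ k}$ is isomorphic to $f^{\circ(k-1)} \circ f$ by associativity of composition, and all the complexity measures here are composition-associativity-invariant. So with $F = f^{\circ(k-1)}$, $G = f$, and noting $\Cmin[G] = \Cmin[f] \geq 2$, Lemma~\ref{lem:composefandg} yields
\[
\pCmin[f^{\circ k}] \;\geq\; \pCmin[f^{\circ(k-1)}] + \Cmin[f^{\circ(k-1)}] \;\geq\; \Cmin[f^{\circ(k-1)}] \;\geq\; \Cmin[f]^{k-1},
\]
using Fact~\ref{fact:cmin-supermultiplicative} in the last step. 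This actually proves the stronger bound $\pCmin[f^{\circ k}] \geq \Cmin[f]^{k-1}$, which is even better than what Corollary~\ref{cor:k-2} claims, so a fortiori $\pCmin[f^{\circ k}] \geq \Omega(\Cmin[f]^{k-2})$.

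The main obstacle — really the only subtle point — is the argument that a non-parity function has $\Cmin \geq 2$, i.e.\ that $\Cmin[f] = 1$ forces $f$ to be a parity (necessarily a literal or its negation). I expect this to be a short but genuine case analysis: $\Cmin[f] = 1$ means there is a coordinate $i$ and a bit $b$ such that $f$ restricted to $x_i = b$ is constant. If $f$ restricted to $x_i = 1-b$ is also constant (with the opposite value, else $f$ is globally constant, a parity), then $f = x_i \oplus c$, a parity. If $f$ restricted to $x_i = 1-b$ is non-constant, then one has to check whether this can happen for a non-parity; in fact it can (e.g.\ $\mathsf{OR}$ has $\Cmin = 1$), so the claim ``$\Cmin[f] = 1 \Rightarrow f$ parity'' is \emph{false}. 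So I should not rely on $\Cmin[f] \geq 2$. The correct fix: if $\Cmin[f] = 1$, then $\Cmin[f]^{k-1} = 1$ and the bound $\pCmin[f^{\circ k}] \geq \Omega(\Cmin[f]^{k-2})$ is vacuous (the right side is $O(1)$), so the corollary holds trivially in that case; and if $\Cmin[f] \geq 2$ the argument above applies. Either way, and since Lemma~\ref{lem:composefandg} only needs $\Cmin$ of the \emph{inner} function to be at least $2$ — which we get from Fact~\ref{fact:cmin-supermultiplicative} once we are in the case $\Cmin[f] \geq 2$ — the proof goes through. So the real content is just: split into $\Cmin[f] \in \{0,1\}$ (trivial) versus $\Cmin[f] \geq 2$ (apply Lemma~\ref{lem:composefandg} to $f^{\circ(k-1)} \circ f$ and use supermultiplicativity).
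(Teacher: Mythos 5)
The crux of your argument is the application of Lemma~\ref{lem:composefandg} to the decomposition $f^{\circ k}=f^{\circ(k-1)}\circ f$ with inner function $f$, checking only that $\Cmin[f]\ge 2$. The hypothesis ``$\Cmin[g]\ge 2$'' in the printed statement of that lemma is a misprint: the lemma is extracted from Proposition~\ref{prop:induction-step}, whose hypothesis on the \emph{inner} function is $\pCmin[g]\ge 2$, and with the literal hypothesis $\Cmin[g]\ge 2$ the lemma is simply false. For instance, take $g(y_1,y_2)=y_1\oplus y_2$ and $f=\mathsf{AND}_2$: then $\Cmin[g]=2$, yet the single parity constraint $y_1\oplus y_2=0$ kills $f\circ g$, so $\pCmin[f\circ g]=1<\pCmin[f]+\Cmin[f]=2$. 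Hence the condition you actually need to verify for your inner function is $\pCmin[f]\ge 2$, and this does \emph{not} follow from ``$f$ is not a parity,'' even together with $\Cmin[f]\ge 2$: the function $f(x)=(x_1\oplus x_2)\vee(x_3\wedge x_4)$ is not a parity, has $\Cmin[f]=2$, but $\pCmin[f]=1$. So your chain of inequalities, which concludes $\pCmin[f^{\circ k}]\ge \Cmin[f]^{k-1}$, breaks down precisely on the class of functions the corollary is meant to handle (non-parity $f$ with $\pCmin[f]=1$; when $\pCmin[f]\ge 2$ Theorem~\ref{thm:Cmin-compose} already gives a stronger bound). A further warning sign is that your claimed bound is, up to constants, Theorem~\ref{thm:mainresult} itself, which the paper says requires a genuinely more involved amortized argument.

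The paper's proof avoids this gap by grouping two layers: after dismissing the case $\Cmin[f]=1$ as trivial, it writes $f^{\circ k}=f^{\circ(k-2)}\circ(f\circ f)$ and invokes Fact~\ref{fact:notaparity} (if $f$ is not a parity and $\Cmin[f]\ge 2$, then $\pCmin[f\circ f]\ge 2$), so that the inner function of the composition really does satisfy the lemma's intended hypothesis; then $\pCmin[f^{\circ k}]\ge \Cmin[f^{\circ(k-2)}]\ge \Cmin[f]^{k-2}$. This two-layer grouping is exactly why the exponent in the corollary is $k-2$ rather than $k-1$. Your argument can be repaired in the same way, but then it yields only the stated corollary, not the stronger $\Cmin[f]^{k-1}$ bound you claim.
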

To prove this, we will need the following fact, which is easy to prove:
\begin{fact}\label{fact:notaparity}
Suppose $f:\F_2^n \rightarrow \F_2$ is not a parity and $\Cmin[f] \geq 2$.  Then $\pCmin[f \circ f] \geq 2$.
\end{fact}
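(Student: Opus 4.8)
The plan is to exhibit an affine subspace of codimension $2$ on which $f\circ f$ is constant; since we also want to rule out that a codimension-$1$ subspace suffices, we must argue $\pCmin[f\circ f]\ge 2$ rather than $\le 2$. Actually, the statement only claims $\pCmin[f\circ f]\ge 2$, so the real content is that a single affine parity constraint cannot make $f\circ f$ constant. Suppose for contradiction that there is a single vector $\gamma\in\F_2^{n^2}$ and a bit $c$ such that $f\circ f$ is constant on the affine hyperplane $\{y:\langle y,\gamma\rangle = c\}$. I would first decompose $\gamma = (\gamma^{(1)},\dots,\gamma^{(n)})$ into its $n$ blocks of $n$ coordinates each. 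The key dichotomy is whether any individual block $\gamma^{(i)}$ is nonzero.

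First, suppose at least two blocks are nonzero, say $\gamma^{(1)}$ and $\gamma^{(2)}$; or more simply, suppose the constraint genuinely involves the outer structure. I would fix all inner blocks $y^{(i)}$ for $i\ge 3$ to constants that leave $\langle y,\gamma\rangle$ free to be adjusted via blocks $1$ and $2$, and argue that we can independently realize both values $0$ and $1$ for the vector $(f(y^{(1)}),\dots,f(y^{(n)}))$ restricted to a pair of coordinates where $f$ is sensitive — using the fact that $\Cmin[f]\ge 2$ means no single coordinate fixing kills $f$, so $f$ restricted to a hyperplane in each block is still nonconstant — and thereby hit two inputs to the outer $f$ that differ. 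Here I would lean on the hypothesis $\Cmin[f]\ge 2$: since $f$ is nonconstant, pick $a,b$ with $f(a)\ne f(b)$; the outer $f$ is also nonconstant, so pick $u,v$ in the $n$-cube with $f(u)\ne f(v)$, and try to build inner-block assignments realizing $u$ and $v$ as the vector of inner outputs while both satisfying $\langle y,\gamma\rangle=c$. The parity constraint couples only one scalar equation across the blocks, so we have ample freedom to satisfy it while independently choosing each $f(y^{(i)})$, unless $\gamma$ forces a block's inner output to be linearly determined.

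The remaining case is when $\gamma$ is concentrated so that the constraint effectively reads $\langle y^{(i)},\gamma^{(i)}\rangle = $ (something) for a single block $i$, i.e. only one block is involved. Then $f\circ f$ constant on this hyperplane means: for that one block, $f(y^{(i)})$ is forced to a constant whenever $\langle y^{(i)},\gamma^{(i)}\rangle = c'$ — but this says $\pCmin[f]\le 1$ via a codimension-one affine constraint inside a single block... no: it says that fixing one parity of $y^{(i)}$ makes the \emph{outer} $f$ constant, which forces $f(y^{(i)})$ constant on that hyperplane (since the other inner outputs are free), hence $\pCmin[f]\le 1$, hence (since $\Cmin[f]\ge\pCmin[f]$ and $f$ is not a parity) we must carefully see why $\pCmin[f]\le 1$ plus "not a parity" is contradictory — it is \emph{not} in general, so this case needs the extra observation that a codimension-one affine subspace on which $f$ is constant, combined with $f$ not being a parity and $\Cmin[f]\ge 2$, still leaves room: we then peel to the recursive structure or simply note that $f\circ f$ also depends on the outer coordinates, which the single-block constraint does not touch, so the outer $f$ being nonconstant gives the contradiction directly.

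The main obstacle I expect is the bookkeeping in the "single block" case: one must carefully verify that a codimension-one constraint living inside one inner copy of $f$ cannot kill $f\circ f$, which ultimately reduces to the outer $f$ being nonconstant (true since $f$ is nonconstant, as $\Cmin[f]\ge 2$ implies $f$ is not the constant function). The cleanest route is probably to avoid case analysis on $\gamma$ altogether: use Lemma~\ref{lem:composefandg} in the contrapositive spirit, or directly observe that if $\pCmin[f\circ f]\le 1$ then restricting along that single parity and applying Lemma~\ref{lem:nice-basis}'s normal form shows the parity must lie entirely within one inner block or entirely "across" blocks, and in either sub-case the fact that both the inner $f$ and the outer $f$ are nonconstant (from $\Cmin[f]\ge 2$) furnishes two inputs to $f\circ f$ inside the hyperplane with different values. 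I would write it using the normal-form decomposition of Lemma~\ref{lem:nice-basis} to keep the argument short.
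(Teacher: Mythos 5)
The paper states this Fact without proof (it is called ``easy''), so your proposal has to stand on its own, and it does not: the two hypotheses are invoked in the wrong places. First, in your ``at least two blocks'' case you assert that $\Cmin[f]\ge 2$ means ``$f$ restricted to a hyperplane in each block is still nonconstant.'' That is the statement $\pCmin[f]\ge 2$, which does \emph{not} follow from the hypotheses: $f=(x_1\oplus x_2)\vee(x_3\wedge x_4)$ is not a parity and has $\Cmin[f]=2$, yet it is constant on the affine hyperplane $x_1\oplus x_2=1$. So a single parity constraint can indeed force a block's inner output to be ``linearly determined'' --- exactly the sub-case you flag with ``unless\dots'' and then never resolve. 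Second, in the single-block case, after correctly observing that the constraint forces $f(y^{(i)})\equiv b$ on the hyperplane, you conclude from ``the outer $f$ is nonconstant.'' That is insufficient: since the other inner copies are merely surjective, the restricted function is the outer $f$ with its $i$-th input \emph{fixed to $b$}, so what you need is that no single-coordinate fixing kills $f$, i.e.\ precisely $\Cmin[f]\ge 2$, not nonconstancy. Your fallbacks do not close these gaps either: Lemma~\ref{lem:composefandg} requires the inner function to have $\pCmin\ge 2$ (that is what Proposition~\ref{prop:induction-step} uses; as printed with ``$\Cmin[g]\ge 2$'' it would be false for parities), which is exactly what this Fact is meant to supply, so that route is circular; and ``both the inner and the outer $f$ are nonconstant'' can never suffice, as the parity and AND examples show.

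The repair is short and needs no one-block/many-block dichotomy. Suppose $f\circ f$ were constant on $\{y:\la y,\gamma\ra=c\}$ with $\gamma\ne\zero$, and pick a block $i^*$ with $\gamma^{(i^*)}\ne\zero$. Consider the image of $x\mapsto\bigl(f(x),\la x,\gamma^{(i^*)}\ra\bigr)\in\F_2^2$: if it had at most two points, $f$ would be constant or a (possibly negated) parity, which is excluded --- this is where ``not a parity'' enters. Hence there is a bit $b$ such that inner value $b$ is attainable in block $i^*$ with \emph{both} parities, while $1-b$ is attainable with at least one. Consequently every outer input $z$ with $z_{i^*}=b$ is realizable inside the hyperplane: realize $z_j$ arbitrarily in the other blocks, then use block $i^*$ to fix up the total parity. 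Since $\Cmin[f]\ge 2$, the outer $f$ is nonconstant on the subcube $\{z:z_{i^*}=b\}$, contradicting constancy of $f\circ f$ on the hyperplane; together with nonconstancy of $f\circ f$ on all of $\F_2^{n^2}$ this gives $\pCmin[f\circ f]\ge 2$.
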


Using this, we can prove Corollary~\ref{cor:k-2}.
\begin{proof}[Proof of Corollary~\ref{cor:k-2}.]
If $\Cmin[f] = 1$, then $\Cmin[f]^{(k-2)} = 1$ as well, and so the theorem trivially holds.  From now on, we will assume that $\Cmin[f] \geq 2$.
We may write $f^{\circ k} = f^{\circ (k-2)}\circ(f\circ f)$.  By Fact~\ref{fact:notaparity}, $\pCmin[f\circ f]\geq 2$. As a result, we can apply Lemma~\ref{lem:composefandg} to show that
\begin{equation*}
\pCmin[f^{\circ k}] = \pCmin[f^{\circ (k-2)}\circ(f\circ f)]
		\geq \Cmin[f^{\circ (k-2)}] 
		\geq \Cmin[f]^{(k-2)}.\qedhere
\end{equation*}
\end{proof}

Though Corollary~\ref{cor:k-2} is sufficient for most (if not all) applications, it is possible to slightly improve on the bound it gives using a more sophisticated argument.  At a high level, if we try using the proof of Theorem~\ref{thm:Cmin-compose} on a function $f$ for which $\pCmin[f] = 1$, then it is possible when applying Proposition~\ref{prop:induction-step} to fall into case~\ref{item:DELcase} without actually reducing the codimension of $H$ by one.  Whenever this happens, the argument essentially makes no progress, and if this always happens then there's nothing we can say about $\pCmin[f^{\circ k}]$.  Fortunately, in the case when $f$ is not a parity function, it is possible to use an amortized-analysis-style argument to show that a constant fraction of the case~\ref{item:DELcase}s \emph{do} result in reducing the codimension of $H$. This allows us to prove our main theorem, improving on Corollary~\ref{cor:k-2}.
\begin{reptheorem}{thm:mainresult}
Let $f:\F^n_2\to\F_2$ be a Boolean function which is not a parity.  Then
\begin{equation*}
\pCmin[f^{\circ k}] \geq \Omega(\Cmin[f]^{(k-1)}).
\end{equation*}
\end{reptheorem}
\noindent As the proof of this is more complicated than the proof of Corollary~\ref{cor:k-2}, we choose to omit it.

Now we have the issue of performing a similar ``bootstrapping'' on Theorem~\ref{thm:lb-on-PDT} to produce Theorem~\ref{thm:pdt-compose}.  Theorem~\ref{thm:lb-on-PDT} follows from Theorem~\ref{thm:Cmin-compose} by Remark~\ref{remark:lb-on-PDT}.  As we are just reusing the proof of Theorem~\ref{thm:Cmin-compose} to prove Theorem~\ref{thm:mainresult}, the same remark holds here.  As a result, we have the following theorem.

\begin{reptheorem}{thm:pdt-compose}
Let $f:\F^n_2\to \F_2$ be a Boolean function satisfying $f({\bf 0}) = 0$.  If $f$ is not a parity function, then
\[ \pCert[f^{\circ k},{\bf 0}] \ge  \Omega(\Cert[f,{\bf 0}]^{(k-1)}). \] 
In particular, we note that the LHS of the inequality is a lower bound on $\DT^{\oplus}[f]$.
\end{reptheorem}

We end with a remark.

\begin{remark}
Theorem~\ref{thm:mainresult} shows that $\pCmin[f^{\circ k}]$ has nontrivial exponential growth, except in the following cases:
\begin{enumerate}
\item $f$ is a parity function.\label{item:parity}
\item $\Cmin[f] = 1$, which has the following two subcases:
\begin{enumerate}
\item There exists a bit $b$ and an input $x_i$ such that $x_i = b \Rightarrow f(x) = b$.\label{item:noexponential}
\item There does \emph{not} exist a bit $b$ and an input $x_i$ such that $x_i = b \Rightarrow f(x) = b$.\label{item:exponential}
\end{enumerate}
\end{enumerate}
It is easy to see that in cases~\ref{item:parity} and~\ref{item:noexponential}, $\pCmin[f^{\circ k}] = 1$ for all $k$.  This is not so clear for case~\ref{item:exponential}, however.  In fact, we can show that in case~\ref{item:exponential}, $\pCmin[f^{\circ k}]$ has nontrivial exponential growth.  To see this, let us assume first that $k$ is even (a similar argument can be made when $k$ is odd), in which case we can write $f^{\circ k} = (f \circ f)^{k/2}$.  Now, because we're in case~\ref{item:exponential}, $\Cmin[f\circ f] \geq 2$.  Thus, we can apply Theorem~\ref{thm:mainresult} to see that $\pCmin[f^{\circ k}] \geq \Cmin[f \circ f]^{(k/2 - 1)}\geq 2^{(k/2-1)}$.  In summary, our results show that for any function $f$, either $\pCmin[f^{\circ k}] = 1$ for trivial reasons (i.e., $f$ falls in case~\ref{item:noexponential} or~\ref{item:exponential}), or $\pCmin[f^{\circ k}]$ has nontrivial exponential growth.
\end{remark}

\section{Lower bounds for specific functions}\label{sec:lowerbounds}

In this section, we show lower bounds on the parity complexity measures of $\sort^{\circ k}$ and $\HI^{\circ k}$.  Together, these prove Corollaries~\ref{cor:sortfunction} and~\ref{cor:HIfunction}.

\subsection{The $\sort$ function}

The $\sort$ function of Ambainis~\cite{Amb06} is defined as follows.
\begin{definition}
 $\sort: \F^4_2 \to \F_2$ outputs 1 if $x_1 \geq x_2 \geq x_3 \geq x_4$ or $x_1 \leq x_2 \leq x_3 \leq x_4$. Otherwise, $\sort(x_1, x_2, x_3, x_4) = 0$.
\end{definition}
Viewing $\sort$ as a function mapping $\{-1, 1\}^4 \rightarrow \{-1, 1\}$,
its Fourier expansion is the degree-$2$ homogeneous polynomial
\begin{equation}
\sort(x_1, x_2, x_3, x_4) = \frac{x_1x_2 + x_2 x_3 + x_3 x_4 - x_4 x_1}{2}.\label{eq:sort-fourier}
\end{equation}
It is easy to check that $\Cmin[\sort] = 3$, and so our Theorem~\ref{thm:mainresult} implies that $\pCmin[\sort^{\circ k}] \geq \Omega(3^k)$.  To compute the sparsity of $\sort^{\circ k}$, we first note that Equation~\ref{eq:sort-fourier} gives the recurrence
\begin{equation*}
\sparsity[\hat{\sort^{\circ k}}] = 4 \cdot \sparsity[\hat{\sort^{\circ (k-1)}}]^2.
\end{equation*}
Solving this gives $\sparsity[\hat{\sort^{\circ k}}] = 4^{2^k - 1}$.  In particular, $\log(\sparsity\big(\hat{\sort^{\circ k}}\big)\big) = O(2^k)$.  Together, these facts imply the first equality in Corollary~\ref{cor:sortfunction}.
\begin{repcorollary}{cor:sortfunction}
$\pCmin[\sort^{\circ k}] = \Omega((\log(\sparsity[\hat{\sort^{\circ k}}])^{\log_2 3}) = \Omega(\log(\lVert \hat{\sort^{\circ k}} \rVert_1)^{\log_2 3})$.
\end{repcorollary}
For the second equality, it is easy to check that every nonzero Fourier coefficient of $\sort^{\circ k}$ has equal weight (up to differences in sign).  Thus, $\lVert \hat{\sort^{\circ k}} \rVert_1 = \sqrt{\sparsity[\hat{\sort^{\circ k}}]}$, which gives the second equality.

\begin{remark}
It is also possible to verify that $\pCmin[\sort] = 2$.  Thus, the more refined bound of Theorem~\ref{thm:Cmin-compose} shows that
\[ \pCmin[\sort^{\circ k}] \ge \frac{3^k+1}{2}, \]
which is  matched \emph{exactly} by a parity decision tree for $\sort^{\circ k}$ of depth $\frac1{2}(3^k+1)$. In other words, our analysis shows that $\PDT[\sort^{\circ k}] = \pCmin[\sort^{\circ k}] = \frac1{2}(3^k+1)$, and in particular, every leaf in the optimal parity decision tree computing $\sort^{\circ k}$ has maximal depth. 
\end{remark}

\subsection{The $\HI$ function}\label{sec:HI}

\begin{definition}
The hemi-icosahedron function $\HI : \F^6_2 \rightarrow \F_2$ of Kushilevitz~\cite{NW95}  is defined as follows:  $\HI(x) = 1$ if the Hamming weight $\| x\|$ of $x$ is 1, 2 or 6, and $\HI(x) = 0$ if $\| x \|$ is 0, 4 or 5. Otherwise (i.e. $\| x \| = 3$), $\HI(x)=1$ if and only if one of the ten facets in the following diagram has all three of its vertices $1$: 
\begin{center}
\includegraphics[width=2in]{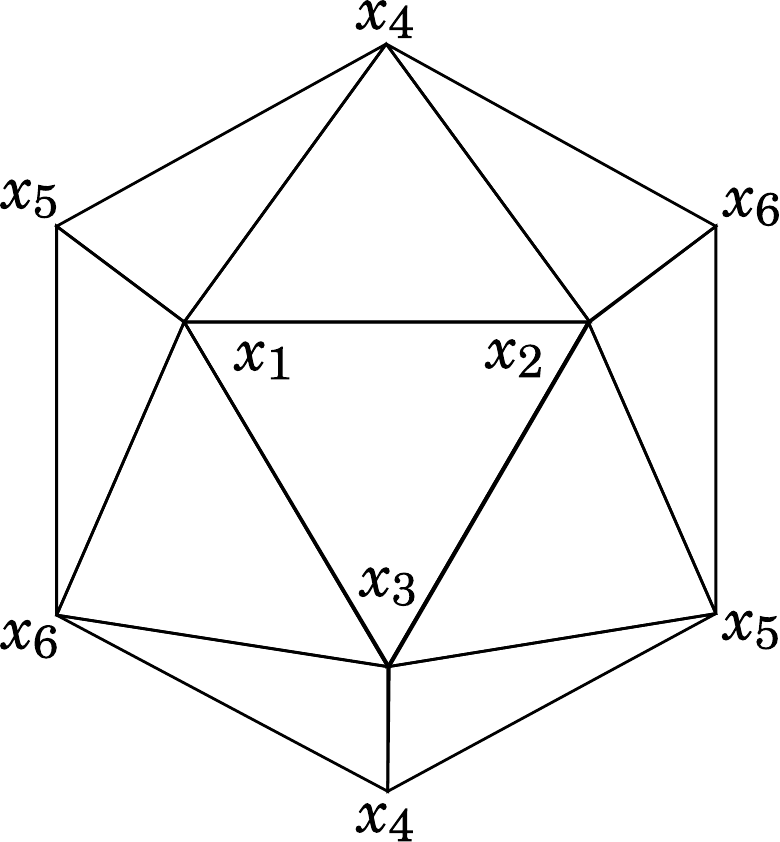}
\end{center}
\end{definition}

Viewing $\HI$ as a function mapping $\{-1, 1\}^6 \rightarrow \{-1, 1\}$,
its Fourier expansion is the degree-$3$ polynomial
\begin{align*}
\HI(x_1, \ldots, x_6)
 =  &
\frac{1}{4}\big(- \sum_i x_i  
+ x_1 x_2 x_3 + x_1 x_2 x_4 + x_1 x_3 x_6 + x_1 x_4 x_5 \\&
	+ x_1 x_5 x_6 + x_2 x_3 x_5 + x_2 x_4 x_ 6 + x_2 x_5 x_6 + x_3 x_4 x_5 + x_3 x_4 x_6
\big).
\end{align*}
Because $\HI({\bf 0}) = 0$ and $\HI(x) = 1$ for every string $x$ of Hamming weight one, $\Cert[\HI,{\bf 0}] = 6$.  As a result, our Theorem~\ref{thm:pdt-compose} implies that $\PDT[\HI^{\circ k}] \geq \Omega(6^k)$.  As for its sparsity, we refer to the following fact.
\begin{fact}\label{fact:HI}
$\sparsity\big(\hat{\HI^{\circ k}}\big) \leq 4^{3^k}$.
\end{fact}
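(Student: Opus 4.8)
The plan is to prove a general composition inequality for Fourier sparsity and then iterate it. The key claim is that for any $f:\F^n_2\to\R$ and $g:\F^m_2\to\R$,
\[ \sparsity[\hat{f\circ g}] \;\le\; \sparsity[\hat f]\cdot \sparsity[\hat g]^{\,\deg(f)}, \]
where $\deg(f)$ denotes the maximum size of a set in $\supp(\hat f)$. To see this, write $f=\sum_{S\in\supp(\hat f)}\hat f(S)\,\chi_S$ and substitute copies of $g$, obtaining $f\circ g = \sum_{S}\hat f(S)\prod_{i\in S}g(y^{(i)})$. Because the blocks $y^{(1)},\dots,y^{(n)}$ involve pairwise disjoint variables, the Fourier support of $\prod_{i\in S}g(y^{(i)})$ is precisely the Cartesian product of the supports of the individual factors, which has size $\sparsity[\hat g]^{|S|}\le \sparsity[\hat g]^{\deg(f)}$. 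The support of the whole sum is contained in the union of these supports, so any collisions (cancellations) among the resulting characters only decrease sparsity; summing over the $\sparsity[\hat f]$ choices of $S$ gives the claim.

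Next I would read off from the stated Fourier expansion of $\HI$ that $\sparsity[\hat{\HI}] = 16$ — the six degree-$1$ monomials together with the ten degree-$3$ monomials, the overall factor of $1/4$ being irrelevant — and that $\deg(\HI)=3$. Writing $s_k := \sparsity[\hat{\HI^{\circ k}}]$ and applying the composition inequality to $\HI^{\circ k}=\HI\circ\HI^{\circ(k-1)}$ yields the recurrence $s_k\le 16\,s_{k-1}^{3}$ with base case $s_1 = 16$.

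Finally, a one-line induction closes the argument: I claim $s_k\le 4^{3^k-1}$, which in particular gives the stated bound $s_k\le 4^{3^k}$. The base case is $s_1=16=4^{2}=4^{3^1-1}$, and if $s_{k-1}\le 4^{3^{k-1}-1}$ then $s_k\le 16\cdot\bigl(4^{3^{k-1}-1}\bigr)^{3}=4^{2}\cdot 4^{3^k-3}=4^{3^k-1}$. There is no genuine obstacle here; the only points requiring care are the bookkeeping in the composition inequality — tracking the disjoint blocks and observing that collisions among characters can only shrink the support — and the observation that the crude bound $16\,s_{k-1}^3$ already beats $4^{3^k}$ (with a spare factor of $4$) even though it ignores the fact that the degree-$1$ part of $\HI$ contributes far fewer monomials than its degree-$3$ part.
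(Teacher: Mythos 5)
Your proof is correct, but it takes a genuinely different route from the paper. The paper does not analyze the composition structure of the Fourier support at all: it invokes granularity, namely that any Boolean function computed by a real polynomial of degree $d$ has every Fourier coefficient an integer multiple of $2^{-d}$, so Parseval gives $1 \ge \sparsity[\hat f]\cdot 4^{-d}$, i.e.\ $\sparsity[\hat f]\le 4^d$; since $\HI$ has degree $3$, the power $\HI^{\circ k}$ has degree $3^k$ and the bound follows in one line. Your argument instead proves a composition inequality $\sparsity[\hat{f\circ g}]\le \sparsity[\hat f]\cdot\sparsity[\hat g]^{\deg(f)}$ by expanding $f$ over its support, using disjointness of the input blocks to identify the support of each product $\prod_{i\in S}g(y^{(i)})$ with a Cartesian product, and noting collisions only help; iterating the recurrence $s_k\le 16\,s_{k-1}^3$ gives $s_k\le 4^{3^k-1}$. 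Both are sound. The paper's route is shorter (given the granularity fact from~\cite{OD13}) and applies to any Boolean function of the given degree, not just iterated compositions; yours is self-contained, avoids granularity and Parseval entirely, extends immediately to compositions $f\circ g$ of distinct functions, and even yields the marginally sharper exponent $3^k-1$. One small point of hygiene: as stated for arbitrary real-valued $f$ and $g$ your inequality implicitly treats $f\circ g$ as substitution of $g$'s values into the multilinear expansion of $f$; this coincides with Boolean composition precisely because $g$ is $\pm1$-valued, which holds in your application, so there is no gap.
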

\begin{proof}
We will first show that any Boolean function $f$ computed by a degree-$d$ polynomial has sparsity at most $4^d$.  This is true because any such polynomial is $2^{-d}$-granular, meaning that every coefficient is an integer multiple of $2^{-d}$ (this fact is exercise~$12$ in chapter~$1$ of~\cite{OD13}).  Finally, by Parseval's equation,
\begin{equation*}
1 = \sum_{\alpha\in \F_2^n} \hat{f}(\alpha)^2
	= \sum_{\alpha: \hatf(\alpha)\neq 0} \hatf(\alpha)^2
	\geq \sparsity[\hatf] \cdot \left(\frac{1}{2^d}\right)^2.
\end{equation*}
Rearranging, $\sparsity[f] \leq 4^d$.

We saw above that $\HI$ is a degree-$3$ polynomial, so $\HI^{\circ k}$ is a degree-$3^k$ polynomial.  This means that $\sparsity\big(\hat{\HI^{\circ k}}\big) \leq 4^{3^k}$.
\end{proof}
In particular, $\log(\sparsity\big(\hat{\HI^{\circ k}}\big)\big) = O(3^k)$. Putting these facts together, we get Corollary~\ref{cor:HIfunction}:

\begin{repcorollary}{cor:HIfunction}
$\DT^\oplus[\HI^{\circ k}] = \Omega((\log(\sparsity[\hat{\HI^{\circ k}}]))^{\log_3 6})$.
\end{repcorollary}

\section{Future directions}

With respect to function composition, $\DT[f]$ is a more nicely behaved complexity measure than $\Cmin[f]$.  This is because $\DT[f^{\circ k}] = \DT[f]^k$ exactly, whereas $\Cmin[f^{o k}]$ is only $\geq \Cmin[f]^k$.  On the other hand, our paper shows a composition theorem for $\pCmin[f]$ but leaves as an open problem proving a similar composition theorem for $\PDT[f]$.  We have shown that $\PDT[f]$ is supermultiplicative in $\Cmin[f]$, but it is trivial to construction functions for which $\Cmin[f]$ is small but $\PDT[f]$ is quite large.  Thus, a composition theorem for $\PDT[f]$ might prove to be useful.

\paragraph{Acknowledgments.}
We would like to thank Rocco Servedio for helpful discussions.
We also would like to thank Noga Ron-Zewi, Amir Shpilka, and Ben Lee Volk for allowing us to reproduce their argument in Appendix~\ref{app:benleesproof}.

\appendix

\section{Communication complexity proof of Corollary~\ref{cor:HIfunction}}\label{app:benleesproof}

In this section, we give the alternate proof of Corollary~\ref{cor:HIfunction} due to Ron-Zewi, Shpilka, and Volk~\cite{RSV13}.  Let $\land:\F_2^2 \rightarrow \F_2$ be the two-bit AND function.  The function family they consider is $h_k := \HI^{\circ k} \circ \land$.  Their lower bound is:

\begin{lemma}
$\PDT[h_k] = \Omega((\log(\sparsity[h_k]))^{\log_3 6})$.
\end{lemma}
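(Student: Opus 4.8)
The plan is to route the lower bound through deterministic two-party communication complexity, exploiting the fact that $\HI$ was introduced by Kushilevitz precisely as an \textsf{AND}-function whose communication complexity is large relative to its rank. First I would observe that a depth-$d$ parity decision tree for $h_k = \HI^{\circ k}\circ\land$ yields a deterministic protocol of cost $2d$ for the problem in which Alice holds $x\in\F_2^{6^k}$, Bob holds $y\in\F_2^{6^k}$, and they must output $\HI^{\circ k}(x\land y)$ (coordinatewise \textsf{AND}). Indeed, feeding the tree the input $z\in\F_2^{2\cdot 6^k}$ whose $i$-th coordinate pair is $(x_i,y_i)$ makes $h_k(z)=\HI^{\circ k}(x\land y)$, and each parity query $\langle z,\alpha\rangle$ splits as $\big(\textstyle\sum_i\alpha^{(i)}_1 x_i\big)\oplus\big(\sum_i\alpha^{(i)}_2 y_i\big)$, so Alice and Bob simulate it by sending one bit each. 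Hence $\PDT[h_k]\ge\tfrac12\, D^{cc}(\HI^{\circ k}\circ\land)$.

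Next I would invoke the Nisan--Wigderson communication lower bound $D^{cc}(\HI^{\circ k}\circ\land)=\Omega(6^k)$; this is the heart of the matter and the step I expect to be the main obstacle, since it is where the special combinatorial structure of the hemi-icosahedron is used. Writing $G_k(x,y):=\HI^{\circ k}(x\land y)$ as the block composition $G_k=\HI\circ(G_{k-1},\dots,G_{k-1})$ of $\HI$ with six disjoint copies of $G_{k-1}$, one tracks a complexity measure of the communication matrix that is multiplicative under this block composition and equals $6$ for the base case $G_1=\HI\circ\land$ --- the base case ultimately resting on the fact that $\HI(\mathbf 0)=0$ while every Hamming-weight-one input has $\HI=1$, i.e.\ $\Cert[\HI,\mathbf 0]=6$, a property that $\log\mathrm{rank}$ conspicuously lacks.

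Then I would bound $\sparsity[h_k]$ from above, exactly as in Fact~\ref{fact:HI}: since $\HI$ has real multilinear degree $3$, the function $\HI^{\circ k}$ has degree $3^k$, and composing with $\land$ (degree $2$) gives $\deg(h_k)\le 2\cdot 3^k$; as a degree-$d$ Boolean function is $2^{-d}$-granular, Parseval forces $\sparsity[h_k]\le 4^{2\cdot 3^k}$, so $\log\sparsity[h_k]=O(3^k)$.

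Finally I would combine the three estimates using the identity $6^k=(3^k)^{\log_3 6}$: this gives $\PDT[h_k]=\Omega(6^k)=\Omega\big((3^k)^{\log_3 6}\big)=\Omega\big((\log\sparsity[h_k])^{\log_3 6}\big)$, which is the claimed bound.
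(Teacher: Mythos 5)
Your skeleton coincides with the paper's proof (the argument of Ron-Zewi, Shpilka, and Volk reproduced in Appendix~\ref{app:benleesproof}): simulate a depth-$d$ parity decision tree for $h_k$ by a deterministic protocol of cost $O(d)$ (each parity splits into an $x$-part and a $y$-part), bound $\log(\sparsity[\hat{h_k}]) = O(3^k)$ via the degree bound $2\cdot 3^k$ and $2^{-\deg}$-granularity plus Parseval, and convert a communication bound of $\Omega(6^k)$ into the exponent $\log_3 6$. The one place you diverge is the justification of $D^{cc}(\HI^{\circ k}\circ\land)=\Omega(6^k)$, and that is also the one place where your sketch would not survive being made self-contained. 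You propose to track ``a complexity measure of the communication matrix that is multiplicative under the block composition $G_k = \HI\circ(G_{k-1},\dots,G_{k-1})$ and equals $6$ for the base case,'' but no such measure is identified, and deterministic communication complexity is not known to compose multiplicatively in this way; a composition theorem of that kind would be a significant result on its own. The paper needs no composition argument on the communication side: since $\HI(\mathbf{0})=0$ and $\HI$ equals $1$ on every weight-one input, the same two properties hold for $\HI^{\circ k}$ (a weight-one input activates exactly one block at every level of the recursion), so any protocol for $h_k(x,y)=\HI^{\circ k}(x\land y)$ solves Set Disjointness on $n=6^k$ coordinates under the promise that the intersection size is $0$ or $1$; the known $\Omega(n)$ lower bound for this promise problem (Kalyanasundaram--Schnitger, Razborov), which holds even for randomized protocols, then yields $\PDT[h_k]=\Omega(6^k)$ directly. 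You did isolate exactly the right base-case property ($\Cert[\HI,\mathbf{0}]=6$, i.e.\ $\mathbf{0}\mapsto 0$ while weight-one inputs map to $1$), so the repair is small: lift that property through the powering and reduce from unique-intersection disjointness, rather than trying to compose communication bounds. Simply citing the $\Omega(6^k)$ bound from the Nisan--Wigderson/Kushilevitz example as a black box would also be acceptable, but the mechanism you sketch for reproving it is not a proof.
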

\begin{proof}
Let us first calculate the sparsity of $h_k$.  As we saw in Section~\ref{sec:HI}, $\HI^{\circ k}$ is a degree-$3^k$ polynomial.  Because $\land$ is a degree-$2$ polynomial, the degree of $h_k$ is $2\cdot 3^k$.  By a similar argument as in Fact~\ref{fact:HI}, this means that $\sparsity[\hat{h_k}]\leq 4^{2\cdot 3^k}$.  In particular, $\log(\sparsity[\hat{h_k}]) \leq O(3^k)$.

Now we will show a lower bound on $\PDT[h_k]$.
The main facts that we will use about $\HI$ are that $\HI({\bf 0}) = 0$ and $\HI(x) = 1$ for every string $x$ of Hamming weight one.  These imply that $\HI^{\circ k}({\bf 0}) = 0$ and $\HI^{\circ k}(x) = 1$ for every string $x$ of Hamming weight one.

Set $n:=6^k$, the number of variables of $\HI^{\circ k}$.
Let us group the input variables of $h_k$ into two strings $x, y \in \F_2^{n}$ and write 
\begin{equation*}
h_k(x, y) = \HI^{\circ k}(x_1 \land y_1, x_2 \land y_2, \ldots, x_n \land y_n).
\end{equation*}
Consider the communication complexity scenario in which Alice is given $x$ and Bob is given $y$, and they are asked to compute $h_k(x, y)$.  If they had a parity decision tree for $h_k$ of depth~$d$, then they could compute $h_k(x, y)$ using $O(d)$ bits of communication. Define the \emph{intersection size} of $x$ and $y$ to be the number of indices $i$ for which $x_i \land y_i = 1$. It is easy to see that computing $h_k$ is equivalent to solving the Set Disjointness problem, at least when $x$ and $y$ are guaranteed to have intersection size~$0$ or~$1$ (this follows because $\HI^{\circ k}({\bf 0}) = 0$ and $\HI^{\circ k}(x) = 1$ for every string $x$ of Hamming weight one).  It is known that even in this special case, Set Disjointness requires $\Omega(n)$ bits of communication~\cite{KS92} (see also~\cite{Raz92}).  As a result, $d = \Omega(n)$, meaning that $\PDT[h_k] = \Omega(6^k)$.  Combining this with the above bound of $\log(\sparsity[\hat{h_k}]) \leq O(3^k)$ yields the lemma.
\end{proof}

\bibliographystyle{alpha}
\bibliography{wright}

\end{document}